\definecolor{toc}{RGB}{13,55,174}	
\newtheorem{theorem}{Theorem}[section]
\newtheorem{lemma}[theorem]{Lemma}
\newtheorem{corollary}[theorem]{Corollary}
\newtheorem{fact}[theorem]{Fact}
\newcommand{\dist}{\mathcal{D}}
\newcommand{\lp}{\left}
\newcommand{\rp}{\right}
\newcommand{\E}[2]{\mathbb{E}_{#1}\lp[#2 \rp]}
\renewcommand{\Pr}[2]{\textbf{Pr}_{#1}\lp[#2 \rp]}
\newcommand{\e}{\varepsilon}
\newcommand{\Xs}{\mathbf{X}}
\newcommand{\Xsv}{\mathbf{X}_v}
\newcommand{\Xssv}{\mathbf{X}^*_v}
\newcommand{\Xsl}{\mathbf{X}^{+}}
\newcommand{\Xslv}{\mathbf{X}^{+}_v}
\newcommand{\Xl}{X^+}
\newcommand{\Qsv}{Q^{*}_v}
\newcommand{\Qsvo}{\overline{Q}^*_v}
\newcommand{\Qov}{Q^{o}_v}
\newcommand{\Qmax}{Q_{\max}}
\newcommand{\deltav}{\delta_v}
\newcommand{\Bv}{\mathbf{B}^{v}}
\newcommand{\pv}{p^v}
\newcommand{\qv}{q^v}
\newcommand{\Vs}{\mathbf{V}}
\newcommand{\Gv}{G_3^v}
\newcommand{\tp}{p^{o}}
\newtheorem{claim}{Claim} 
\newtheorem{definition}{Definition}
\newcommand{\gr}[1]{\textcolor{gray}{#1}}
\newcommand{\boxes}{\mathcal{B}}
\newcommand{\groups}{\mathcal{G}}
\newcommand{\opt}{\text{OPT}}
\newcommand{\alg}{\text{ALG}}
\newcommand{\poly}{\text{poly}}
\newcommand{\polylog}{\text{polylog}}
\newcommand{\supp}{\text{supp}}
\newcommand{\remove}[1]{}
\newcommand{\eps}{\varepsilon}
\date{} 
\title{Prophet Secretary Against the Online Optimal}
\author{
Paul Dütting \\ Google Research \\ {\tt duetting@google.com} \and 
Evangelia Gergatsouli \\ University of Wisconsin-Madison \\ {\tt evagerg@cs.wisc.edu} \and
Rojin Rezvan \\ University of Texas at Austin \\ {\tt rojin@cs.utexas.edu} \and
Yifeng Teng \\ Google Research \\ {\tt yifengt@google.com} \and 
Alexandros Tsigonias-Dimitriadis \\ Universidad de Chile \\ {\tt alexandrostsigdim@gmail.com}
}
\newtheorem*{remark}{Remark}
\begin{document}

\maketitle

We study the prophet secretary problem, a well-studied variant of the classic prophet inequality, where values are drawn from independent known distributions but arrive in uniformly random order. Upon seeing a value at each step, the decision-maker has to either select it and stop or irrevocably discard it. Traditionally, the chosen benchmark is the expected reward of the prophet, who knows all the values in advance and can always select the maximum one.
In this work, we study the prophet secretary problem against a less pessimistic but equally well-motivated benchmark; the \emph{online} optimal. Here, the main goal is to find polynomial-time algorithms that guarantee near-optimal expected reward. As a warm-up, we present a quasi-polynomial time approximation scheme (QPTAS) achieving a $(1-\e)$-approximation in $O(n^{\text{poly} \log n\cdot f(\e)})$ time through careful discretization and non-trivial bundling processes. Using the toolbox developed for the QPTAS, coupled with a novel \emph{frontloading} technique that enables us to reduce the number of decisions we need to make, we are able to remove the dependence on $n$ in the exponent and obtain a polynomial time approximation scheme (PTAS) for this problem.

\setcounter{page}{0}
\thispagestyle{empty}
\newpage

\section{Introduction}
The prophet inequality problem is a central problem in the field of optimal stopping theory. In the classic version of this problem, a gambler faces a sequence of $n$ variables $X_i$, each drawn independently from a known distribution $\dist_i$, which are presented to him one by one.  The gambler's goal is to decide, at each step, whether to keep the value realized from the distribution or continue and irrevocably discard it, in order to maximize the value obtained. The optimal online policy can be found (in poly-time) using backward induction. The classic benchmark for the classic problem is the all-knowing \emph{prophet}, who can see the future values and select the maximum. 
The celebrated result of \cite{KrenSuch1977,KrenSuch1978,Samu1984} shows that using a simple threshold policy, the gambler can obtain an expected value that is at least 1/2 of the expected value of the prophet, and that this is best possible among all online policies. Prophet inequalities are of special interest in algorithmic game theory too, due to their strong connections to sequential posted pricing mechanisms. 
In fact, we know from \cite{HajiKleiSand2007,ChawHartMaleSiva2010, CorrFoncPizaVerd2019} that designing posted-price mechanisms is equivalent to finding (threshold) stopping rules in the related prophet inequality setting. 

Motivated in part by this connection to sequential posted pricing, prophet inequalities have been studied for a broad range of combinatorial settings such as matroids
\citep{KleiWein2012,Alae2011,FeldmanSZ16}, bipartite and non-bipartite matching with vertex or edge arrivals \citep{GravinW19,EzraFGT20,EzraFGT22}, or combinatorial auctions \citep{FeldmanGL15,DuetFeldKessLuci2017,DuttingKL20}.

A fundamental variant of the classic prophet inequality problem is the prophet secretary problem~\citep{EsfaHajiLiagMone2015, CorrFoncHoekOostVred2017, AzarChipKapl2018,CorrSaonZili2019}, where the variables arrive in uniformly random order instead of adversarially. 
In the original formulation of this problem, we compete again with the \emph{prophet}. 
In this case, it was shown that there exists a $0.669$-approximation \citep{CorrSaonZili2019} while no algorithm can achieve an approximation better than $0.7254$~\citep{BubnaChipl2022}. 
Closing this gap is an important open problem in the prophet inequalities community. Another exciting property of the prophet secretary problem is that unlike in the classic prophet inequality problem, the dynamic program for solving the optimal online policy is of exponential size. 

In this work, we study the prophet secretary problem against the \emph{online optimal}.
This benchmark does not assume any prior knowledge of the future; we compete against an algorithm that has the same information as we do at every step, but infinite computation power. This way, we measure the potential loss that arises due to computational limitations, rather than quantifying the loss that's due to the fact that the algorithm has to make decisions online. 
Our work is the first to provide algorithms for prophet secretary against the online optimal. We design both a quasi-polynomial time approximation scheme (QPTAS) and a polynomial time approximation scheme (PTAS), showing that we can obtain an arbitrarily good approximation against the online optimal. Our algorithmic results uncover important structural properties of the prophet secretary problem and near-optimal (threshold-based) stopping decisions.

\subsection{Our Contribution \& Techniques}

We show how to approximate the expected value of the online optimal to within a factor of $1-\e$; this immediately translates into an algorithm that achieves a $1-\e$ approximation.

Our starting point is a simple observation, which states that if one groups ``similar'' variables into $g$ groups and treats variables in each group in the same way, then one can write a dynamic program that tracks the number of variables in each group, with complexity exponential in $g$. This reduces the problem to showing that there is a succinct grouping that obtains a $(1-\e)$-approximation. For the QPTAS it suffices to give a grouping of size $\polylog(n)$, for the PTAS we need to further reduce this to $O(1)$.

\smallskip
\noindent
\textbf{First result: QPTAS.}
We give the intuition underlying the grouping of size $\polylog(n)$ for a special case that captures most of the solution idea. The special case is that each random variable has binary support as follows:  variable $X_i$ is either $v_i$ or zero, with probabilities $p_i$ and $1-p_i$. 

First, we argue that we can scale the values appropriately so that OPT falls into some small constant interval $[c,1]$. Afterward, we show that we don't lose more than $O(\e)$ if we ignore variables with low value ($v_i \leq \e$) or low expected value ($v_i p_i \leq \e/n$). 
For the remaining variables whose value is very high ($v_i \geq 1/\e^2$), we show how to compress them by adjusting their values so that they all have the same value and their probabilities fall in a $O(\poly~n)$ range. We then discretize the variables with small values (from $\e$ to $1/\e^2$) and their respective probabilities which range from $\e^3/n$ to $1$, to powers of $(1+\e)$. We also round the probabilities (which are also in a $\poly(n,\e)$ range) of the variables with the same high value to powers of $(1+\e)$. 

This construction readily generalizes to the constant-size support case (by treating the O(1) high realizations as we did before for the single one).
An additional ingredient that is needed for the case where the support is not necessarily of constant size, is an argument for collapsing all high realizations (those above OPT) of a variable into a single point. Together with a discretization of the low realizations of a variable (from $\e$ to 1), we are back to the constant-size support case.

\begin{theorem}[QPTAS, informal]
There exists a $(1-\e)$-approximation algorithm for computing the optimal reward of the prophet secretary problem in time 
$n^{\polylog~n \cdot f(\e)}$.
\end{theorem}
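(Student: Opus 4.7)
The plan is to reduce the theorem to constructing a grouping of the $n$ random variables into $g = \polylog(n) \cdot f(\e)$ classes such that restricting attention to policies that treat all variables in the same class identically loses at most a $(1-\e)$ factor in expected reward. Given such a grouping, the backward-induction dynamic program over the profile of how many variables from each class remain has state space $n^{O(g)} = n^{\polylog(n) \cdot f(\e)}$, which matches the target running time. The structural task is therefore to design the grouping and bound its approximation loss.

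The first step is to obtain a coarse handle on the value of $\opt$: by guess-and-rescale (trying $O(\log n / \e)$ geometric candidates), assume $\opt \in [c, 1]$ for some absolute constant $c > 0$. This normalization lets us discard variables whose maximum value is below $\e$ together with a union-bound style removal of tail mass from variables with $v_i p_i \le \e / n$, losing only an additive $O(\e)$ fraction of $\opt$ in total. To handle variables with a wide support, I would next collapse every realization that exceeds $\opt$ into a single ``jackpot'' realization per variable, preserving the mass above $\opt$ and the conditional expectation; intuitively, once a variable exceeds any sensible stopping threshold the policy will take it, so the precise above-$\opt$ distribution affects the reward only additively. After the collapse, every variable effectively has support inside $[\e, \opt]$ plus a single high atom, i.e.\ constant-size support up to a further discretization.

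The second step is discretization. I would round each surviving value to the nearest power of $(1+\e)$ inside $[\e, \opt]$, giving $O(\log_{1+\e}(1/\e))$ value buckets, and each probability to a power of $(1+\e)$ inside the relevant range $[\e^{O(1)}/n, 1]$, giving $O(\log_{1+\e} n)$ probability buckets. For the high jackpot atoms (whose values may exceed $\opt$ by a lot), I would compress them so that they all share a common value while their probabilities live in a $\poly(n, 1/\e)$ range and are rounded to powers of $(1+\e)$. The grouping then identifies variables agreeing on all of these discretized parameters; a per-variable support of constant size combined with $\polylog(n)$ buckets per coordinate yields the desired $g = \polylog(n) \cdot f(\e)$ classes.

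The main obstacle I anticipate is the error analysis: showing that the optimal online policy on the discretized and grouped instance achieves expected reward at least $(1-\e) \cdot \opt$. Unlike offline approximation, we cannot substitute values inside a closed-form optimum; we must track how changes in marginals propagate through an adaptive, history-dependent stopping rule under a random arrival order. My plan is to couple the two instances arrival-step by arrival-step and argue that any policy on one instance translates into a nearby policy on the other with only a $(1+O(\e))$ multiplicative distortion in the per-step continuation value, accumulating the error via a hybrid argument over the $n$ steps (which is exactly why the probability discretization must reach down to $\e^{O(1)}/n$). The collapse of the above-$\opt$ tails is the most delicate piece and will likely require a careful conditioning argument that separates the jackpot event from the rest of the distribution.
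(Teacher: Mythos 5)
Your proposal is correct and follows essentially the same route as the paper: reduce to a $\polylog(n)$-size grouping via the counting-DP, normalize $\opt$ into a constant interval, discard values with $v_i\leq\e$ or $v_ip_i\leq\e/n$, collapse the above-$\opt$ tail of each variable into a single expectation-preserving atom, compress the high atoms to a common value with probabilities in a $\poly(n,1/\e)$ range, round values and probabilities to powers of $(1+\e)$, and control the loss through value/probability perturbation lemmas proved by induction over the adaptive stopping rule. The only (immaterial) deviations are your guess-and-rescale normalization in place of the paper's use of the known $0.669$-competitive bound to pin $\opt\in[0.669,1]$, and your additive bound for the tail collapse where the paper shows the bundling preserves the DP value exactly because optimal thresholds never exceed $\opt$.
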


\smallskip
\noindent
\textbf{Second result: PTAS.}
Similar to the case of the QPTAS, the case where all variables are binary is descriptive of the idea behind the PTAS for the general case. Starting off from the discretizations done in the QPTAS, what we need to take care of are variables
with realization probability that are each individually small ($<poly(\epsilon))$, but that still cannot be ignored as there may be enough of them to contribute considerably to optimal reward. This step is crucial to reduce the number of possible probabilities to $\poly(\frac{1}{\eps})$.

In order to do so, we propose the novel technique of ``frontloading". The idea is as follows. Fix a support value $v$, and consider the variables of interest (those with neither high nor low probabilities) with that support. If it is the case that $k$ many of these variables have a total realization probability that is not very high, then we claim that imagining these $k$ variables as a single box, with the total realization probability equal to that of the $k$ boxes, and as an "outside option" always available through the interval where these variables arrive, does not affect the reward much. This makes us capable of reducing the number of different probabilities we need to track for the dynamic program that finds the optimal. 

In order to generalize this to variables with support size more than one, we show that we can imagine a variable with multiple support values, as multiple binary variables (each of them having one of the support values with its corresponding probability that is not too high, and the remaining probability on value $0$) and an additional variable with high probability for each support value that arrives after each other. This brings us back to the binary case, as there are not many types of variables with a larger probability on each support value and they can be treated differently. 

\begin{theorem}[PTAS, informal]
There exists a $(1-\e)$-approximation algorithm for computing the optimal reward of the prophet secretary problem in time 
$n^{f(\e)}$.
\end{theorem}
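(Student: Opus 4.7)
The plan is to start from the grouping-based DP whose running time is $n^{O(g)}$ when the $n$ variables are partitioned into $g$ equivalence classes; this is the engine already used for the QPTAS. The QPTAS achieves $g = \polylog(n)$ because, after scaling so that $\opt\in[c,1]$, ignoring negligible variables, and rounding surviving values and probabilities to powers of $(1+\e)$, the number of distinct discretized values is $f(\e)$, but the probabilities still range over a $\poly(n)$ interval, contributing a $\log_{1+\e}\poly(n)=\Theta(\log n)$ factor. To obtain a PTAS I would show that the probability axis can be collapsed to only $f(\e)$ buckets at the cost of an additional $O(\e)\cdot\opt$ in expected reward.

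First I would handle the binary case, where each $X_i$ is $v_i$ or zero. For each surviving discretized value $v$, I would split the variables of value $v$ into \emph{heavy} ones whose rounded probability is at least $\poly(\e)$ and \emph{light} ones whose probability is smaller. The heavy variables already yield only $f(\e)$ distinct types per value, and the DP tracks them with constant-in-$n$ state. For the light variables, the key new ingredient is the \textbf{frontloading lemma}: replace the $k$ light variables of value $v$ by a single virtual box of value $v$ and realization probability $\min(1,\sum_j p_j)$, made available as a persistent outside option throughout the sub-interval of arrival times originally occupied by these $k$ boxes. The claim I would prove is that this substitution changes the optimal online reward by at most $O(\e)$ times the reward those variables contribute: because each light probability is $\poly(\e)$-small, the chance that two or more of them realize within the sub-interval is second-order, so the aggregate effect of the $k$ boxes is dominated by their first hit, which the persistent virtual box emulates almost exactly.

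Once frontloading is established, the DP needs to track only $f(\e)$ types per value, and combined with the $f(\e)$ surviving values this yields $g=f(\e)$ groups overall and a running time of $n^{f(\e)}$. To extend from binary to general support, I would follow the sketch in the excerpt and decompose each variable $X_i$ with support $\{v_1,\ldots,v_s\}$ into one binary surrogate per support value, augmented by an auxiliary box for support values whose marginals are large and therefore cannot be represented by a frontloaded tail. The QPTAS preprocessing that collapses realizations above $\opt$ and discretizes those below $\opt$ already guarantees $s=f(\e)$, so the decomposition inflates the number of variables only by a constant factor, and each binary surrogate is then handled by the binary PTAS above.

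The main obstacle is the error analysis of the frontloading step, where two issues need to be controlled at once. First, replacing $k$ sequentially arriving boxes by a single persistent box changes the DP state (the remaining count of that type), so a naive pointwise coupling does not apply. Second, the outside-option semantics require that accepting the virtual box at time $t$ in the new instance be compared against the distribution over remaining light-box arrivals after $t$ in the original instance. I expect the cleanest route is to introduce an intermediate instance in which the $k$ light variables still arrive at their random times but are drawn jointly from a process whose aggregate realization mass equals that of the virtual box, and then bound the gap to the original and to the frontloaded instance separately using Poisson-type concentration on the small-mean aggregate together with the fact that the optimal online policy in each instance can be taken to be threshold-based within the sub-interval.
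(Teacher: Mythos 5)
There is a genuine gap, and it sits exactly where you flagged the "main obstacle": your frontloading lemma, as stated, is false. You replace \emph{all} light variables of a value $v$ by a single virtual box, available as a persistent outside option over the whole stretch of arrival times they occupy, and justify the $O(\e)$ error by saying that since each light probability is $\poly(\e)$-small, the chance of two or more realizations is second-order. But the number of light variables can be $\Theta(n)$, so their aggregate realization mass can be a constant (or up to $1/\e$, or larger for $v_{\max}$); individual smallness does not make the aggregate multi-hit probability second order, and — worse — holding $v$ as an outside option for the entire horizon is a genuinely stronger game. One can cook up instances (e.g., many tiny boxes of value $1$ with aggregate probability $1/2$, plus one box of value $2$ with probability $1/2$) where your frontloaded instance has optimal reward larger than the true optimum by a constant factor, not $O(\e)$ of the variables' contribution. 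The paper's fix is precisely the ingredient your sketch lacks: for each support value $v$ the arrival sequence is partitioned into $\e^{-4}$ blocks $\Bv_1,\ldots,\Bv_{\e^{-4}}$ (plus an initial block $\Bv_0$ that is discarded, after capping the total mass at $\Qsv=\min(\e^{-1},\Qsvo)$), so that each block carries only about $\Qov=\e^4\Qsv\le\e^3$ aggregate light mass; frontloading is done \emph{within each block}, and the outside option is cleared at the block boundary. A concentration step (game $G_2$, sampling without replacement plus Chebyshev) guarantees every block's mass is within $(1\pm\e)$ of $\Qov$ except on an $O(\e^2)$-probability failure event, and the loss of the per-block coupling is controlled by a decision-tree argument in which the "hold the option at the end of a block" nodes form an antichain, giving total loss $2\Qov v=O(\e^3)\opt$ per value. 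Finally, a last step (game $G_5$) fixes the option's realization probability to $1-e^{-(1-\e)\Qov}$ so the DP need not track the realized aggregate.

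Your remaining architecture does match the paper: heavy variables reduce to $\poly(1/\e)$ types handled by the counting DP of Claim~\ref{cl:DP_group_size}, and the general case is handled by decomposing each variable into binary surrogates for its small-probability support values plus a truncated "special" part (the paper's game $G_3^*$), with the QPTAS bundling ensuring support size $\tilde O(1/\e)$. So the proposal is recoverable, but only after replacing the whole-interval virtual box by the block-wise frontloading with the accompanying concentration, block-$\Bv_0$/capping, and antichain loss arguments; without these the central error bound does not hold.
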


The above theorem shows that the optimal reward $\opt$ of any prophet secretary instance can be efficiently approximated. Observe that given an efficient oracle that always returns a value between $(1-\eps)\opt$ and $\opt$, we can use it to construct an efficient algorithm for the prophet secretary problem with an expected reward at least $(1-\eps)\opt$. In particular, at each step, the algorithm queries the oracle on the remaining instance to obtain a threshold to make the decision on the current variable. See Lemma~\ref{lem:error_propagation} for more discussion. Then we not only have a PTAS for computing the approximately optimal reward of the prophet secretary problem, but also obtain an efficient strategy with almost optimal reward.

\begin{corollary}[PTAS Algorithm, informal]
There exists a $(1-\e)$-approximation strategy for the prophet secretary problem against the online optimal that runs in time 
$n^{f(\e)}$.
\end{corollary}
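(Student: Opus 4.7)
The plan is to turn the PTAS of the previous theorem into an online strategy, paying only a polynomial factor in running time and no further loss in approximation ratio. Concretely, at the arrival of variable $X_t$ with remaining set of unseen variables $S \supseteq \{t\}$, the strategy will run the PTAS on the sub-instance $S \setminus \{t\}$ to obtain an estimate $\widetilde V(S\setminus\{t\})$ satisfying $(1-\eps)V^*(S\setminus\{t\}) \le \widetilde V(S\setminus\{t\}) \le V^*(S\setminus\{t\})$, where $V^*$ denotes the online optimum on the argument instance; it then accepts $X_t$ iff $X_t \ge \widetilde V(S\setminus\{t\})$. There are at most $n$ oracle calls, each taking $n^{f(\eps)}$ time, so the total running time is $n^{f(\eps)+1}$, still of the claimed form.

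The substantive step --- essentially the content of Lemma~\ref{lem:error_propagation} --- is the inductive claim that the strategy's expected reward $\alg(S)$ satisfies $\alg(S) \ge (1-\eps) V^*(S)$ for every $S$. I would prove this by induction on $|S|$, with trivial base case $S=\emptyset$. For the inductive step, fix $S$, and for each potential next arrival $t \in S$ set $\tau := V^*(S\setminus\{t\})$ and $\widetilde\tau := \widetilde V(S\setminus\{t\})$, so that $(1-\eps)\tau \le \widetilde\tau \le \tau$. Applying the inductive hypothesis to the continuation value $\alg(S\setminus\{t\}) \ge (1-\eps)\tau$ gives
\[
\alg(S) \ge \E{t}{\E{X_t}{X_t\,\ind{X_t \ge \widetilde\tau} + (1-\eps)\tau\,\ind{X_t < \widetilde\tau}}},
\]
while $V^*(S) = \E{t}{\E{X_t}{X_t\,\ind{X_t \ge \tau} + \tau\,\ind{X_t < \tau}}}$. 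Subtracting $(1-\eps)V^*(S)$ and splitting the algorithm's acceptance event into $\{X_t \ge \tau\}$ and the \emph{swap zone} $\{\widetilde\tau \le X_t < \tau\}$ yields
\[
\alg(S) - (1-\eps)V^*(S) \ge \E{t}{\eps\,\E{X_t}{X_t\,\ind{X_t\ge\tau}} + (\widetilde\tau - (1-\eps)\tau)\,\Pr{X_t}{\widetilde\tau \le X_t < \tau}},
\]
and both summands are nonnegative since $\widetilde\tau \ge (1-\eps)\tau$, closing the induction.

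The main obstacle I anticipate is precisely ensuring that the $(1-\eps)$ oracle error does not compound across the $n$ rounds; a naive argument would lose a factor $(1-\eps)^n$. The swap-zone computation is what rules this out: on $[\widetilde\tau,\tau)$ the algorithm accepts where the true optimum rejects, but the value it collects, $X_t \ge \widetilde\tau \ge (1-\eps)\tau$, already matches the continuation contribution $(1-\eps)\tau$ guaranteed by the inductive hypothesis, so the slack $\eps\,\E{}{X_t\,\ind{X_t\ge\tau}}$ from the present layer is enough to absorb both the present oracle error and all previously accumulated error. Minor technical points --- that the PTAS runs on arbitrary sub-instances (clear, since any sub-instance is itself a prophet secretary instance), and that calling the oracle with the same parameter $\eps$ suffices (clear, since the induction does not ``consume'' extra $\eps$) --- round out the argument.
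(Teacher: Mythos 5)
Your proposal is correct and follows essentially the same route as the paper: the corollary is obtained by plugging the PTAS reward oracle into the threshold rule of Lemma~\ref{lem:error_propagation}, whose proof in the paper is exactly your induction with the three-way split into reject, swap zone $[\widetilde\tau,\tau)$, and accept regions. Your observation that the value collected in the swap zone already matches the $(1-\eps)\tau$ continuation guarantee is precisely the mechanism by which the paper's argument avoids compounding the error over $n$ rounds.
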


\subsection{Related Work}
\smallskip
\noindent
\textbf{Classic prophet inequality and prophet secretary.} There is a vast literature on the topic of prophet inequalities, starting with the classic, fixed order prophet inequality studied first in \citet{KrenSuch1977,KrenSuch1978}, who obtained a tight 0.5 approximation for the problem. Another remarkable result came a bit later from \citet{Samu1984}, who showed that the tight bound can be achieved by using a single-threshold algorithm (a different well-known single-threshold algorithm was also shown in \citet{KleiWein2012}). For a comprehensive treatment of the subject, we refer the reader to the earlier survey by \citet{HillKert1992} and the more recent ones by \citet{Luci2017, CorrFoncHoekOostVred2018}. 

In this paper, we focus on one of the most well-established variants, which is a natural combination of the secretary problem and the classic prophet inequality, introduced by \citet{EsfaHajiLiagMone2015} as the prophet secretary problem. Their main result was to show that a non-adaptive multi-threshold algorithm achieves a $\lp( 1 - \frac{1}{e} \rp)$-approximation to the prophet benchmark. A bit later, \citet{EhsaHajiKessSing2018} and \citet{CorrFoncHoekOostVred2017} arrived to the same approximation guarantee by using different threshold-based algorithms. The first algorithm breaking this $\lp( 1 - \frac{1}{e} \rp)$ barrier by a tiny fraction came from \citet{AzarChipKapl2018} and was subsequently improved by \citet{CorrSaonZili2019} to $0.669$, which is the currently best-known bound. Obtaining a tight answer for this problem remains a very interesting open question, since the best-known hardness result is $0.7254$, as shown very recently in \citet{BubnaChipl2022}.

The exploration of the optimal online algorithm as a benchmark has been pioneered in \citet{NiazSabeSham2018} and \citet{PapadPollSabeWajc2021, BravDerakMol2022} as well as \citet{AnarNiazSabSham2019}.  \citet{NiazSabeSham2018} study the power of single-threshold algorithms when competing with the online optimal. \citet{PapadPollSabeWajc2021}, \citet{BravDerakMol2022}, and \citet{NaorSrinWajc2023} consider this benchmark for the online stochastic maximum-weight matching problem under vertex arrivals,\footnote{Note that the online rounding scheme for fractional bipartite matchings of \cite{SabWajc2021} also improved and generalized the result of \cite{PapadPollSabeWajc2021}.} and \citet{AnarNiazSabSham2019} for an online pricing problem with laminar matroid constraints.

\smallskip
\noindent
\textbf{The free-order model.} Apart from the fixed and the random arrival order models, another setting that has received a lot of attention is the one where the decision-maker can freely choose the order. For the optimal ordering problem, where the goal of the decision-maker is to compute (or approximate) the optimal order of inspection using efficient algorithms, \citet{AgraSethZhan2020} showed that it is NP-hard and provided an FPTAS for distributions of support size $3$. \citet{Fu2018} proved, among other results, that for distributions with constant support size, the problem admits a PTAS. Earlier, \citet{ChakEvenGuhMansMuth2010} gave a PTAS for general distributions for the problem of revenue maximization in sequential posted-price mechanisms; combined with the result of \citet{CorrFoncPizaVerd2019}, they automatically obtain a PTAS for the optimal ordering problem as well. Finally, \citet{LiuLemeSchneiPalBala2021} designed an EPTAS using a new decomposition technique for random variables. An EPTAS was also independently obtained by \citet{SegeSing2021}. A remaining open question is whether there can be an FPTAS.

When competing with the prophet benchmark, called the order selection problem, until very recently the best-known bound was $0.669$ coming from the work of \citet{CorrSaonZili2019} for the random order model. Very recently, \citet{PengTang2022} was the first to beat this bound and obtain a (much-improved) 0.7251-competitive algorithm, which was further improved to 0.7258 by \citet{BubnaChipl2022} shortly thereafter. Since no separation is known between the i.i.d. prophet inequality and the order selection problem, the upper bound of $0.745$ follows from the work of \citet{HillKertz1982}. Showing or disproving that the i.i.d.~case is the worst-case for free-order remains an intriguing open problem.

\smallskip
\noindent
\textbf{The Pandora's Box problem.} A similar setting, where there is additional exploration cost for the variables but the decision-maker does not have to make immediate and irrevocable decisions, is the Pandora's box problem, first defined more than four decades ago by~\citet{Weit1979}. A recent line of work extended the original model in various directions (see, e.g., ~\cite{GuhMunSar2008, Dova2018,BeyhKlei2019,BoodFuscLazoLeon2020,ChawGergTengTzamZhan2020,ChawGergMcmaTzam2021}), where in all cases the goal is either to maximize or minimize the objective function against the online optimal. 

In particular, the setting of non-obligatory inspection has been thoroughly explored from an algorithmic point of view, and the question of whether there exists any polynomial-time algorithm to compute the optimal policy remained open until very recently. \citet{GuhMunSar2008} provided the first constant-factor approximation to the problem. Note that their paper deals with information acquisition in multichannel wireless networks, but their model turns out to correspond exactly to the Pandora's box problem with non-obligatory inspection. \citet{Dova2018} introduced the problem in the economics literature some years later. \citet{BeyhKlei2019} revived it in the EconCS community; their main result was to introduce a simple policy that also guarantees a constant-factor approximation (albeit worse than the one of \citet{GuhMunSar2008}). Only very recently \citet{FuLiLiu2022} showed that the problem is NP-hard and provided a PTAS. In a concurrent and independent work, \citet{BeyhCai2022} also obtained a PTAS for the problem. 

Finally, in the interesting problem termed Pandora's box with commitment, in which the decision-maker now has to decide immediately and irrevocably whether to stop or continue, \citet{Fu2018} first gave a PTAS and later \citet{SegeSing2021} developed an EPTAS for this problem, as part of a general framework that works in different classes of stochastic combinatorial optimization problems.

\paragraph{Organization. }In Section~\ref{sec:prelims}, we define the optimal objective and state a dynamic program to compute it, together with some structural lemmas that compare the optimal reward of instances with similar variables. In Section ~\ref{sec:qptas}, we first describe the QPTAS strategy for a special case of two-point distributions and later extend it to the general case. Finally, in Section~\ref{sec:ptas}, we describe the PTAS strategy again for two-point distributions. In the interest of space, the proofs for the general case of PTAS have been moved to the Appendix. Omitted proofs of Sections \ref{sec:prelims}, \ref{sec:qptas}, and \ref{sec:ptas} can be found in Appendices \ref{apn:prelims}, \ref{apn:qptas}, and \ref{apn:ptas}, respectively.

\section{Preliminaries}\label{sec:prelims}

Let $X_1, \ldots, X_n$ be $n$ independent non-negative random variables drawn from known distributions $\dist_1, \ldots, \dist_n$. To simplify the presentation, we focus on the case of discrete distributions. We remark that all of our results hold also for continuous distributions; we discuss this further at the corresponding technical sections.
We denote the product distribution by $\dist = \dist_1 \times \ldots \dist_n$ and the support of each variable $X_i$ by $\supp(X_i)$. We are sequentially presented with the variables in a uniformly random order and need to decide at every step to either keep the value realized from the distribution, or irrevocably discard it. The goal of the algorithm is to maximize the expected value chosen. 

As a benchmark we adopt the \emph{online optimal}, i.e., the expected value achievable by an algorithm that has infinite computational resources but no knowledge of the future. 

The online optimal can be expressed as a dynamic program, as shown in equation~\eqref{eq:DP}. We denote by $\opt(X| \boxes)$ the recursive solution when the current variable has value $X$ and the set of variables remaining is $\boxes$.
\begin{equation}\label{eq:DP}
\opt( X| \boxes)  = \begin{cases}
\E{\dist}{X}& \text{if } \boxes = \emptyset \\
\E{\dist}{\max\lp( X , \frac{1}{|\boxes|}\sum_{i\in\boxes} \opt( X_{i}| \boxes\setminus \{i\} ) \rp)}& \text{else.} \end{cases}
\end{equation}

Observe that the optimal strategy is a series of (adaptive and decreasing) thresholds; in every step we only continue if the current value obtained is less than the expected optimal of the subproblem, with one less box. We denote the threshold of $|\boxes| = n$ variables by 
\[ 
\theta_n = \frac{1}{|\boxes|}\sum_{i\in\boxes} \opt( X_{i}| \boxes\setminus \{i\} ).
\]
 We also denote by $X_{-i}$ all variables $X_1, \ldots, X_n$ except the variable $X_i$ and by $Y_{k:n}$ (resp. $X_{k:n})$ for all $k \in [n]$ the remaining instance starting at index $k$. For the instance we want to solve, i.e. $Y_{1:n}$ (resp. $X_{1:n}$), we drop the subscript and simply write $Y$ (resp. $X$). Similarly, $\opt_{\ell:k}$ is the optimal solution for the sub-instance $X_\ell, \ldots X_k$. 
 
Our ultimate goal is to find a poly-time algorithm that achieves a $1-\e$ approximation to the online optimal.  Following, we give the definitions of QPTAS and PTAS in the context of maximization problems.

 \begin{definition}[QPTAS]
For every fixed $\e>0$, there is a $(1-\e)$-approximation algorithm that runs in time $O\lp(n^{\polylog\ n} \rp)$.
 \end{definition}

 \begin{definition}[PTAS]
 For every fixed $\e>0$, there is a $(1-\e)$-approximation algorithm that runs in time polynomial in $n$.   
 \end{definition}

\subsection{From reward approximation oracle to efficient strategy}
Lemma~\ref{lem:error_propagation} shows that if we can calculate thresholds that are always within $(1-\e)$ of those that the DP calculates, then the error does not propagate and we will lose in expectation at most $(1-\e)$. The proof appears in Section~\ref{apn:prelims} of the Appendix. The lemma can transform any efficient approximation algorithm on the optimal reward of the prophet secretary algorithm to an efficient strategy for the prophet secretary game with approximately optimal reward.

\begin{restatable}{lemma}{errorProp}\label{lem:error_propagation}
For any prophet secretary instance $X_1,\cdots,X_n$, let $\opt'(X)$ be an algorithm such that given input $X$ returns a value $\opt'(X)\in[(1-\eps)\opt(X),\opt(X)]$. Consider an algorithm $\alg$ for the prophet secretary problem such that at any step, the algorithm uses $\opt'($on the remaining variables$)$ as a threshold for the current variable. Then 
\[\alg(X)\geq (1-\eps)\opt(X).\]
\end{restatable}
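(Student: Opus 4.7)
The plan is to prove the inequality by strong induction on $|Y|$, the number of remaining variables. Write $\opt(Y)$ for the (unconditional) expected optimal reward of the prophet-secretary instance $Y$; unpacking~\eqref{eq:DP} at the moment \emph{before} the first variable is revealed gives the clean recursion
\[
\opt(Y)=\frac{1}{|Y|}\sum_{i\in Y}\E{X_i}{\max\lp(X_i,\;\opt(Y\setminus\{i\})\rp)},\qquad \opt(\emptyset)=0,
\]
and $\alg(Y)$ satisfies the analogous recursion but with the threshold $\opt(Y\setminus\{i\})$ inside the indicator replaced by $\opt'(Y\setminus\{i\})$, and with the continuation value replaced by $\alg(Y\setminus\{i\})$. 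The base case $|Y|=1$ is immediate since both policies are forced to accept the unique variable.

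For the inductive step I would fix the first-arriving index $i$ and then condition on the realization of $X_i$. Set $\theta:=\opt(Y\setminus\{i\})$ and $\theta':=\opt'(Y\setminus\{i\})$; the induction hypothesis supplies (i)~$\alg(Y\setminus\{i\})\geq(1-\eps)\theta$, while the oracle guarantee supplies (ii)~$\theta'\geq(1-\eps)\theta$. The key claim is that, \emph{pointwise} in $X_i$, slot $i$'s summand in $\alg(Y)$ is at least $(1-\eps)\max(X_i,\theta)$, which is exactly $(1-\eps)$ times slot $i$'s summand in $\opt(Y)$. Integrating over $X_i$ and then averaging over $i$ completes the induction.

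The pointwise claim reduces to a short case split on whether $X_i\geq\theta'$. If $X_i\geq\theta'$, $\alg$ accepts and receives $X_i$; when $X_i\geq\theta$ this yields $X_i\geq(1-\eps)X_i=(1-\eps)\max(X_i,\theta)$, and in the ``mismatch'' sub-case $X_i\in[\theta',\theta)$ (where $\alg$ stops early but the optimum would continue) we instead use $X_i\geq\theta'\geq(1-\eps)\theta=(1-\eps)\max(X_i,\theta)$. If $X_i<\theta'$, $\alg$ continues and collects at least $(1-\eps)\theta$ by~(i); moreover $\theta'\leq\theta$ forces $X_i<\theta$, so $\max(X_i,\theta)=\theta$ and the same bound applies. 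The main subtlety I expect is precisely this mismatch sub-case, since it is the only place where $\alg$'s decision disagrees with the optimum's; what rescues it is that the oracle slack enters additively through the max via $X_i\geq\theta'\geq(1-\eps)\theta$, rather than compounding with the inductive slack, so the $(1-\eps)$ factor does not accumulate across the recursion.
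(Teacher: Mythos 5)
Your proof is correct and follows essentially the same route as the paper's: induction on the number of remaining variables, with a three-way case split on whether $X_i$ lies below $\opt'$, in the mismatch window $[\opt',\opt)$, or above $\opt$, using the oracle guarantee $\opt'\geq(1-\eps)\opt$ for the mismatch case and the induction hypothesis for the continuation. The paper writes the same argument as a chain of probability-weighted inequalities rather than a pointwise bound against $(1-\eps)\max(X_i,\theta)$, but the content is identical.
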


\subsection{Grouping in the DP}
Since the dynamic program given above can be exponential in size, a general tool that we will employ is to group variables into a smaller number of groups, and treat all variables in a group in the same way. 

Specifically, given a collection of $g$ groups of variables,
we only need to keep track of the number of variables in each group. Denote by $k_i$ the size of group $i$, and by $K=\sum_i k_i$ the total number of groups remaining. Then we can write the optimal DP as follows

\begin{equation}\label{eq:groups}
\opt( X, k_1,\ldots, k_g)  = 
\begin{cases}
\E{\dist}{X}& \text{if } K=0, \\
\E{\dist}{\max\lp( X, \frac{1}{K}\sum\limits_{i\in [g]} k_i \opt( X_{i}, k_1,\ldots,k_{i-1},k_i\text{-}1,k_{i+1}\ldots,k_g ) \rp)}& \text{else.} \end{cases}
\end{equation}

A simple observation now is that the size of the optimal online, using formulation~\eqref{eq:groups}, is exponential in the number of different groups. The proof of this claim is deferred to Section~\ref{apn:prelims} of the appendix.

\begin{restatable}{claim}{clGroups}\label{cl:DP_group_size}
The size of the dynamic program of ~\eqref{eq:groups}, given $g$ different non-empty groups, is at most $\left\lceil \frac{n}{g} \right\rceil^g$, where $n$ is the number of variables.
\end{restatable}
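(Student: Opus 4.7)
The plan is to bound the number of distinct subproblems the recursion~\eqref{eq:groups} can encounter. Let $n_1,\ldots,n_g \geq 1$ denote the initial sizes of the $g$ non-empty groups, so that $\sum_{j=1}^g n_j = n$. Every subproblem is indexed by a tuple $(k_1,\ldots,k_g)$ with $0 \leq k_j \leq n_j$: each recursive call decrements exactly one coordinate by one, starting from $(n_1,\ldots,n_g)$, and every such tuple is reachable on some branch. Hence the number of subproblems is at most $\prod_{j=1}^g (n_j + 1)$ (equivalently, at most $\prod_j n_j$ once one accounts for the trivial base cases that are evaluated directly from $\mathbb{E}[X]$).

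The core of the argument is then to maximize this product under the constraints $\sum_j n_j = n$ and $n_j \geq 1$. I would use a discrete balancing/swap argument rather than continuous AM-GM: if two group sizes satisfy $n_j \geq n_i + 2$, then replacing $(n_i, n_j)$ by $(n_i+1, n_j-1)$ preserves the sum but brings the two factors closer in value, strictly increasing their product by the standard fact that for fixed sum a product of two positives is largest when the factors are equal. Iterating this swap shows the maximum is attained when all $n_j$ lie in $\{\lfloor n/g\rfloor, \lceil n/g\rceil\}$, at which point each factor is at most $\lceil n/g\rceil$ and the product is bounded by $\lceil n/g\rceil^g$.

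The main obstacle I expect is the careful accounting with floors/ceilings: a naive continuous AM-GM on $n_j+1$ gives the slightly weaker bound $((n+g)/g)^g$, which overshoots $\lceil n/g\rceil^g$ by an additive constant in each factor. This is why I prefer the discrete swap approach, which directly produces an integer-valued balanced configuration and matches the stated bound without loss. A secondary mild subtlety is deciding whether the terminal states (with some $k_j=0$, or the single state with all $k_j=0$) are counted in the ``size''; but either convention is absorbed by the $+1$ slack in the counting $\prod(n_j+1)$ versus $\prod n_j$, and does not affect the leading exponent $g$ or the base $\lceil n/g\rceil$.
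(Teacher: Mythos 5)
Your proposal is correct and takes essentially the same route as the paper's proof: count the DP states as a product of per-group ranges and then argue the product is maximized at the balanced split $\left\lceil n/g\right\rceil$ per group (the paper perturbs two groups away from the balanced configuration to reach a contradiction, while you iterate pairwise swaps toward balance --- the same exchange idea). The one wrinkle, your parenthetical that discarding base cases reduces $\prod_j (n_j+1)$ to $\prod_j n_j$ (only the all-zeros tuple is a base case, so this subtracts just one state), is exactly the same $\pm 1$ slack the paper itself glosses over (it defines $S_i=[0,|G_i|]$ but computes with $|S_i|=|G_i|$), and it is immaterial for the asymptotic use of the claim.
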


With this observation at hand, it suffices to find a grouping of the variables into $\polylog~n$ groups for a QPTAS and into a constant number of groups for a PTAS, such that the online optimal on the groups is $1-\e$ close to the online optimal.

\subsection{Structural lemmas}
We next present two structural lemmas, which will be useful in our analysis. Lemma~\ref{lem:value_error_propagation} shows that if we perturb the values that the random variables can take a bit, then the solution to the DP does not change by much. Lemma~\ref{lem:no_propagation_probability_error} shows that this is also true for probabilities. Their proof is deferred to section~\ref{apn:prelims} of the Appendix.

\begin{restatable}{lemma}{errorPropValue}\label{lem:value_error_propagation}
Let $\opt_{1:n}$ be the optimal online solution to prophet secretary on variables $X_1, \ldots X_n$, and $\opt'_{1:n}$ the optimal online solution on an instance where each value $v\in \supp(X_i)$ is replaced by $v'$ s.t. $v' \leq v \leq \gamma\cdot v'$ then it holds that
\[
\opt'_{1:n} \geq \frac{\opt_{1:n}}{\gamma}.
\]
Similarly, if $v' \leq v \leq v'+\gamma$ for some $\gamma>1$ then $\opt'_{1:n} \geq \opt_{1:n} - \gamma$.
\end{restatable}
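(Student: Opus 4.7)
The plan is to prove both inequalities by induction on the number of remaining variables, working directly with the DP recursion~\eqref{eq:DP}. A clean coupling underlies the argument: whenever the original variable $X_i$ realizes to some value $v\in\supp(X_i)$, the modified variable $X_i'$ realizes to the corresponding $v'$, which by hypothesis satisfies $v'\geq v/\gamma$ (respectively $v'\geq v-\gamma$). The crucial structural fact is that the DP thresholds are deterministic numbers depending only on the identity set of remaining variables, so a pointwise comparison of realizations translates cleanly into a comparison of optimal rewards, with no complications from correlated randomness.

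The inductive claim to establish is that for every subset $\boxes$ of remaining variables and every $i$, $\opt'(X_i'\,|\,\boxes)\geq \opt(X_i\,|\,\boxes)/\gamma$ in the multiplicative case (and the obvious additive analogue). The base case $\boxes=\emptyset$ reduces to $\mathbb{E}[X_i']\geq \mathbb{E}[X_i]/\gamma$, which holds by linearity of expectation applied to the pointwise hypothesis. For the inductive step, I would write $\theta=\frac{1}{|\boxes|}\sum_{j\in\boxes}\opt(X_j\,|\,\boxes\setminus\{j\})$ and the analogous $\theta'$ for the modified instance. Applying the inductive hypothesis term by term gives $\theta'\geq \theta/\gamma$. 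Combined with $X_i'\geq X_i/\gamma$ pointwise, this yields $\max(X_i',\theta')\geq \max(X_i,\theta)/\gamma$ realization by realization. Taking expectation over the realization of $X_i$ (equivalently $X_i'$ under the coupling) gives the inductive claim at level $\boxes$, and averaging over the identity of the first variable yields $\opt'_{1:n}\geq \opt_{1:n}/\gamma$.

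The additive case follows by exactly the same induction, replacing division by $\gamma$ with subtraction of $\gamma$ throughout; the relevant algebraic identities are $\max(a-\gamma,b-\gamma)=\max(a,b)-\gamma$ and $\frac{1}{k}\sum_j(c_j-\gamma)=\bigl(\frac{1}{k}\sum_j c_j\bigr)-\gamma$, both of which simply preserve the additive shift through averaging and through taking maxima.

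I do not anticipate a substantial obstacle. The only points requiring care are being explicit about the coupling between original and modified realizations (so that $\max$ can be pulled out pointwise before taking expectations) and emphasizing that the DP thresholds are numerical constants, not random quantities, so that the chain $\theta'\geq \theta/\gamma \Rightarrow \max(X_i',\theta')\geq \max(X_i,\theta)/\gamma$ is immediate from the inductive hypothesis without any subtleties about dependence between the threshold and the current realization.
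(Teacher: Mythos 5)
Your proposal is correct and follows essentially the same route as the paper: an induction on the number of remaining variables showing that the perturbed instance's DP values (and hence its thresholds) are at least $1/\gamma$ times (resp.\ within an additive $\gamma$ of) the original ones. The only difference is cosmetic—where the paper spells out the four stop/continue cases for the two optimal policies, you fold them into the single pointwise inequality $\max(X_i',\theta')\geq \max(X_i,\theta)/\gamma$ before taking expectations, which is a mild streamlining of the same argument rather than a different approach.
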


\begin{restatable}{lemma}{errorProb}\label{lem:no_propagation_probability_error}
Let $\opt_{1:n}$ be the optimal online solution to prophet secretary on variables $X_1, \ldots X_n$, and $\opt'_{1:n}$ the optimal online solution on an instance where each probability $p_v=\Pr{\dist_i}{X_i=v}$ for $v\in \text{supp}(X_i)$ is replaced by $p_v'$ s.t. $p_v' \leq p_v \leq \gamma\cdot p_v'$ then it holds that
\[
\opt'_{1:n} \geq \frac{\opt_{1:n}}{\gamma}.
\]
\end{restatable}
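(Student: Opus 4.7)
The plan is to prove this by strong induction on $|\mathcal{B}|$, the size of the remaining set of variables. Define $f(\mathcal{B})$ (resp.\ $f'(\mathcal{B})$) as the optimal reward on the sub-instance indexed by $\mathcal{B}$ under the original (resp.\ perturbed) distributions via the DP in Equation~\eqref{eq:DP}. The goal is to show $f'(\mathcal{B}) \geq f(\mathcal{B})/\gamma$ for all $\mathcal{B}$; specializing to $\mathcal{B}=\{1,\ldots,n\}$ recovers the lemma. The base case $\mathcal{B}=\emptyset$ is trivial since both sides equal $0$.

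For the inductive step, I would set $\theta_i := f(\mathcal{B}\setminus\{i\})/\gamma$. The inductive hypothesis gives $f'(\mathcal{B}\setminus\{i\}) \geq \theta_i$, so inside the DP recursion $\mathbb{E}[\max(X'_i,f'(\mathcal{B}\setminus\{i\}))] \geq \mathbb{E}[\max(X'_i,\theta_i)]$. It therefore suffices to establish the per-term inequality
\[
\mathbb{E}[\max(X'_i,\theta_i)] \;\geq\; \tfrac{1}{\gamma}\,\mathbb{E}[\max(X_i,\gamma\theta_i)] \;=\; \tfrac{1}{\gamma}\,\mathbb{E}[\max(X_i,f(\mathcal{B}\setminus\{i\}))];
\]
averaging this over $i \in \mathcal{B}$ and plugging into the DP for $f'(\mathcal{B})$ then yields $f'(\mathcal{B}) \geq f(\mathcal{B})/\gamma$.

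To prove the per-term inequality I plan to combine two observations. First, decomposing $\mathbb{E}[\max(X'_i,\theta_i)] = \theta_i + \mathbb{E}[(X'_i-\theta_i)_+]$ and using $p'_v \geq p_v/\gamma$ termwise (this also holds at $v=0$, since $\sum_{v\neq 0}p'_v \leq \sum_{v\neq 0}p_v$ forces $p'_0 \geq p_0$) yields
\[
\mathbb{E}[\max(X'_i,\theta_i)] \;\geq\; (1-\tfrac{1}{\gamma})\theta_i + \tfrac{1}{\gamma}\,\mathbb{E}[\max(X_i,\theta_i)].
\]
Second, a short case analysis on $v$ versus $\theta$ and $\gamma\theta$ establishes the pointwise bound $\max(v,\gamma\theta) - \max(v,\theta) \leq (\gamma-1)\theta$ for any $v \geq 0$; taking expectations and dividing by $\gamma$ gives
\[
\tfrac{1}{\gamma}\,\mathbb{E}[\max(X_i,\gamma\theta_i)] \;\leq\; (1-\tfrac{1}{\gamma})\theta_i + \tfrac{1}{\gamma}\,\mathbb{E}[\max(X_i,\theta_i)],
\]
which matches the first bound and closes the induction.

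The main obstacle I anticipate is the loss of an extra factor of $\gamma$ in a naive induction: applying $p'_v \geq p_v/\gamma$ and the IH $f'(T) \geq f(T)/\gamma$ in the obvious order, together with $\max(v,a/\gamma) \geq \max(v,a)/\gamma$, yields only $f'(\mathcal{B}) \geq f(\mathcal{B})/\gamma^2$. The trick to obtain the tight $1/\gamma$ is to compare against the \emph{scaled} threshold $\theta_i = f(\mathcal{B}\setminus\{i\})/\gamma$ rather than against $f(\mathcal{B}\setminus\{i\})$ itself, so that exactly one factor of $1/\gamma$ is spent on the probability perturbation; the slack $(\gamma-1)\theta_i$ between the two thresholds is then exactly recouped by the $(1-1/\gamma)\theta_i$ residual arising from splitting the max into its constant part and its positive part.
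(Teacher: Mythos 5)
Your proof is correct. It proves the right statement (by induction on the remaining set $\boxes$, with $f(\boxes)=\frac{1}{|\boxes|}\sum_{i\in\boxes}\E{}{\max(X_i,f(\boxes\setminus\{i\}))}$ matching the DP), and every step checks out: the monotonicity of the max in the continuation value, the identity $\E{}{\max(X',\theta)}=\theta+\E{}{(X'-\theta)_+}$ together with the termwise bound $p'_v\geq p_v/\gamma$ on the nonnegative terms $(v-\theta)_+p'_v$, and the pointwise inequality $\max(v,\gamma\theta)\leq\max(v,\theta)+(\gamma-1)\theta$ for $v\geq 0$, $\gamma\geq 1$, $\theta\geq 0$.

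The route is genuinely different from the paper's. The paper proves this lemma by asserting that the induction used for the value-perturbation lemma (Lemma~\ref{lem:value_error_propagation}) carries over verbatim, with only the base case re-examined; that induction proceeds by a case analysis on whether the two optimal threshold policies stop or continue at the current realized value. That style of argument is most natural when the values change but the realization probabilities do not (so the realizations can be coupled); when the probabilities change, every expectation in the recursion is taken under a different measure, and the paper's one-line reduction leaves that change of measure implicit. Your per-term inequality $\E{}{\max(X'_i,\theta_i)}\geq\frac{1}{\gamma}\E{}{\max(X_i,\gamma\theta_i)}$ handles exactly this point explicitly and algebraically, with no policy case analysis or coupling, and your remark about comparing against the scaled threshold $\theta_i=f(\boxes\setminus\{i\})/\gamma$ pinpoints precisely why the naive composition of the induction hypothesis with the probability bound would only give $1/\gamma^2$: the residual $(1-1/\gamma)\theta_i$ from splitting off the constant part of the max is what absorbs the gap $(\gamma-1)\theta_i$ between the two thresholds. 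So your argument is both a valid substitute and, for this probability-perturbation statement, a more self-contained and airtight justification than the paper's pointer to the value-perturbation proof; what the paper's approach buys in exchange is a single induction template reused for both lemmas.
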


\section{Warm-up: QPTAS}\label{sec:qptas}
We begin by presenting a QPTAS for approximating the optimal reward of the prophet secretary problem. By Claim~\ref{cl:DP_group_size}, our goal is to show that there is a grouping of the variables into $\polylog(n)$ groups such that the online optimal on the groups achieves a ($1-\e$)-approximation. 

The key insights that will enable this are: for each variable, we can collapse values and probabilities of all realizations above OPT into a single point. Moreover, we can compress the support of the probabilities of these points into a $\poly(n)$ range by moving all their values to a single value that we choose. 

\begin{remark}
Very simple ideas, like grouping according to the mean and variance of each distribution fail. For more details see Section~\ref{apn:counterexample} of the Appendix. 
\end{remark}

\subsection{QPTAS for constant-size support}\label{subsec:qptas_2_point}
We start by solving the case where each random variable comes from a distribution with constant support size, as this already captures some of the key ideas of the construction. Any proof not included in this section is deferred to Section~\ref{apn:qptas_2_points} of the Appendix.

A variable $X_i$ (with with $\E{}{X_i} < \infty$) has support size at most $c$, for some constant $c$, if it is of the form
\[X_i = \begin{cases}
v_1 & \text{w.p. } p_1\\
v_2 & \text{w.p. } p_2\\
\ldots & \\
v_\kappa & \text{w.p. } p_\kappa,\\
\end{cases}\]
where $\kappa \leq c$ is the size of the support. Next, we state the main result of this section.

\begin{restatable}{theorem}{qptasMain}[QPTAS for constant size support]\label{thm:QPTAS_2_point} 
There exists a $(1-\e)$-approximation algorithm for the prophet secretary problem against the online optimal that runs in time $O\lp( \lp(\frac{n}{g}\rp)^g \rp)$ for $g =O\lp(\frac{\log^{2c} n/\e}{(c-1)! \log^{2c} (1+\e)}\rp)$. In the special case of $c=2$, for 2-point distributions with $v_1 = 0$ and $v_2 > 0$ we can get running time of
$O\lp( \lp(\frac{n}{g}\rp)^g \rp)$ for $g =O\lp(\frac{\log^2 n/\e}{\log^2 (1+\e)}\rp)$. 
\end{restatable}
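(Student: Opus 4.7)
The plan is to apply Claim~\ref{cl:DP_group_size}: it suffices to produce a grouping of the $n$ variables into $g=O(\log^{2c}(n/\e)/((c-1)!\log^{2c}(1+\e)))$ types such that treating all variables within a type identically perturbs the online optimum by only a $(1-O(\e))$ factor. Because prophet secretary is scale invariant, I would first normalize so that $\opt$ lies in a constant interval, say $[\alpha, 1]$; this can be arranged (up to a constant) via a coarse estimate of $\opt$, e.g., from the classic single-threshold prophet inequality applied to the means.

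Next I would prune and clip the distributions. Any support point with value $v\le\e$ can be truncated to zero: the online optimum selects at most one variable, so this introduces an additive loss of at most $\e$, which is small compared to $\opt\ge\alpha$. Likewise any support point with $v\cdot p\le\e/n$ can be discarded: there are at most $cn$ such points and each contributes at most $\e/n$ in expectation, for cumulative loss $O(\e)$. For a support point with $v> 1/\e^2$, I would collapse all such points, across all variables, to a single common value $V^\ast$, reassigning the probability so that the product $v\cdot p$ is preserved; because $V^\ast$ remains far above every decision threshold of the online optimum (all thresholds are bounded by $\opt\le 1$), the optimum stops immediately whenever such a realization appears, so this swap essentially preserves the DP. After this step the probabilities attached to the frozen high value lie in $[\e^3/n,\,1]$, since $vp\ge\e/n$ forces $p\ge\e^3/n$.

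All remaining support points then live in a bounded regime: values in $[\e, 1/\e^2]$ together with the frozen $V^\ast$, and probabilities in $[\e^3/n,1]$. I would round every value and every probability down to the nearest power of $(1+\e)$; by Lemmas~\ref{lem:value_error_propagation} and~\ref{lem:no_propagation_probability_error} each such rounding costs only a $(1+\e)$ factor in $\opt$, and a constant number of roundings keeps the total loss within $(1-O(\e))$. After rounding there are $O(\log(1/\e)/\log(1+\e))$ possible values and $O(\log(n/\e)/\log(1+\e))$ possible probabilities, so each (value, probability) pair takes one of $O(\log^2(n/\e)/\log^2(1+\e))$ discrete values. A variable of support size $\kappa\le c$ is then determined by an unordered multiset of $\kappa$ such pairs; counting multisets of size at most $c$ from a universe of this size yields the claimed $g$, and in the special binary case with $v_1=0$ only one nonzero pair per variable matters, giving $g=O(\log^2(n/\e)/\log^2(1+\e))$. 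Feeding this $g$ into Claim~\ref{cl:DP_group_size} produces the stated runtime $O((n/g)^g)$.

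The main obstacle, and where most care is required, is the high-value collapsing step. One cannot simply truncate large realizations at $1$, since this would alter $v\cdot p$ dramatically when $p$ is tiny and $v$ huge, destroying $\opt$. What makes the collapse legal is that whenever a realization is far above every threshold the online optimum faces, the optimal policy is forced to stop immediately, so the contribution of that support point to the DP depends essentially only on the product $v\cdot p$; this lets me move $v$ to any fixed $V^\ast\gg 1$ by rescaling $p$ proportionally, after which the probability-discretization step absorbs the residual error via Lemma~\ref{lem:no_propagation_probability_error}. Formalizing this ``always-stop'' argument inside the random-order DP, where the arrival position of each variable is itself random and couples with earlier decisions, is the delicate part; for support size $c>2$ it must be executed simultaneously for the (at most $c$) high support points per variable while keeping the resulting probability range polynomially bounded so that the final discretization produces only $O(\log(n/\e)/\log(1+\e))$ buckets.
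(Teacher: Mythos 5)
Your proposal follows essentially the same route as the paper: normalize $\opt$ into a constant interval, prune support points with $v\le\e$ or $vp\le\e/n$, collapse all high values ($>1/\e^2$) onto a single common value via an expectation-preserving rescaling of probabilities (the paper's Claim~\ref{cl:single-large-value}), discretize values and probabilities to powers of $(1+\e)$ using Lemmas~\ref{lem:value_error_propagation} and~\ref{lem:no_propagation_probability_error}, and count groups as multisets of at most $c$ (value, probability) pairs, exactly matching the paper's balls-into-bins bound $\sum_{i\le c}\binom{k}{i}\le k^c/(c-1)!$. One small correction: after the collapse the rescaled probabilities $vp/V^\ast$ need not lie in $[\e^3/n,1]$ (that interval pertains to the $v\le 1/\e^2$ regime); what matters, and what the paper actually uses, is that they span a multiplicative range of $n/\e$ relative to the largest such probability, which still yields $O(\log(n/\e)/\log(1+\e))$ buckets and leaves your final group count unchanged.
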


In what follows, we give a proof for the special case of two-point distributions, each supported on one zero and one non-zero value, as it contains most of the important ideas that we need to prove the theorem. At the end of this subsection, we describe the small extra step we need to do to obtain the result for the constant-size support case, and we defer the rest of the necessary adjustments to the appendix. After this, we will move on to the case of general distributions. Thus, from now on we consider random variables $X_i$ of the form 
\[X_i = 
\begin{cases}
v_i & \text{, w.p. } p_i\\
0 & \text{, w.p. } 1-p_i.
\end{cases}
\]

\bigskip
\noindent\textbf{Preprocessing:} Before describing our discretization process, we do some preprocessing steps, starting by scaling the optimal such that the following claim holds.

\begin{claim}\label{cl:bounded_opt_off}
WLOG we can normalize the values s.t. $\opt \in [0.669,1]$.
\end{claim}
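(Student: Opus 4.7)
The plan is to observe that the online optimum is sandwiched between a constant times the offline prophet value on one side and the offline prophet value on the other, and then use scaling invariance of the problem (everything is homogeneous of degree one in the values).

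Concretely, I would first recall that for prophet secretary against the prophet benchmark, the algorithm of Correa, Saona and Ziliotto \citep{CorrSaonZili2019} is an online, polynomial-time computable strategy whose expected reward is at least $0.669 \cdot \E{\dist}{\max_i X_i}$. Since $\opt$ is by definition the best expected reward achievable by any online strategy (with unlimited computation), this immediately yields the lower bound
\[
\opt \;\geq\; 0.669 \cdot \E{\dist}{\max_i X_i}.
\]
The matching upper bound $\opt \leq \E{\dist}{\max_i X_i}$ is trivial because an online algorithm cannot outperform the all-knowing prophet.

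Next, I would rescale the instance. Let $M = \E{\dist}{\max_i X_i}$, which is easy to compute exactly from the given discrete distributions. Replace every value $v \in \supp(X_i)$ by $v/M$ for every $i$. The DP in \eqref{eq:DP} is linear in the values, so the optimal reward of the rescaled instance is exactly $\opt / M$, and by the two inequalities above this lies in $[0.669, 1]$. Because any $(1-\e)$-approximation algorithm for the rescaled instance gives a $(1-\e)$-approximation for the original instance (multiply the output by $M$), the normalization is without loss of generality.

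The only ``obstacle'' worth flagging is a sanity check that the entire QPTAS/PTAS pipeline is scale-invariant in the same sense, so that the discretization thresholds (e.g., ``ignore values with $v_i \leq \e$'' later in the preprocessing) are applied \emph{after} this normalization; otherwise the numerical constants in the discretization would be meaningless. No new technique is required beyond citing the existing $0.669$-competitive algorithm and invoking linearity of the DP.
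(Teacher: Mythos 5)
Your proposal is correct and matches the paper's argument exactly: normalize so that $\E{\dist}{\max_i X_i}=1$, then use the $0.669$-competitive guarantee of \citet{CorrSaonZili2019} for the lower bound and the trivial prophet upper bound, with linearity of the DP justifying the rescaling. The extra remarks on scale-invariance of the subsequent discretization are a reasonable sanity check but not a departure from the paper's route.
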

This holds since we can normalize the values in order to have $\E{\dist}{\max_i X_i}=1$. Using Theorem 1.1 from~\cite{CorrSaonZili2019} we know that the online $\opt$ will be in $[0.669,1]$ (since they give a $0.669$ approximation to the prophet). Following this, for the variables where $v_i\leq \e$ or $v_i p_i \leq \e/n$ we set  $v_i'=0$ and keep $p_i'=p_i$. 

The following claim shows that grounding to $0$ the point masses $v_i$ with $v_i \cdot p_i \leq \frac{\e}{n}$ does not incur more that $O(\e)$ loss for $\opt$.

\begin{restatable}{claim}{vanishingGain}
\label{cl:small_mean_vanishing_gain}
Let $X_1, \ldots,X_n$ be two-point random variables. If for every variable $X_i$ for which $\E{}{X_i} = v_i \cdot p_i \leq \frac{\e}{n}$ we move all mass to $0$, (i.e., $v'_i = 0$ with $p'_i=1$)  then we lose at most $1.5\e \cdot \opt$.
\end{restatable}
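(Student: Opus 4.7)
The plan is to bound the total expected contribution that the optimal online algorithm extracts from the ``small-mean'' variables $S = \{i : v_i p_i \leq \e/n\}$, and then argue by a coupling that zeroing out these variables cannot cost more than this contribution.

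First I would decompose the optimal online value. Let $A$ be the optimal online algorithm for $X_1,\ldots,X_n$ and write its payoff as $Z = \sum_i Z_i$, where $Z_i = X_i \cdot \mathbbm{1}[A \text{ stops at } i]$. Since $X_i$ is two-point with $X_i \in \{0,v_i\}$, the contribution $Z_i$ is nonzero only when $X_i = v_i$, so $\E[Z_i] \leq v_i \cdot \Pr[X_i = v_i] = v_i p_i$. Therefore the total mass that $A$ extracts from $S$ is bounded by
\[
\sum_{i \in S} \E[Z_i] \;\leq\; \sum_{i \in S} v_i p_i \;\leq\; |S| \cdot \tfrac{\e}{n} \;\leq\; \e.
\]

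Next I would couple the modified instance $X'$ (with $X'_i = 0$ for $i \in S$ and $X'_i = X_i$ otherwise) to the original one and design an algorithm $A'$ on $X'$ that simulates $A$: for $i \notin S$, $A'$ sees the same realization as $A$ and mimics its decision; for $i \in S$, $A'$ samples a ``virtual'' realization $\tilde X_i \sim \dist_i$ independently (used only by the simulation of $A$), but since $X'_i = 0$ it cannot actually collect $\tilde X_i$ if $A$ would have stopped there. Whenever $A$ stops on some $i \notin S$, so does $A'$, collecting the identical value; whenever $A$ stops on some $i \in S$ with $\tilde X_i = v_i$, algorithm $A'$ instead continues (and in the worst case collects $0$ from the remaining stream). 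Hence $A$'s payoff dominates $A'$'s payoff pointwise by at most $\sum_{i \in S} X_i \cdot \mathbbm{1}[A \text{ stops at } i]$, which in expectation is at most $\sum_{i \in S} \E[Z_i] \leq \e$. Taking expectations, $\opt' \geq \E[A'] \geq \opt - \e$.

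Finally I would convert the additive loss into the multiplicative form stated in the claim by invoking Claim~\ref{cl:bounded_opt_off}: since $\opt \geq 0.669$, we have $\e \leq (\e/0.669)\,\opt \leq 1.5\e\cdot \opt$, so $\opt - \opt' \leq 1.5\e\cdot\opt$, as claimed.

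The only subtle point, which I see as the main obstacle, is making the coupling argument in the second step fully rigorous: one must be careful that $A$'s decision rule on step $i \in S$ depends only on the history of realizations it has observed and on the random arrival order, so feeding $A$ the virtual sample $\tilde X_i$ (while $A'$ observes the true value $X'_i = 0$) produces a well-defined pair of stopping rules on a common probability space. Once this is set up, the pointwise domination above and the linearity-based bound $\sum_{i \in S} v_i p_i \leq \e$ together yield the result.
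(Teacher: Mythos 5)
Your proposal is correct and follows essentially the same route as the paper: both arguments bound the total expected reward extractable from the small-mean variables by $\sum_{i\in S} v_i p_i \leq n\cdot \e/n = \e$ (the paper via $\E{}{\max_{i\in S}X_i}=\sum_i v_ip_i\prod_{j<i}(1-p_j)$, you via linearity over stopping indicators), and then convert the additive $\e$ into $1.5\e\cdot\opt$ using the normalization $\opt\geq 0.669$ from Claim~\ref{cl:bounded_opt_off}. Your explicit simulation/coupling step is a more careful rendering of what the paper does implicitly when it splits $\opt$ into the gain from selecting inside versus outside the set of surviving variables, so there is no substantive difference.
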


\smallskip
\noindent\textbf{Dealing with high values:} Before proceeding to the discretization, and since we did not make any assumptions on the upper bound of the support of every distribution, we need to treat separately the points with ``high'' values. Here by ``high'' values we mean those variables that have a point with $v_i\in [1/\e^2, \infty)$ and $p_i \in (0,\e^2]$. In particular, we need to compress their values or their probabilities in a $\poly(n)$-range support (in values / probability) so that we can perform afterwards the careful discretization process. We achieve that by performing the following expectation-preserving transformation. Let $v_{\max}$ be the largest support value in the input. We set all such points to have equal value $v_i'=v_{\max}$ and rescale the initial probabilities to $p_i' \leftarrow \frac{v_ip_i}{v_{\max}}$, to maintain the same expected reward from this support value (i.e., $v_i' p'_i = v_i p_i$). Let $\tp$ be the largest $p'_i$ from this transformation, we can ensure that $p_i'\in [\frac{\e}{n}\tp, \tp]$ for each $i\in \boxes$ by removing all variables with $p'_i<\frac{\eps}{n}$ losing at most $O(\eps)$ fraction of the optimal reward due to Claim~\ref{cl:small_mean_vanishing_gain}. 

We claim that this transformation does not cause more than $O(\eps^2)$ multiplicative loss. 

\begin{restatable}{claim}{constloss}\label{cl:single-large-value}
    Let $X_1, \ldots, X_n$ be an instance of prophet secretary such that each variable $X_i$ has at most one support value $>1$, and $Y_1, \ldots Y_n$ the instance where we change the variables $X_i$ to variable $Y_i$, where the support value $v_i>1/\eps^2$ with probability $p_i$ is changed to $v_{\max}$ with probability $\frac{v_i p_i}{v_{\max}}$, where $v_{\max}$ is the largest support value of all variables. Then
    \[
    \opt(X) \leq \opt(Y) \leq (1+\eps^2)\opt(X).
    \]
\end{restatable}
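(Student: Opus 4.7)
The claim splits into two inequalities; I sketch each with its main subtlety.

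\emph{Direction $\opt(X) \leq \opt(Y)$.} My approach is a mean-preserving spread argument. For each variable with a support value $v_i > 1/\eps^2$, the distribution of $Y_i$ is obtained from that of $X_i$ by splitting the atom at $v_i$ (of mass $p_i$) into an atom at $v_{\max}$ of mass $v_i p_i / v_{\max}$ and an atom at $0$ of mass $p_i - v_i p_i/v_{\max}$, preserving the mean $v_i p_i$; for all other variables $X_i = Y_i$. Thus $X_i$ is dominated by $Y_i$ in convex stochastic order, and since $y \mapsto \max(y, c)$ is convex for every $c \geq 0$, we have $\mathbb{E}[\max(X_i, c)] \leq \mathbb{E}[\max(Y_i, c)]$. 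Since the continuation threshold in the DP~\eqref{eq:DP} depends only on the remaining variables and not on the current realization, a routine induction on $|\boxes|$ lifts this pointwise comparison to $V_Y(\boxes) \geq V_X(\boxes)$, where I write $V_X(\boxes) := \frac{1}{|\boxes|}\sum_i \opt(X_i\mid \boxes\setminus\{i\})$.

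\emph{Direction $\opt(Y) \leq (1+\eps^2)\opt(X)$.} I would first establish the quantitative bound $E := \sum_{i \,:\, v_i > 1/\eps^2} p_i = O(\eps^2)$. After the preprocessing of Claim~\ref{cl:bounded_opt_off}, the prophet value $\mathbb{E}[\max_i X_i] \leq 1$, and each huge realization strictly exceeds $1/\eps^2$, so $\Pr[\text{some huge variable is realized}] < \eps^2$; combined with $1 - \prod_{i\text{ huge}}(1 - p_i) \geq \sum p_i - \tfrac{1}{2}(\sum p_i)^2$ this yields $E \leq \eps^2(1 + O(\eps^2))$.

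Next I would perform an additive induction on $\Delta(\boxes) := V_Y(\boxes) - V_X(\boxes)$. For any huge $i$ and $\theta = V_X(\boxes\setminus\{i\}) \leq \opt \leq 1$, both $v_i$ and $v_{\max}$ strictly exceed $\theta$, so a direct computation gives
\[
\mathbb{E}[\max(Y_i,\theta)] - \mathbb{E}[\max(X_i,\theta)] \;=\; p_i(1 - v_i/v_{\max})\,\theta \;\leq\; p_i\,\theta,
\]
while for non-huge $i$ the difference vanishes. Combining this with the $1$-Lipschitz bound $\max(y, c+\delta) \leq \max(y, c) + \delta$ and the DP~\eqref{eq:DP} yields the recursion
\[
\Delta(\boxes) \;\leq\; \frac{1}{|\boxes|}\sum_{i \in \boxes}\Delta(\boxes\setminus\{i\}) \;+\; \frac{1}{|\boxes|}\sum_{i \in \boxes \,:\, \text{huge}} p_i\,V_X(\boxes\setminus\{i\}).
\]
Using the easy monotonicity $V_X(\boxes\setminus\{i\}) \leq V_X(\boxes)$ (a policy on $\boxes$ can freely reject a newly added variable) together with the identity $E(\boxes\setminus\{i\}) = E(\boxes) - p_i\mathbf{1}[i \text{ huge}]$, I would verify inductively that $\Delta(\boxes) \leq E(\boxes)\,V_X(\boxes)$; setting $\boxes = [n]$ gives the claimed bound (up to redefining $\eps$ by a constant factor if one insists on the clean $(1+\eps^2)$ form).

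\emph{Main obstacle.} The delicate point is the inductive bookkeeping for the upper direction. A naive level-by-level summation of the error term $p_i V_X/|\boxes|$ would incur an unwanted $\log n$ factor; what makes it telescope cleanly is that $E(\boxes)$ drops by exactly $p_i$ whenever a huge $i$ is removed, so that the ansatz $\Delta(\boxes) \leq E(\boxes) V_X(\boxes)$ closes up. A secondary subtlety is that the per-step calculation above relies on both $v_i$ and $v_{\max}$ strictly exceeding any reachable continuation threshold, which is precisely what the preprocessing steps (normalizing $\opt\leq 1$ and restricting to $v_i > 1/\eps^2$) guarantee.
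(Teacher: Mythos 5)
Your proposal is correct, but it takes a genuinely different route from the paper on both directions. For $\opt(X)\leq\opt(Y)$ the paper transforms one variable at a time and runs a bare-hands induction on the number of remaining variables, using only that the transformation preserves $v_ip_i$ while decreasing the realization probability ($p_i'\leq p_i$); your observation that $Y_i$ is a mean-preserving spread of $X_i$, combined with convexity and monotonicity of $y\mapsto\max(y,c)$ and induction on $|\boxes|$, packages the same mechanism more conceptually. For the upper bound the paper argues by policy coupling: it constructs an algorithm for the $X$-instance that imitates the optimal policy for $Y$, stopping at the original variable whenever the imitated policy stops at a transformed one, shows the loss per transformed variable is a factor $(1-p_i)$, and then bounds $\prod_i(1-p_i)\geq 1-\eps^2$ because otherwise a policy that accepts only values above $1/\eps^2$ would earn more than $\opt\leq 1$. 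You instead stay inside the dynamic program and prove the per-state estimate $V_Y(\boxes)-V_X(\boxes)\leq E(\boxes)\,V_X(\boxes)$ by induction, with the error telescoping because $E(\boxes\setminus\{i\})=E(\boxes)-p_i$ for huge $i$; this is a somewhat stronger, more structural statement and avoids decision-tree couplings, at the price of the Lipschitz and monotonicity bookkeeping (both of which you justify correctly: ignoring an added variable leaves the rest in uniform random order, and $V_X(\boxes)\geq\frac{1}{|\boxes|}\sum_i V_X(\boxes\setminus\{i\})$ also suffices). Your final slack $(1+\eps^2+O(\eps^4))$ matches the paper's, whose own proof yields $\opt(Y)\leq\opt(X)/(1-\eps^2)$.

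Two small fixes. First, from the Bonferroni bound $1-\prod_i(1-p_i)\geq E-\tfrac{1}{2}E^2$ alone you cannot invert to $E=O(\eps^2)$, since large $E$ (e.g.\ $E=2$) also satisfies $E-\tfrac{1}{2}E^2\leq\eps^2$; use $1-\prod_i(1-p_i)\geq 1-e^{-E}$ instead, which gives $E\leq-\ln(1-\eps^2)\leq\eps^2(1+\eps^2)$ directly (this is essentially how the paper bounds the same quantity, working with $\prod_i(1-p_i)$ rather than $\sum_i p_i$). Second, the claim statement leaves implicit where the excess mass $p_i-p_i'$ goes; your reading (it moves to value $0$) is the intended one, and it is exactly what makes both the mean-preserving-spread argument and the per-step identity $\mathbb{E}[\max(Y_i,\theta)]-\mathbb{E}[\max(X_i,\theta)]=(p_i-p_i')\theta$ valid, so it is worth stating explicitly.
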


The proofs of Claim~\ref{cl:single-large-value} and Claim~\ref{cl:small_mean_vanishing_gain} are deferred to Appendix~\ref{apn:qptas_2_points}.
\bigskip
\noindent\\ \textbf{Discretization:}  
We now define the steps in our discretization process.
\begin{definition}[Discretization]\label{def:discretization}
Given a set of $n$ variables, each of which is a two point distribution giving $v_i$ w.p. $p_i$ and $0$ otherwise we define the following discretization process.
\begin{itemize}
    \item \textbf{Step 1}: round down all $v_i$'s to the nearest $(1+\e)^k$, for $k\in \mathbb{Z}$. 
    \item \textbf{Step 2}: round down all $p_i$'s to the nearest $(1+\e)^k$, for $k\in \mathbb{Z}$.
\end{itemize}
\end{definition}

We discretize each group of variables differently, based on the values of $v_i$ and $p_i$ of each variable. The different discretization cases are also presented in Figure~\ref{fig:cases}.

\begin{itemize}
    \item \textbf{Case 1}: variables that have $v_i \in[\e,1/\e^2]$ and $p_i \in[\e^3/n,1]$. We use the discretization of Definition~\ref{def:discretization}, based on both value and probability.

    \item \textbf{Case 2}: variables that have $v_i\in [1/\e^2, \infty)$ and $p_i \in (0,\e^2]$. Here, we have performed the expectation-preserving transformation described above and then we use again the discretization of Definition~\ref{def:discretization}, based on both value and probability.
\end{itemize}

Observe also that variables that have $v_i > 1/\e^2$ and $p_i \in (\e^2,1]$ cannot exist, since in this case the optimal would not be $\opt\leq 1$.
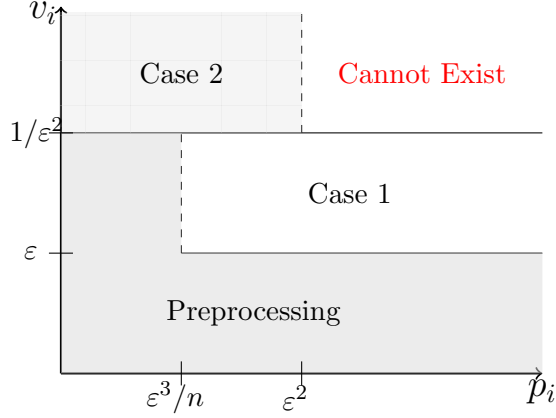
\begin{figure}[H]
    \centering
    \begin{tikzpicture}[scale=0.8]

\pgfmathtruncatemacro{\casesL}{2};
\pgfmathtruncatemacro{\totalL}{8};

\tikzset{mycolor/.style={fill=gray!30, opacity=.5}}


\draw[->, thick] (0,0)--(0,\totalL-1.9) node[]{};
\draw[-] (2,2)--(\totalL,2) node[]{};
\draw[-] (0,4)--(\totalL,4) node[]{};

\draw[-, dashed] (\casesL,2)--(\casesL,4) node[]{};
\draw[->, thick] (0,0)--(\totalL,0) node[]{};
\draw[-, dashed] (2*\casesL,4)--(2*\casesL,6) node[]{};


\fill[mycolor] (0,0)--(0,4)--(2,4)--(2,2)--(\totalL,2)--(\totalL,0);
\fill[pattern={Lines[angle=45, distance=6mm, line width=13mm, xshift=5mm]}, pattern color=gray!30,opacity=.5] (0,4)--(0,6)--(4,6)--(4,4);

\node[] at (-0.5,2) {$\varepsilon$};
		\draw[-] (-0.2, 2) --(0.2,2);

  		\node[] at (-0.4,4) {$1/\e^2$};
		\draw[-] (-0.2, 4) --(0.2,4);

\node[] at (-0.3,\totalL-2) {{\Large $v_i$}};
\node[] at (\totalL,-0.3) {{\Large $p_i$}};
		
\draw[-] (\casesL, -0.2) -- (\casesL, 0.2);
\node[] at (\casesL-0.1,-0.4) {$\e^3/n$};

\draw[-] (2*\casesL, -0.2) -- (2*\casesL, 0.2);

\node[] at (2*\casesL-0.1,-0.4) {$\e^2$};
\node[] at (0.4*\totalL,1) {Preprocessing};
\node[] at (0.12*\totalL,3) {};
\node[] at (0.6*\totalL,3) {Case 1};
\node[] at (0.25*\totalL,5) {Case 2};
\node[] at (0.75*\totalL,5) {\color{red}{Cannot Exist}};

\end{tikzpicture}
    \caption{Different cases for discretization for the $2$-point case. In case 2, we only consider the variables with $v_ip_i>\e/n$. Preprocessing here refers to all the value-probability pairs that we omit from the instance.}
    \label{fig:cases}
\end{figure}

\paragraph{Proof of Theorem~\ref{thm:QPTAS_2_point} (for 2-point distributions).} 
We separately bound the error and calculate the number of groups resulting from our discretization process.

\medskip
\noindent\emph{Bounding the error:}
In our preprocessing phase, we move all mass to value $0$ for the following two types of variables: when  $v_i<\e$ (step 1), and when $v_i p_i <\e/n$ (step 2). Denote by $\opt'$ and $\opt''$ the  optimal value after each of the preprocessing steps.
\begin{itemize}
\item \textbf{Variables with $v_i<\e$}: note that each of the variables in Step 1 contributes at most $\e$ and $\opt\leq 1$ from Assumption~\ref{cl:bounded_opt_off}. Denote by $A = \{ X_i : v_i > \e\}$ and we write $\opt$ as
    \begin{align*}
    \opt & = 
    \E{}{\text{gain}|\text{select }b\in A}\Pr{}{\text{select }b\in A} +
        \E{}{\text{gain}| \text{select }b\not \in A} \Pr{}{\text{select } b\not\in A}\\
        & \leq     \E{}{\text{gain}|\text{select }b\in A}\Pr{}{\text{select }b\in A}  + \frac{\e}{0.669} \opt\\
        & \leq \opt' + 1.5\e \opt,
    \end{align*}
    from which we get $\opt' \geq (1-1.5\e)\opt$.
    \item \textbf{Variables with }$v_i p_i < \e/n$: denote by $A = \{ X_i : v_i p_i \geq \e/n\}$, then we write $\opt'$ as
    \begin{align*}
    \opt' & = 
    \E{}{\text{gain}|\text{select }b\in A}\Pr{}{\text{select }b\in A} +
        \E{}{\text{gain}| \text{select }b\not \in A} \Pr{}{\text{select } b\not\in A}\\
        & \leq \E{}{\text{gain}|\text{select }b\in A}\Pr{}{\text{select }b\in A} +
        \E{}{\max_{i\not\in A}X_i}\\
        & \leq     \E{}{\text{gain}|\text{select }b\in A}\Pr{}{\text{select }b\in A}  + 1.5 \e \opt'\\
        & = \opt'' + 1.5 \e \opt',
    \end{align*}
    where in the last inequality we used Claim~\ref{cl:small_mean_vanishing_gain}. 
    \end{itemize}
    Combining these two steps, we get that $\opt'' \geq (1-1.5\e)^2 \opt$. We move on to bound the loss incurred by the two different discretization cases.
\begin{itemize}
    \item \textbf{Case 1}: Lemmas~\ref{lem:value_error_propagation} and \ref{lem:no_propagation_probability_error} hold with $\gamma=(1+\e)$, therefore we only lose a factor of $(1+\e)^2$, by applying first Lemma~\ref{lem:value_error_propagation} for the discretization in the values, and then Lemma~\ref{lem:no_propagation_probability_error} for the discretization in the probabilities.
    
    \item \textbf{Case 2}: Combining Claim~\ref{cl:single-large-value} and Lemma~\ref{lem:no_propagation_probability_error} for $\gamma=(1+\e)$, we transform all variables with support value $>1/\eps^2$ and discretize the probabilities to powers of $(1+\eps)$, losing $O(\eps)$ fraction of optimal reward.
\end{itemize}

Combining all the above together, we get $O(\eps)$ multiplicative
loss by the preprocessing and the discretization. 

\medskip\noindent\emph{Counting the groups:}
Observe that from the preprocessing phase all variables will belong to one group, i.e. the group where $v_i=0$ w.p. $1$.

\begin{itemize}
    \item \textbf{Case 1}: using the discretization described in Definition~\ref{def:discretization}, we have $O\lp( \frac{\log n/\e}{\log (1+\e)}\rp)$ different values for $p_i$ and $O\lp( \frac{\log 1/\e}{\log (1+\e)}\rp)$ different values for $v_i$. Therefore creating one group for each pair of values we have $O\lp( \frac{\log 1/\e \cdot \log n/\e}{\log^2 (1+\e)}\rp)$ different groups in total.
    
    \item \textbf{Case 2}: since all variables have the same value, we need to count the different groups from the discretization of the probabilities in range $\left[\frac{\e}{n}\tp, \tp \right]$. Recall that we discretize all probabilities by rounding them down to the nearest $(1+\e)^k$. The number of different $k$ values is $O\lp(\frac{\log n + \log (1/\e)}{\log (1+\e)}\rp)$.
\end{itemize}

Therefore the total number of groups  is $O\lp(\frac{\log^2 n/\e}{\log^2 (1+\e)}\rp)$.

This concludes the proof of the case of two-point variables. We describe next the extra step required to generalize the above arguments to the case of constant support size. We defer the discussion on the minor adjustments in the previous lemmas and claims that give the slightly different error bound for the constant size support (it now depends also on $c$) to Appendix \ref{apn:qptas_2_points}.

\paragraph{Counting the groups in the case of constant size support}
Observe that now \emph{groups} are defined differently than in the two-point case. In particular, to say that two random variables are in the same group in the DP, we need that they have the same distribution after discretization. This implies (1) that they end up with the same number of points after discretization, and (2) that the points fall in the exact same groups.



Suppose that after discretizing the values can take $k_1$ and the probabilities $k_2$ different values. Let us denote $k = k_1 \cdot k_2$ the different value-probability pairs that arise and by $\dist'_i$ the distribution of $X_i$ after the discretization. Assume for now that $|\text{supp}(\dist'_i)| =c$ for every $X_i$ and that also $|\text{supp}(\dist'_i)| =c$ (i.e., no two points of the support collapse to the same one). To count the number of different distributions that arise for support size $c$ we need a balls-into-bins argument, where the bins are the value-probability pairs and the balls are the points in the support. Then ${k \choose c}$ is the number of different distributions and two variables $X_i, X_j$ must have $\dist'_i = \dist'_j$ to be in the same group.
 
We now have to count all the different possibilities. Note that after discretization for every $X_i$ we have that $|\text{supp}(\dist'_i)| \in [1,c]$. For each of the possible support sizes, we count the emerging groups and add them to find the total number of groups. We, thus, obtain 
    \[
    \sum_{i=1}^c \binom{k}{i}\leq c \cdot \binom{k}{c} \leq \frac{k^c}{(c-1)!},
    \]
where the first inequality holds because the binomial is increasing in the interval $[1,c]$ for some constant $c$ (assuming $c = o(k)$), and the last inequality holds for every binomial coefficient by using Stirling's approximation. Since we follow the same discretization process as for two-point distributions, we know from Theorem~\ref{thm:QPTAS_2_point} that $k =O\lp(\frac{\log^2 n/\e}{\log^2 (1+\e)}\rp)$, and the desired result follows. Putting it all together, the number of groups is $g =O\lp(\frac{\log^{2c} n/\e}{(c-1)! \log^{2c} (1+\e)}\rp)$.

\subsection{QPTAS for general distributions}\label{subsec:qptas_general}
Moving on to the QPTAS for the general case, observe that the techniques used for the constant support case cannot work when the support size of a distribution is arbitrary. For instance, the transformation in the case of dealing with high values from before will not work now, because Claim~\ref{cl:single-large-value} cannot be applied when a variable can have multiple large support values $>1/\eps^2$. However, we are still able to obtain a QPTAS (Theorem~\ref{thm:QPTAS_general}): using the fact that whenever any (optimal) algorithm encounters a value $X_i >1$ it will accept it, we are able to reduce the size of the support significantly, using a new \emph{bundling} technique.

The second step involves the rest of the points with values in $[\e,1]$. These can possibly be arbitrarily many, but their range is constant, so we can discretize to powers of $(1+\e)$ as before. 
After doing that, we know that each $X_i$ has at most one point with $v_i>1$ (which can also be possibly unbounded) and a constant number of points in $[\e,1]$. Thus, we can employ the discretization process for the distributions with constant-size support and the main difference will now be that the support size $c$ depends on the fixed $\e$.

\begin{theorem}\label{thm:QPTAS_general}
There exists a $(1-\e)$-approximation algorithm for the prophet secretary problem against the online optimal that runs in time $O\lp( \lp(\frac{n}{g}\rp)^g \rp)$ for $g =O\lp(\frac{\log^{2c} n/\e}{(c-1)! \log^{2c} (1+\e)}\rp)$ where $c=O\lp( \frac{\log 1/\e}{\log (1+\e)}\rp)$.
\end{theorem}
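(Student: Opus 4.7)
The plan is to reduce the general-support setting to the constant-support-size setting already handled by Theorem~\ref{thm:QPTAS_2_point}, by means of a \emph{bundling} step that collapses the tail of each distribution into a single atom, followed by the same discretization as before. The preprocessing from Section~\ref{subsec:qptas_2_point} transfers directly: by Claim~\ref{cl:bounded_opt_off} we rescale so that $\opt \in [0.669,1]$ and $\E{}{\max_i X_i}=1$, and by an immediate extension of Claim~\ref{cl:small_mean_vanishing_gain} we move to $0$ every atom with $v \leq \e$ or $v p \leq \e/n$, losing only an $O(\e)$ fraction of $\opt$.

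For the bundling, fix any $X_i$ and set $H_i := \supp(X_i) \cap (1,\infty)$, $q_i := \Pr{}{X_i \in H_i}$, and $\mu_i := \E{}{X_i \cdot \ind{X_i \in H_i}}$. The key observation is that since $\opt \leq 1$, the optimal online policy accepts any realization above $1$ immediately along every sample path, so along the execution of~\eqref{eq:DP} the conditional distribution of $X_i$ on $H_i$ enters only through the pair $(q_i, \mu_i)$. Replacing $X_i$ by the variable $X_i'$ that equals $V_i := \mu_i/q_i$ with probability $q_i$ and is otherwise distributed as $X_i$ conditioned on $X_i \notin H_i$ therefore preserves $\opt$ exactly. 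After this lossless step every $X_i'$ has at most one support point above $1$. Moreover, the normalization $\E{}{\max_j X_j}=1$ forces $q_i V_i = \mu_i \leq 1$, so whenever $V_i > 1/\e^2$ we have $q_i < \e^2$, which is precisely the regime in which Claim~\ref{cl:single-large-value} applies; we collapse all such very-high atoms to the common value $v_{\max}$ with probability $\mu_i/v_{\max}$, losing another $O(\e)$ fraction.

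At this point the non-zero support of each $X_i'$ lies in $[\e,1] \cup \{v\}$ for some $v \in (1,1/\e^2] \cup \{v_{\max}\}$, a range of multiplicative width $\poly(1/\e)$. Rounding values and probabilities down to the nearest power of $(1+\e)$ as in Definition~\ref{def:discretization} and invoking Lemma~\ref{lem:value_error_propagation} and Lemma~\ref{lem:no_propagation_probability_error} costs a further $(1+\e)^2$ multiplicative factor, for a total loss of $O(\e)$ across all transformations. The number of distinct discretized non-zero values per variable is at most $c = O(\log(1/\e)/\log(1+\e))$, so the balls-into-bins counting argument of Section~\ref{subsec:qptas_2_point} together with Claim~\ref{cl:DP_group_size} gives the advertised running time with $g = O\lp(\frac{\log^{2c}(n/\e)}{(c-1)!\,\log^{2c}(1+\e)}\rp)$.

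The main obstacle is to rigorously justify the losslessness of the first bundling step: although intuitively the DP treats all of $H_i$ as a single expectation, one must formally argue that the threshold $\theta_K$ appearing in~\eqref{eq:DP} at every reachable DP state satisfies $\theta_K \leq 1$, so that every atom in $H_i$ is indeed accepted whenever drawn. This follows by induction on the size of the sub-instance: the optimal reward of any sub-instance is upper-bounded by that of the full instance, which is at most $\E{}{\max_j X_j}=1$ by the scaling, hence $\theta_K \leq 1$ throughout the DP and the substitution $X_i \leftrightarrow X_i'$ preserves~\eqref{eq:DP} identically. Once this is in place, the rest of the proof is a direct application of the machinery already developed for constant-size support, with $c$ now depending on $\e$ rather than being an absolute constant.
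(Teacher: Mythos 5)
Your proposal is correct and follows essentially the same route as the paper: the lossless bundling of each variable's mass above $1$ into the atom $\E{}{X_i \mid X_i>1}$ with probability $\Pr{}{X_i>1}$, justified by the fact that the DP thresholds never exceed $\opt\leq 1$ (the paper's Fact~\ref{cl:algo_thresholds} and Lemma~\ref{lem:bundling}), followed by zeroing out values below $\e$, discretizing $[\e,1/\e^2]$ to powers of $(1+\e)$, and invoking the constant-support machinery with $c=O\lp(\frac{\log 1/\e}{\log(1+\e)}\rp)$. The only cosmetic difference is that you inline the high-value compression to $v_{\max}$ (Claim~\ref{cl:single-large-value}) and the small-atom removal, which the paper delegates to its proof of Theorem~\ref{thm:QPTAS_2_point}.
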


\paragraph{Bundling \& Discretization:} recall that the normalized optimal satisfies $\opt \leq 1$ (from Claim~\ref{cl:bounded_opt_off}). For each variable $X_i$ we create a transformed variable $X'_i$ that is exactly the same as $X_i$ for values less than $1$, but we collapse the mass above $1$ to a single point with value the mean value of $X_i$ above $1$. Formally
\begin{equation}\label{eq:qptas_general_bundling}
X'_i = \begin{cases}
    x & \text{when } x \leq 1, \text{ w.p. }\Pr{X_i\sim \dist_i}{X_i=x}\\
    \E{X_i\sim \dist_i}{X_i | X_i>1} & \text{w.p. }\Pr{X_i\sim \dist_i}{X_i>1}.
\end{cases}
\end{equation}

The following claim formalizes the key observation that this bundling uses in order to avoid losing any gain, and Lemma~\ref{lem:bundling} shows that this bundling does not change the value of the optimal DP.

\begin{fact}\label{cl:algo_thresholds}
    The optimal DP for prophet secretary $\opt$, will not set a threshold more than $\opt$ at any step.
\end{fact}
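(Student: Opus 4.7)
The plan is to express the threshold that the DP~\eqref{eq:DP} sets at every step as the optimal expected reward of a prophet secretary sub-instance on a subset of the original variables, and then to establish monotonicity of this optimal reward with respect to the variable set.

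First, I would unpack the threshold. At any reachable DP state, with remaining (not-yet-seen) variable set $\boxes' \subseteq \boxes$, the threshold against which the current realization is compared is
\[
\theta_{|\boxes'|} \;=\; \frac{1}{|\boxes'|} \sum_{i \in \boxes'} \opt\lp( X_i \mid \boxes' \setminus \{i\} \rp).
\]
Since within $\boxes'$ the next arriving variable is uniform, this quantity is precisely the expected reward of running the optimal online policy on the sub-instance consisting of just the variables indexed by $\boxes'$ (arriving in uniformly random order). Denote this sub-instance's optimal reward by $\widetilde{\opt}(\boxes')$; in particular the overall optimum is $\opt = \widetilde{\opt}(\boxes)$, and every DP threshold equals $\widetilde{\opt}(\boxes')$ for the current remaining set $\boxes'$.

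Second, I would prove that $\widetilde{\opt}(\cdot)$ is monotone non-decreasing in the variable set: for every $R \subseteq T$, $\widetilde{\opt}(R) \leq \widetilde{\opt}(T)$. The argument is a simulation. Consider the policy $\pi$ on the sub-instance $T$ that immediately discards any arriving variable in $T \setminus R$ and, on the variables that belong to $R$, acts according to the optimal policy for the sub-instance $R$. Because the arrival order on $T$ is uniformly random, restricting it to the sub-sequence of arrivals belonging to $R$ induces a uniformly random permutation of $R$. Hence $\pi$'s expected reward on $T$ is exactly $\widetilde{\opt}(R)$, and as $\widetilde{\opt}(T)$ is the maximum over all online policies on $T$, we obtain $\widetilde{\opt}(T) \geq \widetilde{\opt}(R)$.

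Combining the two steps, every threshold set by the DP on the original instance equals $\widetilde{\opt}(\boxes')$ for some $\boxes' \subseteq \boxes$, and by monotonicity this is at most $\widetilde{\opt}(\boxes) = \opt$. The only mildly non-trivial ingredient is the monotonicity step, specifically the observation that a uniformly random permutation of $T$ projects onto a uniformly random permutation of $R$; this is a standard property of uniform orderings, so the ``obstacle'' is really just to state the coupling argument cleanly rather than to prove anything deep.
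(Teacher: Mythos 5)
Your proposal is correct and follows essentially the same route as the paper: the paper's one-line justification likewise identifies each threshold with the optimal reward of the sub-instance on the remaining variables and implicitly invokes monotonicity of that reward in the variable set. You merely make the monotonicity step explicit via the (correct) observation that a uniformly random order on $T$ restricts to a uniformly random order on $R \subseteq T$, which the paper leaves unstated.
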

To see why this holds, observe that at any point the threshold is the expected gain in the subproblem with one less variable. If this threshold was more than $\opt$, it would imply that the subproblem obtained a value higher than $\opt$.

\begin{restatable}{lemma}{qptasBundling}\label{lem:bundling}
Using the bundling described in \eqref{eq:qptas_general_bundling}, the value of the optimal DP solution does not change.
\end{restatable}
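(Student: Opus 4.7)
The plan is to prove the claim by induction on $|\boxes|$, showing the stronger statement that for every index $i$ and every subset of indices $\boxes$, we have $\opt(X_i \mid \boxes) = \opt(X'_i \mid \boxes')$, where $\boxes'$ denotes the result of bundling every variable in $\boxes$ according to \eqref{eq:qptas_general_bundling}. Taking $\boxes = [n]\setminus\{i\}$ and averaging over the choice of first variable $i$ then yields the lemma. The \emph{base case} $\boxes = \emptyset$ is immediate: $\opt(X_i\mid\emptyset) = \E{}{X_i}$, and by construction $\E{}{X'_i}$ equals $\int_0^1 x\,d\dist_i(x) + \Pr{}{X_i>1}\cdot\E{}{X_i\mid X_i>1} = \E{}{X_i}$, so both sides agree.

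\textbf{Inductive step.} Assume the claim holds for every subset of size less than $|\boxes|$. By the DP recursion~\eqref{eq:DP}, $\opt(X_i\mid\boxes) = \E{}{\max(X_i,\theta)}$, where
$$\theta = \frac{1}{|\boxes|}\sum_{j\in\boxes}\opt(X_j\mid\boxes\setminus\{j\}),$$
and analogously for the bundled instance with threshold $\theta'$. The inductive hypothesis applied term by term gives $\theta = \theta'$. Moreover, since the bundling preserves means of sub-instances (again by induction), the bundled and original sub-instances have the same normalized OPT, which is at most $1$ by Claim~\ref{cl:bounded_opt_off}; so Fact~\ref{cl:algo_thresholds} yields $\theta = \theta' \le 1$.

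\textbf{Matching the expectations.} It remains to show $\E{}{\max(X_i,\theta)} = \E{}{\max(X'_i,\theta)}$. Split each expectation by the event $\{$realization $>1\}$ vs.\ $\{$realization $\le 1\}$. On $\{X_i\le 1\}$ the distributions of $X_i$ and $X'_i$ coincide, so those contributions are equal. On $\{X_i>1\}$, because $\theta\le 1 < X_i$, we have $\max(X_i,\theta) = X_i$, and the contribution equals $\E{}{X_i\cdot\mathbbm{1}[X_i>1]} = \Pr{}{X_i>1}\cdot\E{}{X_i\mid X_i>1}$. On the bundled side, all mass above $1$ has been collapsed to the single atom at $\E{}{X_i\mid X_i>1}\ge 1$ with the same total probability; since this atom is $\ge 1 \ge \theta$ it is also accepted, and the contribution again equals $\Pr{}{X_i>1}\cdot\E{}{X_i\mid X_i>1}$. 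The two expectations agree, completing the induction.

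\textbf{Main subtlety.} The delicate point is to ensure that the threshold $\theta$ truly stays below $1$ throughout the recursion, so that the ``collapse'' above $1$ never interacts with the $\max$. This relies on Fact~\ref{cl:algo_thresholds} applied at every recursive level, which in turn requires that the (sub-)OPT of the bundled instance is also at most $1$. The induction supplies this automatically: the hypothesis gives equality of sub-OPTs, so the normalization $\opt\le 1$ inherited from Claim~\ref{cl:bounded_opt_off} transfers to every bundled sub-instance. Once this invariant is in place, the rest of the argument is just a mean-preserving rearrangement of probability mass in a region where the optimal policy is indifferent to the exact value realized.
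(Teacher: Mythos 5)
Your proof is correct and follows essentially the same route as the paper's: both arguments proceed by induction on the instance size, use Fact~\ref{cl:algo_thresholds} together with the normalization $\opt\le 1$ to guarantee that every threshold stays below $1$, and then observe that conditional on a realization above $1$ the $\max$ ignores the threshold, so the mean-preserving collapse leaves every expectation unchanged. The only cosmetic difference is that you bundle all variables simultaneously within one induction, whereas the paper transforms one variable at a time and iterates; this does not change the substance of the argument.
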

We defer the proof of the Lemma to Section~\ref{apn:qptas_general} of the Appendix.
\noindent
After transforming the variables as described above, for each variable, we use the following discretization process.

\begin{itemize}
    \item \textbf{Step 1}: We make $0$ all the values of the support that are $v_i<\e$.
    \item  \textbf{Step 2}: We discretize to powers of $(1+\e)$ all values in $v_i\in [\e, 1/\eps^2]$.
\end{itemize}

\begin{proof}[Proof of Theorem~\ref{thm:QPTAS_general}]
This case is reduced to the constant support size case of Theorem~\ref{thm:QPTAS_2_point} after the above steps.

\paragraph{Bounding the error:}
    Observe initially that from Lemma~\ref{lem:bundling} there is no loss incurred from the bundling process. 
    Using the same argument as Case 2 in the 2-point case, we can discard all values at most $\e$. Through this process, we only lose $O(\e)\opt$. For step 2, using Lemma~\ref{lem:value_error_propagation} we again incur loss of $O(\e)$.
    
    \paragraph{Counting the support:} From our discretization process, Step 2 creates $\frac{\log 1/\e}{\log (1+\e)}$ different values.

    Therefore each variable can have a support of size at most $\frac{\log 1/\e}{\log (1+\e)} +2$, where the extra $2$ is from the point of mass above $1$ and Case 1 (if the initial variable did not have $0$ in the support). 
    Now using Theorem~\ref{thm:QPTAS_2_point} with $c=O\lp( \frac{\log 1/\e}{\log (1+\e)}\rp)$ we get the theorem.
\end{proof}

Observe at this point that the techniques used also hold for continuous distributions. In particular, steps 1 and 2 from above are well-defined for continuous distributions. By applying these preprocessing steps, we have transformed the continuous distribution to a discrete one with constant support size. Moreover, it is easy to check that the same arguments as in the case of discrete distributions apply for bounding the error. Then, we can continue with the steps in the constant-support case (see \cref{subsec:qptas_2_point}) and obtain the same approximation guarantee. Since in the PTAS we start from the discretization done in the QPTAS, it is immediate that the PTAS results and techniques also hold for continuous distributions.

\section{PTAS}\label{sec:ptas}
In this section, we propose a PTAS for calculating an approximation of the optimal reward of the prophet secretary problem. In the QPTAS algorithm, we discretized the probabilities (and values) of the supports of each random variable to powers of $(1+\eps)$. However, even after the ``bundling'' (via rounding) process, for each random variable the number of possible realization probabilities of each support value can be $\Omega(\log n)$, and the support size of each variable can be a constant. Thus, the total number of variable groups can be $\polylog(n)$ and there can be more than $\poly(n)$ states in the dynamic program, which means that a polynomial time approximation scheme is not possible with only the discretization in previous sections. 
In particular, we need to deal with the variables with realization probabilities that are small (say $\poly(\eps)$), but that there could be enough of them to make a non-negligible contribution to the optimal reward. This is problematic even in a simple setting where each variable is drawn from a two point distribution. 

To solve this, we observe that for a sequence of random variables with small realization probability, if their total realization probability is small ($O(\e^2)$), we can make a decision for the sequence as a whole: \textbf{even if we have seen the realization of all such random variables, we do not gain much compared to the original game where we need to make a decision at the arrival of each variable.} This technique, called \emph{frontloading} is the key for our proof, and allows us to obtain a PTAS for the general Prophet Secretary problem.

\begin{restatable}{theorem}{ptasMain}[PTAS for constant size support]\label{thm:PTAS_main} 
There exists a $(1-\e)$-approximation algorithm for the prophet secretary problem against the online optimal that runs in time $O(n^{(1/\eps)^{\poly(1/\eps)}})$. In the special case of 2-point distributions with $v_1 = 0$ and $v_2 > 0$, we can get running time of
$O(n^{poly(\frac{1}{\eps})})$. 
\end{restatable}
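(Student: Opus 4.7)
The plan is to bootstrap from the QPTAS discretization of Section~\ref{subsec:qptas_2_point} and \ref{subsec:qptas_general}, which already reduces the instance to one where each variable's values and realization probabilities are all powers of $(1+\e)$ in bounded ranges. After QPTAS preprocessing, the values live in an $f(\e)$-sized set, but the probabilities still span $\Theta((\log n)/\e)$ levels; this $\log n$ dependence is the sole remaining obstruction to a PTAS. The goal is therefore to cut the number of distinct probability levels per value down to $\poly(1/\e)$ independent of $n$. I will first sketch the two-point case and then extend to constant support.

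In the two-point case, after QPTAS preprocessing each variable has the form $(v,p)$ with $v$ in an $f(\e)$-sized set $V$ and $p$ a power of $(1+\e)$. Partition variables by their value $v$, and within each value class call a variable \emph{heavy} if $p\ge \e^2$ and \emph{light} otherwise. Heavy variables occupy only $O(\log(1/\e^2)/\log(1+\e))=\poly(1/\e)$ discrete probability levels, so they already fit into $\poly(1/\e)$ DP groups and need no further processing. For the light variables I apply the novel frontloading step: fix $v$, list the light variables of value $v$ in the deterministic index order used by the DP, walk through them accumulating probabilities, and cut the list into consecutive \emph{bundles} whose aggregated realization probability lies in $[\e^2,2\e^2]$ (plus a leftover bundle at the end). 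Replace each bundle of $k$ originals by a single surrogate binary variable $(v,\sum_{j}p_j)$, offered as an outside option at every arrival slot inside the bundle's window, i.e.\ accepting the surrogate is a valid DP action throughout the window with a single combined realization event.

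The main obstacle is proving that this frontloading alters the optimal reward by at most $O(\e)\opt$. The argument combines two observations. First, inside a bundle of total realization probability $O(\e^2)$, the probability that two or more of its component variables realize simultaneously is $O(\e^4)$, so to leading order at most one realization occurs; conditioned on this, the only information the DP needs from the bundle is \emph{whether} a realization happened and \emph{when}, exactly what the outside-option surrogate encodes. Second, the optimal DP thresholds $\theta_k$ decrease monotonically in $k$, so collapsing the bundle's internal decisions into a single threshold read at the start of the window is a valid upper bound on what the DP could have achieved, and offering the surrogate at every slot inside the window restores the lower bound up to the $O(\e^4)$ double-realization correction. Summed over the $O(1/\e^2)$ total light mass per value class and then over the $|V|=f(\e)$ value classes, these errors telescope into the desired $O(\e)\opt$ loss. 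After frontloading, surrogate probabilities are themselves powers of $(1+\e)$ in $[\e^2,2\e^2]$, contributing only $\poly(1/\e)$ extra types per value, so Claim~\ref{cl:DP_group_size} gives runtime $n^{\poly(1/\e)}$ as claimed for the two-point case.

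For the general constant-support case with support size $c$, I decompose each variable $X_i$ with support $\{(v_j,p_{i,j})\}_{j\le c}$ into $c$ \emph{virtual} binary variables, the $j$-th giving $v_j$ with probability $p_{i,j}$ and zero otherwise, and argue via Lemma~\ref{lem:no_propagation_probability_error} (and the QPTAS discretization of the $v_j$'s into an $f(\e)$-sized set) that conditioning on the event that at most one virtual variable realizes — which holds with probability $1-O(c^2 \max_j p_{i,j})$ — makes the virtual bundle distributionally equivalent to $X_i$ up to an $O(\e)\opt$ global error, absorbed into the budget using the same Claim~\ref{cl:small_mean_vanishing_gain}-style accounting that retired low-mean variables in the QPTAS. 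The virtual binaries are then funneled through the heavy/light/frontloading pipeline above. The group count grows by a factor that is polynomial in the number of value levels $c=O(\log(1/\e)/\log(1+\e))$, so the total group count becomes $(1/\e)^{\poly(1/\e)}$, and plugging into Claim~\ref{cl:DP_group_size} yields the final $n^{(1/\e)^{\poly(1/\e)}}$ runtime stated in Theorem~\ref{thm:PTAS_main}.
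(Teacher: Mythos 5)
Your high-level idea---aggregate the small-probability variables of each value class into surrogates that act as outside options, leaving only $\poly(1/\e)$ probability types---is the right one, but the construction as stated has a genuine gap at its core. You form bundles by walking through the light variables ``in the deterministic index order used by the DP'' and then offer the surrogate ``at every arrival slot inside the bundle's window.'' In the prophet secretary problem the variables arrive in uniformly random order, so a bundle defined by index does not occupy a contiguous window of arrival slots; its members are scattered randomly through the sequence, and neither the window nor the claim that the surrogate realizes with the fixed probability $\sum_j p_j$ during that window is well defined. The paper resolves this by doing the opposite: it partitions the \emph{arrival positions} into $\e^{-4}$ consecutive blocks $\Bv_1,\dots,\Bv_{\e^{-4}}$ (plus an ignorable prefix $\Bv_0$), and then proves a concentration lemma (a hypergeometric Chebyshev bound, Claim~\ref{lem:smalErrSamples}, packaged as game $G_2$) showing that the total mass $\sum q_i$ of small-probability variables landing in each time-block is within $(1\pm\e)$ of its expectation $\Qov$ with high probability. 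Only after this concentration step can the block's contribution be frontloaded as an outside option with an essentially fixed realization probability (games $G_4$ and $G_5$). This step is missing from your argument and cannot be skipped. Relatedly, your error budget sums ``over the $O(1/\e^2)$ total light mass per value class,'' but the total light mass $\Qsvo$ is unbounded in $n$ (there can be $n$ variables each with $q_i<\e^{20}$); the paper caps it at $\Qsv=\min(\e^{-1},\Qsvo)$ and discards the excess via Observation~\ref{skipping initial} (if enough mass remains downstream, skipping a variable costs only an $\e$ factor), which is why block $\Bv_0$ exists and can be zeroed out.

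Your reduction for the general constant-support case is also too lossy as stated. Decomposing \emph{every} support value of $X_i$ into a virtual binary and conditioning on at most one realizing costs roughly $\binom{c}{2}(\max_j p_{i,j})^2$ per variable, which is $\Theta(1)$ when some support value has constant probability, and this cannot be charged to an $O(\e)\opt$ budget over $n$ variables. The paper's game $G_3^*$ only splits off the \emph{small-probability} support values (those with $q^v_i<\e^{20}$) into binaries and keeps the heavy part intact as a single ``special truncation'' variable $\Xl_i$, with the split-off pieces arriving consecutively; the collision probability is then at most $\e^{18}$ per bundle, which is affordable. Your heavy/light threshold of $\e^2$ for the binary case is similarly off: the paper uses $\e^{20}$ precisely so that the per-block collision and coupling errors (of order $\Qov\le\e^3$) are negligible after being summed over $\e^{-4}$ blocks and $\tilde O(1/\e)$ values, and so that the ``heavy'' side still has only $\tilde O(1/\e)$ probability levels. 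The monotone-threshold observation you invoke does appear in the paper's Case~2(b) analysis of Lemma~\ref{thm:equal-G3-G4}, but only as one ingredient in a careful coupling of the decision trees of $G_3$ and $G_3^v$, not as a standalone justification for collapsing a bundle's decisions.
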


In the following sections, we formally show our main theorem. There is a summary of our notations in Table~\ref{table:cheatsheet} in the Appendix for convenience.

\subsection{A PTAS for binary distributions}\label{subsec: ptas binary small value}
We start with a simpler but representative case, where each variable's distribution $\dist_i$ is again defined as
\[X_i = 
\begin{cases}
v_i & \text{, w.p. } p_i\\
0 & \text{, w.p. } 1-p_i.
\end{cases}
\]
Using QPTAS transformations, we can assume:
 
\textit{Normalization. } Every $X_i$ either has $v_i\leq1/\eps^2$, or $v_i=v_{\max}$, where $v_{\max}$ is the largest possible support value of the original instance. Additionally, we may assume $\opt\in[0.669,1]$ ( Claim~\ref{cl:bounded_opt_off}).

\textit{Discretization. }
 There are only $\tilde{O}(1/\eps)$ different values of $v_i$ being either $v_{\max}$ or powers of $(1+\eps)$ between $\eps$ and $1/\eps^2$, losing at most $\eps$ fraction of optimal reward (Lemma \ref{lem:value_error_propagation}). We can also assume that every $p_i$ is a power of $(1+\eps)$ (Lemma \ref{lem:no_propagation_probability_error}). 

For every $v\in\mathbb{R}$, let $\Xs_{v}$ denote the set of random variables with the two support values being $0$ and $v$. Some of the proofs in this section are deferred to Section~\ref{apn:ptas_binary} of the Appendix.

\bigskip\noindent
\textbf{Small probability variables and special variables.}
 We group random variables that have ``small realization probabilities''. Let $q_i=-\ln(1-p_i)$. 
For any set of random variables $\Xs'_v=\{X_{i_1},X_{i_2},\cdots,X_{i_k}\}\subseteq \Xs_v$, the probability that $v$ gets realized from one of the random variables is exactly
\[1-\prod_{i_j}(1-p_{i_j})=1-\exp\left(\sum_{i_j}\ln(1-p_{i_j})\right)=1-\exp\left(-\sum_{i_j}q_{i_j}\right).\]
Although $q_i$ is not the true probability of realization, when $p_i$ is small enough, $q_i$ is a good approximation of $p_i$, and for any set $\Xs'_v$, $\sum_i q_i$ is a good approximation of the probability that $v$ gets realized in one of the variables in $\Xs'_v$. This is because for any $p_i<\eps$ we have that $p_i<q_i<p_i(1+\eps).$ This holds by writing down the Taylor series of $q_i$ based on $p_i$.

\paragraph{First Preprocessing.}
We initially split the variables into different groups, shown in Figure~\ref{fig:preprocessing_PTAS}. For each different support value $v$, the variables in $\mathbf{X}_{v}^*$ are ignored if their total probability is at most $\e^{10}$. Variables in $\mathbf{X}_v^+$ are discretized and treated like separate variables.

Specifically, for each support value $v\leq 1/\eps^2$, define a set of ``small probability variables'' $\Xs^*_v$ as
\[\Xs^*_v=\{X_i | X_i\in \Xs_v,\ s.t.\ q_i<\eps^{20}\}.\]
In other words, $\Xssv$ contains all random variables $X_i$ with support values $0$ and $v$, such that $q_i<\eps^{20}$.  
Since the probabilities of realization have been discretized to powers of $(1+\eps)$, the number of distinct $p_i$ with $-\ln(1-p_i)=q_i\geq \eps^{20}$ is only a constant (actually $\tilde{O}(1/\eps)$). 
Define $\Xslv=\Xsv\setminus\Xssv$, then it contains only $\tilde{O}(1/\eps)$ distinct random variables. We call $\Xslv$ ``\textbf{special variables} with support $v$'', and $\Xsl=\bigcup_{v}\Xslv$ the set of all special variables. There are only a constant number of distinct random variables in $\Xsl$.
Let $\Qsvo=\sum_{X_{i}\in \Xssv}q_i$ denote the sum of $q_i$ for all small-probability random variables with support values $0$ and $v$. Finally, let $\Qmax=\sum_{i: X_{i}\in X_{v_{\max}}}q_i$ and 
$\Xs^*_{v_{\max}}=\{X_i|X_i\in \Xs_{v_{max}},\ s.t.\ q_i<\eps^{20}\Qmax\}$.

\begin{restatable}{lemma}{qvLb}\label{lem:qv-lb}
For $v\leq1/\eps^2$, if $\Qsvo<\eps^{10}$ and for $v_{max}\geq \eps^{-2}$, if $\overline{Q}^*_{v_{max}}<\eps^{10}Q_{max}$, then ignoring the variables in $\Xssv$ reduces the optimal reward by at most $\eps^8$.
\end{restatable}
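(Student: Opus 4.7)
The approach is to (i) bound the ``potential contribution'' of the variables in $\Xssv$ by an omniscient-max upper bound, and (ii) use a simulation-based coupling to translate that into a bound on how much $\opt$ can decrease when we drop these variables.

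For step (i), since the realizations are independent,
\[
G_v \;:=\; \E{}{\max_{X_i\in\Xssv} X_i} \;\leq\; v\cdot\Pr{}{\exists\,X_i\in\Xssv:\ X_i=v} \;=\; v\bigl(1-e^{-\Qsvo}\bigr) \;\leq\; v\cdot\Qsvo.
\]
For $v\leq 1/\eps^2$ with $\Qsvo<\eps^{10}$ this gives $G_v<\eps^8$ immediately. For $v=v_{\max}\geq 1/\eps^2$ with $\overline{Q}^*_{v_{\max}}<\eps^{10}\Qmax$, we instead get $G_{v_{\max}}\leq v_{\max}\cdot\eps^{10}\Qmax$, so I need the auxiliary bound $v_{\max}\Qmax=O(1)$. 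By the normalization in Claim~\ref{cl:bounded_opt_off} we have $\E{}{\max_i X_i}=1$; on the other hand $\E{}{\max_i X_i}\geq v_{\max}(1-e^{-\Qmax})$. In the relevant regime $v_{\max}\geq 1/\eps^2$ with $\eps$ small, this forces $\Qmax$ itself to be small, so using $1-e^{-x}\geq x/2$ for $x\in[0,1]$ we conclude $v_{\max}\Qmax\leq 2$ and hence $G_{v_{\max}}\leq 2\eps^{10}\leq\eps^8$.

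For step (ii), let $S=\Xssv$ and let $\pi^*$ be the online-optimal policy for $X$, with stopping time $\tau^*$ (convention $X_{\tau^*}=0$ if no variable is accepted). I would build a valid online policy $\pi'$ for $X\setminus S$ by simulation: as the real variables in $X\setminus S$ arrive in their uniform random order, $\pi'$ lazily samples fresh realizations of the $S$-variables together with a uniformly random interleaving of $S$ into the arrival stream, feeds the combined stream to a simulated copy of $\pi^*$, and accepts the current real variable iff $\pi^*$ would stop at it; if $\pi^*$'s simulated stopping decision falls on an $S$-variable, $\pi'$ commits to never accepting. A symmetry check shows that restricting a uniformly random permutation of all $n$ variables to $X\setminus S$ yields a uniformly random order on $X\setminus S$, so $\pi'$ is indeed an online policy, and by construction $\E{}{\text{reward of }\pi'}=\E{}{X_{\tau^*}\mathbf{1}[\tau^*\notin S]}$. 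Combining with the trivial decomposition
\[
\opt(X) \;=\; \E{}{X_{\tau^*}\mathbf{1}[\tau^*\in S]} + \E{}{X_{\tau^*}\mathbf{1}[\tau^*\notin S]} \;\leq\; G_v + \opt(X\setminus S)
\]
gives $\opt(X)-\opt(X\setminus S)\leq G_v\leq \eps^8$.

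The main technical obstacle I expect is the auxiliary bound $v_{\max}\Qmax=O(1)$ in the $v_{\max}$ branch, since there the probability–value tradeoff has to be pinned down via the prophet-level normalization rather than directly by the discretization (and one must be explicit that $\eps$ is below an absolute constant so that we are indeed in the small-$\Qmax$ regime; for large $\eps$ the lemma is vacuous). A secondary care point is checking that the simulated $\pi'$ is truly implementable online, i.e.\ that the interleaved $S$-realizations can be generated one at a time, only as needed, without peeking ahead.
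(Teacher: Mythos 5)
Your proposal is correct and follows essentially the same route as the paper: bound the contribution of $\Xssv$ by $v\cdot\Pr{}{\exists X_i\in\Xssv: X_i=v}=v(1-e^{-\Qsvo})\leq v\Qsvo$, split into the cases $v\leq 1/\eps^2$ and $v=v_{\max}$, and in the latter use $\opt\leq 1$ to get $v_{\max}\Qmax=O(1)$. Your step (ii) simulation/coupling argument and the explicit small-$\Qmax$ justification of $v_{\max}\Qmax\leq 2$ merely make rigorous what the paper asserts in one line, so the substance is identical.
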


Using the above lemma, we can replace all variables in $\Xssv$ to zero values, where $\Qsvo<\eps^{10}$. Since there are only $\tilde{O}(1/\eps)$ distinct support values, the total loss in reward is negligible. From now on we assume $\Qsvo\geq\eps^{10}$ for every $v\leq 1/\eps^2$ and that $\overline{Q}^*_{v_{max}} \geq \eps^{10}Q_{max}$.

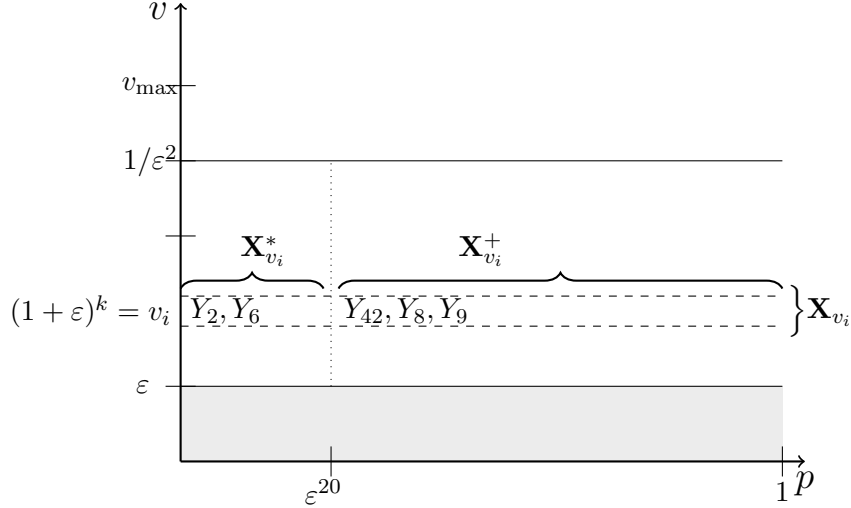
\begin{figure}[H]
    \centering
\begin{tikzpicture}

\pgfmathtruncatemacro{\casesL}{2};
\pgfmathtruncatemacro{\totalL}{8};
\pgfmathtruncatemacro{\vi}{2.5};

\tikzset{mycolor/.style={fill=gray!30, opacity=.5}}
\fill[mycolor] (0,0)--(0,1)--(\totalL,1)--(\totalL,0);

\draw[->, thick] (0,0)--(0,\totalL-1.9) node[]{};
\draw[-] (0,1)--(\totalL,1) node[]{};
\draw[-] (0,4)--(\totalL,4) node[]{};

\draw[-, dotted] (\casesL,1)--(\casesL,4) node[]{};
\draw[->, thick] (0,0)--(\totalL+0.3,0) node[]{};

\node[] at (-0.4,5) {$v_{\max}$};
\draw[-] (-0.2, 5) --(0.2,5);
\node[] at (-0.4,4) {$1/\e^2$};
\draw[-] (-0.2, 4) --(0.2,4);
\draw[-] (-0.2, 3) -- (0.2, 3);
\node[] at (-1.2,\vi) {$(1+\e)^k=v_i$};
\node[] at (-0.5,1) {$\e$};
\draw[-] (-0.2, 1) --(0.2,1);

\draw[-, dashed] (0,\vi-0.2)--(\totalL,\vi-0.2) node[]{};
\draw[-, dashed] (0,\vi+0.2)--(\totalL,\vi+0.2) node[]{};
\node[] at (0.6,\vi) {$Y_{2}, Y_6$};
\node[] at (\casesL+1,\vi) {$Y_{42}, Y_8, Y_9$};
\node[] at (\totalL+0.5,\vi) {${ \Big \} } \mathbf{X}_{v_i}$};

\node[] at (-0.3,\totalL-2) {{\Large $v$}};
\node[] at (\totalL+0.3,-0.3) {{\Large $p$}};
	
\node[] at (\casesL-0.1,-0.4) {$\e^{20}$};
\draw[-] (\casesL, -0.2) -- (\casesL, 0.2);
\draw[-] (\totalL, -0.2) -- (\totalL, 0.2);
\node[] at (\totalL, -0.4) {$1$};

\draw [decorate,black,thick,decoration={brace,amplitude=6pt}] (0,\vi+0.3) -- (\casesL-0.1,\vi+0.3) node[label={[xshift=-0.8cm, yshift=0.01cm]$\mathbf{X}^*_{v_i}$}]{};

\draw [decorate,black,thick,decoration={brace,amplitude=6pt}] (\casesL+0.1,\vi+0.3) -- (\totalL,\vi+0.3) node[label={[xshift=-4cm, yshift=0.01cm]$\mathbf{X}^+_{v_i}$}]{};

\end{tikzpicture}
\caption{First preprocessing of PTAS. In this example, variables $Y_2, Y_6, Y_{42}, Y_8$ and $Y_9$ have support value $v_i = (1+\e)^k$, while $Y_2$ and $Y_6$ have probability of being non-zero at most $\e^{20}$. If the total probability of variables in $\mathbf{X}_{v_i}^*$ is at most $\e^{10}$, the variables are ignored for the rest of the game.}
\label{fig:preprocessing_PTAS}
\end{figure}

\paragraph{Second Preprocessing.}
Of the variables in $\mathbf{X}_v^*$ that remained we will split them in blocks according to their probability as shown in Figure~\ref{fig:blocks_PTAS}. 
We begin by introducing some notation; we define $\Qsv=\min(\eps^{-1},\Qsvo)$ and $\Qov=\eps^4\Qsv$, and finally $\deltav=\frac{\Qov}{\Qsvo}$. 

Let $Y_1,Y_2,\ldots,Y_n$ be the random realization of all variables, with $Y_i$ being the $i$'th random variable that arrives. 
For each support value $v>0$, partition \textbf{all} random variables 
to $\eps^{-4}+1$ blocks $\Bv_0, \Bv_{1}, \ldots, \Bv_{\eps^{-4}}$ such that each block contains a sequence of consecutive variables.
In particular, $\Bv_0$ contains the first $n\lp(1-\frac{Q^*_v}{\Qsvo}\rp)=n-\deltav n/\eps^4$ arrived variables (when $Q^*_v=\Qsvo$ we have $\Bv_0 =\emptyset$); each of $\Bv_{1},\ldots,\Bv_{\eps^{-4}}$ contains $\deltav n$ consecutive variables. 

\begin{figure}[H]
    \centering
    \begin{tikzpicture}
\pgfmathtruncatemacro{\totalL}{8};
\pgfmathtruncatemacro{\step}{1};
\pgfmathtruncatemacro{\blockZero}{3};

\pgfmathtruncatemacro{\height}{1};

\draw [draw=black, thick] (0,0) rectangle (\totalL,\height);

\fill[pattern={Lines[angle=45, distance=6mm, line width=13mm, xshift=5mm]}, pattern color=gray!30,opacity=.5] ((0,0)--(\blockZero,0)--(\blockZero,\height)--(0,\height)--(0,0);

\draw[-] (\blockZero,0) -- (\blockZero,\height) ;

\draw[-] (\step+\blockZero,0) -- (\step+\blockZero,\height) ;

\draw[-] (\totalL-\step,0) -- (\totalL-\step,\height) ;

\node[] at (-1.7*\step, 0.5*\height) {Support value $\mathbf{v}$};

\node[] at (0.5*\blockZero, 1.3*\height) {$\mathbf{B}_0^\mathbf{v}$};

\node[] at (0.5*\step+\blockZero, 1.3*\height) {$\mathbf{B}_1^\mathbf{v}$};

\node[] at (0.7*\totalL,0.5*\height) {$\dots$};

\node[] at (\totalL-0.5*\step, 1.3*\height) {$\mathbf{B}_{\e^{-4}}^\mathbf{v}$};

\draw [decorate,black,thick,decoration={brace,amplitude=6pt, mirror}] (0,0) -- (\blockZero,0) node[label={[xshift=-1.2cm, yshift=-1cm]{\small $n\text{-}\delta_v n/\e^4$}}]{};

\draw [decorate,black,thick,decoration={brace,amplitude=6pt, mirror}] (\blockZero,0) -- (\blockZero+\step,0) node[label={[xshift=-0.4cm, yshift=-1cm]$\delta_v/n$}]{};

\draw [decorate,black,thick,decoration={brace,amplitude=6pt, mirror}] (\totalL-\step,0) -- (\totalL,0) node[label={[xshift=-0.3cm, yshift=-1cm]$\delta_v/n$}]{};

\end{tikzpicture}
    \caption{Second Preprocessing of the PTAS. For each support value $\mathbf{v}$ we split the variables into blocks $\Bv_i$. If $\Bv_0$ is not empty, we ignore it at game $G_3$.}
    \label{fig:blocks_PTAS}
\end{figure}
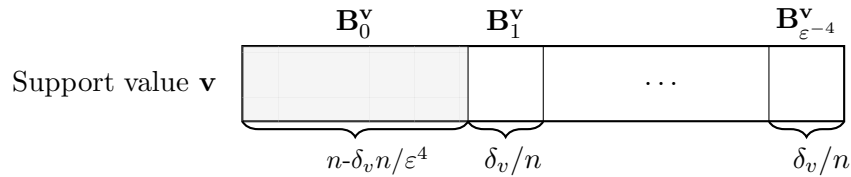

The intuition behind this partition is as follows. 
For all variables in $\Xssv$, the total $q_i$ of all variables in the last $\eps^{-4}$ blocks is roughly $Q^{*}_v$ with high probability, thus all variables in the first block $\Bv_{0}$ with small realization probability and support value 0 and $v$ can be ignored.
Furthermore, for each block $\Bv_{k}$, the total $q_i$ of variables in $\Xssv$ is roughly $\eps^{4}Q^*_v$, which is small. 
This means that even if all variables with small realization probability and support value $0$ or $v$ in each block are realized together, it does not influence the decision with high probability. 
This way, the small probability random variables are grouped, and we don't have to remember the identity of each individual random variable. For different $v$, the partition of the blocks can be different.\\

\textbf{High level proof sketch.}
After performing the two preprocessing steps. the main idea for our proof is to reduce our original prophet secretary problem to a problem with a smaller optimal DP solution, without losing too much. Our process is the following.

\begin{enumerate}
    \item \textbf{Game $G_1$.} We begin our reduction process, starting by game $G_1$ which corresponds to the original Prophet Secretary problem. The whole reduction process is shown in Figure~\ref{fig:summary_ptas}.
    \item \textbf{Game $G_2$.} In this game we remove the high variance variables. Specifically, we make sure that the realization prob of all variables in a block is \textbf{always} within ($1\pm \e)$ of their expectation. 
    
    \item \textbf{Game $G_3$.} We remove the small variables of block $\mathbf{B}_0^{\mathbf{v}}$ for each support value $v$. This will not decrease the reward by a factor more than $\epsilon$ of optimal, because by game $G_2$ we know that the probability of realization of a variable within the blocks except block zero is still considerable. This step is important, as it gives us an interval in which the realization probability of $v$ will be. 
    
    \item \textbf{Game $G_4$.} This is potentially the most crucial step, called \emph{frontloading}. In each block, and for each different support value $v$ we ``move" the uncertainty of the corresponding variables to the beginning of the block. Specifically, we flip a coin at the beginning of a new block, and if the flip succeeded we keep the value as an outside option.
    
    \item \textbf{Game $G_5$.} all outside options for $v$ have the same probability of realization. By doing this, we do not have to keep track of different probabilities for each outside option.
\end{enumerate}

\begin{figure}[H]
    \centering
    \pgfmathsetmacro{\dist}{2.5}
\pgfmathsetmacro{\distV}{2}
\begin{tikzpicture}

	\node (g1) at (0,0){{\LARGE $G_1$}};
 	\node (g2) at (\dist,0){{\LARGE $G_2$}};
  	\node (g3) at (2*\dist,0){{\LARGE $G_3$}};
   	\node (g4) at (3*\dist,0){{\LARGE $G_4$}};
    \node (g5) at (4*\dist,0){{\LARGE $G_5$}};

    \draw[->,thick] (g2) to [in=315,out=225] node[below] {$(1-\e)\leq $}(g1);
   \draw[->,thick] (g3) to [in=315,out=225] node[below] {$(1-\e)\leq $}(g2);
    \draw[->,thick] (g4) to [in=315,out=225] node[below] {$\leq $}(g3);
   \draw[->,thick] (g5) to [in=315,out=225] node[below] {$(1-\e)\leq $}(g4);

	\draw[->,thick] (g1) to [in=135,out=45] node [above] {$\geq$} (g2);
    \draw[->,thick] (g2) to [in=135,out=45] node [above] {$\geq$} (g3);
    \draw[->,thick] (g3) to [in=135,out=45] node [above] {$\geq 1/(1+\e)$} (g4);
    \draw[->,thick] (g4) to [in=135,out=45] node [above] {$\geq$} (g5);

  \node[] (thmG12) at (0.5*\dist,0) {Lem~\ref{thm:equal-G1-G2}};

    \node[] (thmG23) at (1.5*\dist,0) {Lem~\ref{thm:equal-G2-G3}};

    \node[] (thmG34) at (2.5*\dist,0) {Lem~\ref{thm:equal-G3-G4}};

    \node[] (thmG45) at (3.5*\dist,0) {Lem~\ref{thm:equal-G4-G5}};
  
\end{tikzpicture}
    \caption{Series of reductions we use to prove Theorem~\ref{thm:PTAS_main}. $G_1$ is the original Prophet Secretary game, and $G_5$ is a new game that can be solved in polynomial time.}
    \label{fig:summary_ptas}
\end{figure}
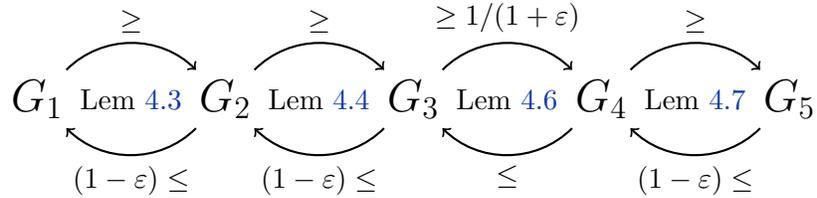

\bigskip
\noindent
\textbf{Original prophet secretary game $G_1$.}
There are $n$ random variables $X_1,X_2,\ldots,X_n$ arriving in random order. When each variable arrives, the gambler observes the realized value and decides whether to take it. If the gambler decides not to take the realized value, the value cannot be taken later. 

\bigskip
\noindent
\textbf{Prophet secretary game $G_2$ with possible failure.} The gambler plays prophet secretary game $G_1$, but the reward is set to $0$ if there exists a possible support value $v$ and a block $\Bv_k$ such that for the variables $X_{i_1},X_{i_2},\ldots,X_{i_{\ell}}\in \Bv_k\cap \Xssv$ we have that $\sum_{X_{i_j}\in \Bv_k}q_{i_j}\not\in[(1-\eps)\Qov,(1+\eps)\Qov]$. In other words, when the realization probability of $v$ of all variables with a small probability in a block is far from their expectation, no reward is given to the gambler.

\begin{lemma}\label{thm:equal-G1-G2}
$(1-\eps)\opt(G_1)\leq \opt(G_2)\leq\opt(G_1)$.
\end{lemma}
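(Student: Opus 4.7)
The upper bound $\opt(G_2)\le\opt(G_1)$ is immediate: for every arrival order and every realization, the $G_2$-reward of any strategy is pointwise dominated by its $G_1$-reward (it is either equal to it, or zero), so in particular the $G_2$-optimal strategy cannot exceed $\opt(G_1)$. All the work goes into the lower bound.

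For $(1-\eps)\opt(G_1)\le\opt(G_2)$, the plan is to take the $G_1$-optimal adaptive strategy $\pi^*$ and simply run it inside $G_2$. Define the ``good'' event
\[\mathcal{E} := \Bigl\{\,\forall v,\;\forall k\ge 1:\;S_{v,k}:=\!\!\sum_{X_i\in\Bv_k\cap\Xssv}\!\! q_i \;\in\;\bigl[(1-\eps)\Qov,(1+\eps)\Qov\bigr]\Bigr\}.\]
On $\mathcal{E}$ the two games pay $\pi^*$ the same reward, so
\[\opt(G_2)\;\ge\;\E{}{R_{\pi^*}(G_1)\cdot\mathbbm{1}[\mathcal{E}]}\;=\;\opt(G_1) - \E{}{R_{\pi^*}(G_1)\cdot\mathbbm{1}[\neg\mathcal{E}]}.\]
It then suffices to show that the second term is at most $\eps\cdot\opt(G_1)$.

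The crux is a concentration-plus-union-bound argument for $\Pr[\neg\mathcal{E}]$. Fix a support value $v$ and a block index $k\ge1$. Under a uniformly random arrival order, $S_{v,k}$ is the sum over a uniformly random subset (of size $\deltav n$) of weights $q_i\le\eps^{20}$ (respectively $\le\eps^{20}Q_{\max}$ if $v=v_{\max}$, in which case we rescale by $Q_{\max}$), with mean $\E{}{S_{v,k}}=\Qsvo\cdot\deltav=\Qov$. Using the preprocessing assumption $\Qsvo\ge\eps^{10}$ (resp.\ $\overline Q^*_{v_{\max}}\ge\eps^{10}Q_{\max}$) together with $\Qsv=\min(\eps^{-1},\Qsvo)$, one checks that $\Qov\ge\eps^{14}$. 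Applying a Bernstein-type inequality for sums under sampling without replacement (e.g.\ Hoeffding's bound, which dominates the with-replacement variance) gives
\[\Pr\bigl[|S_{v,k}-\Qov|>\eps\Qov\bigr]\;\le\;2\exp\!\bigl(-\Omega(\eps^{-4})\bigr).\]
Union-bounding over the $\tilde O(1/\eps)$ discretized support values $v$ and the $O(\eps^{-4})$ blocks per value yields $\Pr[\neg\mathcal{E}]=\tilde O(\eps^{-5})\exp(-\Omega(\eps^{-4}))$, which is super-polynomially small in $1/\eps$.

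It remains to convert this into a reward bound. Observe that $R_{\pi^*}(G_1)\le\max_i X_i\le v_{\max}$, and under the normalization of Claim~\ref{cl:bounded_opt_off} we have $\E{}{\max_i X_i}\le 1$. A single value $v_{\max}$ is realized by \emph{some} variable with probability $1-e^{-Q_{\max}}$, and $v_{\max}\cdot(1-e^{-Q_{\max}})\le 1$, so $v_{\max}$ is bounded by a function of $\eps$ (using the lower bound on $Q_{\max}$ that comes from the first preprocessing). Thus
\[\E{}{R_{\pi^*}(G_1)\cdot\mathbbm{1}[\neg\mathcal{E}]}\;\le\;v_{\max}\cdot\Pr[\neg\mathcal{E}]\;=\;o(\eps),\]
which, combined with $\opt(G_1)\ge 0.669$, gives the claimed $(1-\eps)$ bound. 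The main obstacle I expect is formalizing the concentration step: one needs the Bernstein inequality in its sampling-without-replacement form to get the dependence on $q_i^{\max}\le\eps^{20}$ in the variance, so that the exponent $\Omega(\eps^{-4})$ from the combination of $\Qov\ge\eps^{14}$ and $q_i\le\eps^{20}$ actually survives the union bound and the multiplication by $v_{\max}$.
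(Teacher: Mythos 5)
Your overall route is the same as the paper's: the upper bound is immediate, and for the lower bound you run the $G_1$-optimal strategy inside $G_2$ and bound the probability of the failure event via concentration for sums under sampling without replacement, union-bounded over the $\tilde O(1/\eps)$ support values and $\eps^{-4}$ blocks. The paper does exactly this, except it uses Chebyshev (Claim~\ref{lem:smalErrSamples}, giving failure probability $\eps^{8}$ per block and $\eps^{2}$ overall) where you invoke a variance-sensitive Bernstein bound; your per-block estimate $\exp(-\Omega(\eps^{-4}))$ is correct given $q_i\leq\eps^{20}$ and $\Qov\geq\eps^{14}$, just stronger than necessary.

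The genuine gap is in your last conversion step. You bound $\E{}{R_{\pi^*}(G_1)\cdot\mathbbm{1}[\neg\mathcal{E}]}\leq v_{\max}\cdot\Pr{}{\neg\mathcal{E}}$ and assert that $v_{\max}$ is bounded by a function of $\eps$ ``using the lower bound on $\Qmax$ from the first preprocessing.'' That assertion is false: the preprocessing only guarantees $\overline{Q}^*_{v_{\max}}\geq\eps^{10}\Qmax$, which is a bound \emph{relative} to $\Qmax$ and gives no absolute lower bound on $\Qmax$. For instance, a single special variable of value $n$ realized with probability $\tfrac{1}{2n}$ survives all preprocessing under the normalization $\E{}{\max_i X_i}=1$, so $v_{\max}$ can grow with $n$, while your failure probability is only a function of $\eps$ (the Bernstein exponent does not improve with $n$); hence $v_{\max}\cdot\Pr{}{\neg\mathcal{E}}$ can diverge with $n$ and is not $o(\eps)$. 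The fix is short and should replace that step: the event $\neg\mathcal{E}$ is determined by the arrival order alone (the quantities $q_i$ are deterministic), hence it is independent of the realized values, so
\[
\E{}{R_{\pi^*}(G_1)\cdot\mathbbm{1}[\neg\mathcal{E}]}\;\leq\;\E{}{\max_i X_i\cdot\mathbbm{1}[\neg\mathcal{E}]}\;=\;\Pr{}{\neg\mathcal{E}}\cdot\E{}{\max_i X_i}\;\leq\;\eps^{2}\;\leq\;1.5\,\eps^{2}\,\opt(G_1),
\]
or equivalently, conditioned on any fixed arrival order the expected reward of any online policy is at most $\E{}{\max_i X_i}=1$. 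With this substitution your argument is complete and matches the paper's.
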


\begin{proof}
    First, we show the following claim.
    \begin{restatable}{claim}{smalErrSamples}\label{lem:smalErrSamples}
        Suppose we have $n$ non-negative real numbers $A_1,\cdots,A_n$ with $A_i\leq \eps^{10}\delta A$ for some $\delta<1$ where $A:=\sum_i A_i$. If we sample $m=\delta n$ numbers $Y_1,\cdots,Y_{m}$ uniformly at random and without replacement, then
        \[
        \Pr{}{|Y-\E{}{Y}|\geq \eps\E{}{Y} }<\eps^{8}.\]
    \end{restatable}

Now fix a block $\Bv_k$. In the lemma above, let $n$ be the total number of variables, $A=\Qsvo$ and $\delta=\delta_v$. Finally let $A_i=q_i$ for each variable $X_i\in \Xssv$. Note that the expectation of the sum of small probability variables is $\Qov$. Therefore, using the above lemma we conclude that the probability that the sum of $q_i$'s for small probability variables in a block (which corresponds to the $Y_i$'s) deviates from their expectation by more than a multiplicative factor of $\eps$ is at most $\eps^8$. Now there are $\eps^{-4}$ many blocks, and there are at most $\Tilde{O}(1/\eps)<\eps^{-2}$ many different support values. Using union bound, we can conclude that the event that prompts the gambler not to get any award in $G_2$ happens with probability at most $\eps^8\cdot \eps^{-4}\cdot \eps^{-2} = \eps^2$. This concludes the proof of the lemma. 
\end{proof}


\bigskip
\noindent
\textbf{Prophet secretary game $G_3$ with lazy decision.} The gambler plays game $G_2$. However, every variable with support $0$ and $v$ and small realization probability in $\Xssv\cap \Bv_0$ is not realized.

\begin{lemma}\label{thm:equal-G2-G3}
$(1-\eps)\opt(G_2)\leq \opt(G_3)\leq\opt(G_2)$.
\end{lemma}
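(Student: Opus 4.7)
The direction $\opt(G_3)\leq\opt(G_2)$ is immediate by coupling: any strategy for $G_3$ can be replayed verbatim in $G_2$ while ignoring the realizations of small-probability variables in $\Bv_0\cap\Xssv$, and the failure condition of $G_2$ is unaffected since it depends only on the blocks $\Bv_1,\ldots,\Bv_{\eps^{-4}}$.

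For the reverse inequality, the plan is to couple the two games on the same uniform arrival order: $G_2$ reveals every variable, while $G_3$ replaces each realization from $\Bv_0\cap\Xssv$ by $0$. Run the $G_2$-optimal policy $\sigma_2^\star$ in both games. The key structural fact, coming from the DP in \eqref{eq:DP}, is that $\sigma_2^\star$'s threshold at step $t$ depends only on the multiset of remaining variables---which is identical under the coupling---so the two runs diverge only at the event $E$ that $\sigma_2^\star$ accepts a small-prob block-$0$ realization in $G_2$. At such a step $t$ with accepted value $v$, the $G_3$ run sees $0<\theta_t$, continues, and collects in expectation the threshold $\theta_t$; the total expected loss is therefore $\mathbb{E}\lp[(v-\theta_t)\,\mathbf{1}_E\rp]$.

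To bound $v-\theta_t$, I plan to use a ``wait-for-substitute'' lower bound on $\theta_t$: the remaining game after step $t$ still contains the entirety of $\Bv_1\cup\cdots\cup\Bv_{\eps^{-4}}$, whose small-prob variables with value $v$ have total $q$-mass at least $(1-\eps)Q^*_v$ on the non-failure event of $G_2$. The suboptimal policy that ignores everything except a small-prob $v$-realization in blocks $\geq 1$ then achieves expected reward at least $v(1-e^{-(1-\eps)Q^*_v})$. Since $|\Bv_0|>0$ forces $\Qsvo>\eps^{-1}$ and hence $Q^*_v=\eps^{-1}$, this yields $\theta_t\geq v(1-\eps^{10})$ for sufficiently small $\eps$. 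Combined with the bound $v\leq 1/\eps^2$ whenever $|\Bv_0|>0$ (the only non-trivial case is $v=v_{\max}$, but $v_{\max}>1/\eps^2$ together with the normalization $\mathbb{E}[\max_i X_i]=1$ forces $\Qmax\lesssim 1/v_{\max}<\eps^{-1}$, hence $\overline{Q}^*_{v_{\max}}\leq\Qmax<\eps^{-1}$ and $|\Bv_0|=0$), the per-event loss is at most $v\eps^{10}\leq\eps^8$. Using $\opt(G_2)\geq 0.669$ from Claim~\ref{cl:bounded_opt_off}, the total expected loss is at most $\eps^8\leq\eps\cdot\opt(G_2)$, giving $\opt(G_3)\geq(1-\eps)\opt(G_2)$.

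The main technical obstacle is ruling out the $v_{\max}$ pathology described above, which I handle via the prophet-normalization to collapse the ``large $v_{\max}$'' case into a trivial $|\Bv_0|=0$ instance; a secondary concern is ensuring $e^{-(1-\eps)/\eps}\leq\eps^{10}$, which holds for sufficiently small $\eps$ and can otherwise be absorbed by raising the exponents in the definitions of $\Xssv$ and $Q^*_v$.
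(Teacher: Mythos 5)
Your proposal is correct and takes essentially the same route as the paper: the only losses in the coupling occur when the policy would accept a zeroed small-probability $v$ in $\Bv_0$, and both you and the paper repair this with the same wait-for-substitute argument (this is exactly Observation~\ref{skipping initial}), using $G_2$'s guarantee that the later blocks carry small-probability mass at least $(1-\eps)Q^*_v$. Your refinements---noting that $\Bv_0\neq\emptyset$ forces $Q^*_v=\eps^{-1}$ so the loss is an additive $v\,e^{-(1-\eps)/\eps}\leq\eps^{8}$, and that $v_{\max}>1/\eps^2$ makes $\mathbf{B}^{v_{\max}}_0$ empty---sharpen constants but do not change the argument; just phrase the divergence step as switching to the substitute policy (which is what your bound actually certifies) rather than asserting that the $G_3$ continuation of the $G_2$-optimal policy equals $\theta_t$.
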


Before moving onto the proof of the lemma, we state an  observation used in the proof. 

\begin{restatable}{observation}{skippingInitial}\label{skipping initial}
Let $\Qsv=\min(\eps^{-1},\Qsvo)$. If we encounter a variable $X_i \in \Xssv$ while the remaining variables in $\Xssv$ have total $q_i$ more than $\Qsv$, then ignoring $X_i$ decreases the reward of the algorithm by a factor of at most $\eps$. 
\end{restatable}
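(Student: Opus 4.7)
The plan is to bound the expected reward lost by forcibly rejecting $X_i$ via a coupling, then show this loss is at most an $\eps$-fraction of the algorithm's reward. The starting point will be to compare the optimal continuation strategy to the strategy that mimics it except forcibly rejects $X_i$. Conditioned on $X_i=0$ the two strategies behave identically, while conditioned on $X_i=v$ the modification gives up at most $v$ of reward (the optimal may take $v$, while the modified strategy continues and receives the threshold-value continuation). Hence the expected loss is at most $v\cdot p_i\leq v\cdot q_i$, using $p_i=1-e^{-q_i}\leq q_i$.

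The case $v\leq 1/\eps^2$ will be straightforward: by the definition of $\Xssv$, $q_i<\eps^{20}$, so the loss is at most $\eps^{18}$, while $\opt\geq 0.669$ by Claim~\ref{cl:bounded_opt_off}. The relative loss is thus at most $\eps^{18}/0.669 \ll \eps$.

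The harder case will be $v=v_{\max}>1/\eps^2$, where the definition of $\mathbf{X}^*_{v_{\max}}$ only gives $q_i<\eps^{20}\Qmax$ and hence a loss bounded by $v_{\max}\cdot\eps^{20}\Qmax$; this is problematic if $v_{\max}$ or $\Qmax$ is large. The key idea will be to exploit the global normalization $\opt\leq\E{}{\max_j X_j}=1$: the strategy of accepting the first realization of $v_{\max}$ yields reward at least $v_{\max}(1-e^{-\Qmax})$, so $v_{\max}(1-e^{-\Qmax})\leq 1$. Since $v_{\max}>1/\eps^2$ forces $\Qmax<1$ (otherwise $v_{\max}\leq e/(e-1)<2$ would contradict $v_{\max}>1/\eps^2$), the inequality $1-e^{-\Qmax}\geq \Qmax/2$ yields $v_{\max}\Qmax\leq 2$ and thus the loss is at most $2\eps^{20}$; combined with $\opt\geq 0.669$ this is again well below $\eps\cdot\opt$.

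The main obstacle, as noted, is the $v=v_{\max}$ case where the support value may be unbounded; the resolution exploits the normalization, since whenever $v_{\max}$ is large the ``wait for first realization'' lower bound on $\opt$ forces $\Qmax$ to be correspondingly small, keeping $v_{\max}\Qmax$ (and hence $v_{\max}q_i$) bounded by a constant.
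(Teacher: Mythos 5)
Your argument does establish the literal single-variable statement, but it is a genuinely different argument from the paper's, and the difference matters. The paper's proof is a deferral argument that leans entirely on the hypothesis you never use: the remaining variables of $\Xssv$ having total $q$-mass exceeding $\Qsv$ (which can only happen when $\Qsv=\eps^{-1}$, since otherwise the hypothesis is vacuous and nothing is ignored) means that $v$ will be realized again later with probability at least $1-e^{-1/\eps}>1-\eps$, so any strategy that would accept $X_i=v$ can instead wait for the next realization of $v$ and recover the same reward with probability at least $1-\eps$. This is a per-decision multiplicative guarantee. Your proof instead bounds the loss additively by $v\,q_i\le 2\eps^{18}$ and then divides by $\opt\ge 0.669$. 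That is fine for one variable, but the observation is invoked in Lemma~\ref{thm:equal-G2-G3} to discard \emph{all} small-probability variables of support $v$ in the block $\Bv_0$ --- a set whose total $q$-mass is roughly $\Qsvo-\eps^{-1}$ and is a priori unbounded (e.g.\ $\Qsvo$ can be as large as $n\eps^{20}$). Summing your additive per-variable losses over that set gives $v\cdot(\Qsvo-\eps^{-1})$, which is not $O(\eps)\cdot\opt$. The deferral argument, by contrast, handles the whole discarded set in one shot (every forgone acceptance of $v$ is replaced by the first later realization of $v$ among the surviving variables, whose mass stays above $\Qsv$), which is exactly what the downstream application needs. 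The fact that your proof never touches the hypothesis about the remaining mass is the tell-tale sign that it proves a weaker, non-composable fact. Your normalization trick for the $v_{\max}$ case (deriving $v_{\max}\Qmax\le 2$ from $\opt\le 1$) is correct and is essentially the bound the paper uses in Lemma~\ref{lem:qv-lb}, but it belongs to that lemma's one-shot additive setting rather than here.
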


\begin{proof}[Proof of Lemma~\ref{thm:equal-G2-G3}]
By the definition of $G_2$, we know that the effective sum\footnote{Here by effective sum we mean that the cases explained in $G_2$ are ignored.} of $q$ values of $X_i$ variables in $\Bv_k\cap \Xssv$ for any $k$ is in the interval $[(1-\eps)\Qov, (1+\eps)\Qov]$. The total number of blocks, other than the first one, is $\eps^{-4}$. This means the total sum of $q$ values of these variables is in the interval $\eps^{-4}[(1-\eps)\Qov, (1+\eps)\Qov] = Q^*_v[(1-\eps), (1+\eps)]$. Now using Observation \ref{skipping initial}, since we know that if we skip block $\Bv_0$, the sum of $q_i$s of the remaining variables is at least $(1-\eps)Q^*_v$, the loss in reward can be at most $\eps(1-\eps)=O(\eps^2)$. This concludes the proof. 
\end{proof}

\bigskip
\noindent
\textbf{Prophet secretary game $G_4$ with lazy decision and outside option.} The gambler plays prophet secretary game $G_3$. However, in each block $\Bv_k$ with $k\geq 1$, the realization of variables with small probability in $\Xssv\cap \Bv_k$ is different. In particular, all variables in $\Xssv\cap \Bv_k$ are realized at the beginning of $\Bv_k$ (and not later), and kept as an outside option $Z_{v}$. Whenever the gambler decides to accept a variable $Y_i$ in $\Bv_k$, the gambler can choose to replace $Y_i$ by $Z_{v}$. At the end of $\Bv_k$, the gambler can take $Z_v$ and stop. If the gambler decides to take nothing and continue the game, the outside option $Z_v$ is cleared.

\begin{lemma}\label{thm:equal-G3-G4}
$\opt(G_3)\leq \opt(G_4)\leq(1+\eps)\opt(G_3)$.
\end{lemma}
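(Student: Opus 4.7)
The plan is to prove the two inequalities $\opt(G_3) \leq \opt(G_4)$ and $\opt(G_4) \leq (1+\eps)\opt(G_3)$ separately. The first direction will follow by a coupling/simulation argument, while the second is the main technical obstacle.

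For $\opt(G_3) \leq \opt(G_4)$, I will show that any strategy $\sigma$ for $G_3$ can be mimicked in $G_4$ with the same expected reward. Given $\sigma$, define a $G_4$ strategy $\sigma'$ as follows. At the start of each block $\Bv_k$ with $k\geq 1$, $\sigma'$ observes $Z_v$ and uses private randomness to sample individual realizations of the variables in $\Xssv \cap \Bv_k$ from the conditional distribution of the original (independent two-point) $X_i$'s given the observed $Z_v$. This conditional distribution is well-defined, and by construction the joint distribution of the simulated trajectory matches that in $G_3$. Then $\sigma'$ runs $\sigma$ on this simulated trajectory: non-small variables are treated identically, and whenever $\sigma$ would accept a simulated small variable realizing to $v$, $\sigma'$ accepts $Z_v$ instead (which must equal $v$ by the coupling). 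This yields the lower bound.

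For $\opt(G_4) \leq (1+\eps)\opt(G_3)$, I would translate the optimal $G_4$ strategy $\sigma^*$ into a $G_3$ strategy $\sigma^\dagger$ with multiplicative loss at most $(1+\eps)$. Intuitively, $\sigma^\dagger$ mimics $\sigma^*$ on every non-small variable, and for the outside options it proceeds eagerly: whenever $\sigma^*$ would accept $Z_v$ at some within-block moment $\tau$ of block $\Bv_k$, $\sigma^\dagger$ accepts the first small variable in $\Xssv \cap \Bv_k$ realizing to $v$, provided one has arrived by time $\tau$. The two main loss sources are (i) $\sigma^*$ may commit to $Z_v$ before the corresponding $v$-realization has actually occurred in $G_3$, and (ii) $\sigma^*$ may use its early knowledge of $Z_v$ to alter its decisions on non-small variables in the same block.

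The main obstacle is bounding both losses by $\eps \cdot \opt(G_4)$. I plan to exploit: (a) the $G_2$ guarantee $\sum_{X_i \in \Bv_k \cap \Xssv} q_i \in [(1-\eps)\Qov,(1+\eps)\Qov]$, which controls the within-block arrival-time distribution of the first $v$-realization in $G_3$ and forces it to be close to uniform over $\Bv_k$; (b) the smallness $\Qov = \eps^4 \Qsv \leq \eps^3$, which implies that the expected per-block contribution of $Z_v$ to $\opt(G_4)$ is a small fraction of $v$, so that timing mismatches cost at most $O(\eps)$ in each block; and (c) the fact that $\sigma^*$'s use of the early $Z_v$ information only affects decisions on non-small variables that arrive after some $v$-realization would in any case have occurred in $G_3$ with probability at least $1-\exp(-(1-\eps)\Qov) \geq (1-\eps)\Qov$. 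A clean way to formalize this may be to introduce an auxiliary game $G_{3.5}$ in which $Z_v$ must be taken at the arrival time of the first $v$-realization rather than at a moment of the gambler's choosing, show $\opt(G_{3.5}) = \opt(G_3)$ directly by coupling as in the lower bound, and then separately bound $\opt(G_4) \leq (1+\eps)\opt(G_{3.5})$ by charging the timing flexibility of $G_4$ to the slack in the $G_2$ concentration bound.
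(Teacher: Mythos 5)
Your first direction is fine: simulating the $G_3$ trajectory inside $G_4$ by resampling the small variables conditionally on $Z_v$ is a valid (slightly more explicit) version of the paper's observation that the $G_4$ gambler simply has more information, so $\opt(G_3)\leq\opt(G_4)$.

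The second direction has a genuine gap, in two places. First, the strategy $\sigma^\dagger$ you describe is not an online policy: ``accept the first small variable realizing $v$ provided one has arrived by time $\tau$, where $\tau$ is when $\sigma^*$ would accept $Z_v$'' requires knowing, at the arrival of that small variable, whether $\sigma^*$ will later want $Z_v$ --- but that depends on the realizations of the non-small variables arriving later in the block. The paper resolves exactly this by a case split at the block's entry node of the (threshold-based, non-increasing-threshold) optimal $G_4$ policy: either the policy would accept $v$ throughout the block (then mimic it and grab every realized small $v$ eagerly, losing only a $(1-2\eps^3)$ factor per block), or it would not (then never grab small $v$'s in this block and account separately for the possible end-of-block acceptance of $Z_v$). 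Second, and more fundamentally, your loss accounting is per block: a cost of order $\Qov v$ (or ``$O(\eps)$'') in each block. There are $\eps^{-4}$ blocks per support value, so summing naively gives $\Theta(\Qsv v)$, which can be as large as $\Theta(\opt/\eps)$ when $\Qsv=\eps^{-1}$ --- far too big. The paper's key idea, which your sketch does not contain, is that because the optimal thresholds are non-increasing, once the $G_4$ policy has decided at the end of some block that it would take $v$, it is in the ``always accept $v$'' case in every later block; hence the end-of-block commitment nodes form an antichain in the decision tree, their total reach probability is at most $1$, and the total loss over the whole game is a single $2\Qov v=O(\eps^3)\opt(G_3)$ term (followed by an induction over the $\tilde O(1/\eps)$ support values to accumulate only $(1+\eps)$). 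Neither the concentration from $G_2$ nor the bound $\Qov\leq\eps^3$ substitutes for this global, once-per-path charging argument. Finally, the proposed shortcut via $G_{3.5}$ does not close the gap: if $Z_v$ is revealed at the start of the block, the claimed identity $\opt(G_{3.5})=\opt(G_3)$ is unjustified (extra information can only help, and bounding its value is precisely the original problem); if $Z_v$ is revealed only at the first $v$-realization, then proving $\opt(G_4)\leq(1+\eps)\opt(G_{3.5})$ is again the original problem under a new name.
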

\begin{proof}[Proof of Lemma~\ref{thm:equal-G3-G4}]

We prove this lemma by induction on the number of support values that have been grouped. For any set $\Vs$ of support values, let the game $G_3(\Vs)$ be $G_3$ in which only variables with support values $v\in \Vs$ are grouped. Then $G_4=G_3(\{\textrm{all support values}\})$. First, we show that $\opt(G_3)\leq \opt(G_3(\{v\}))\leq (1+O(\eps^3))\opt(G_3)$ (\textit{base case}). Notice that the exact same proof works for showing that for any $\Vs$, $\opt(G_3(\Vs))\leq \opt(G_3(\Vs\cup\{v'\}))\leq (1+O(\eps^3))\opt(G_3(\Vs))$ when an additional support value $v'$ is grouped (\textit{inductive step}). Then by induction $\opt(G_3)\leq \opt(G_4)\leq(1+\eps)\opt(G_3)$, as there are only $\tilde{O}(1/\eps)<1/\eps^2$ distinct discretized support values.

Denote $\Gv=G_3(\{v\})$. Notice that the gambler in game $G_3$ has strictly less information than the gambler in game $\Gv$: in any block $\Bv_k$ in $\Gv$, compared to $G_3$, the variables with small probability and support value $v$ are front-loaded and realized together at the beginning of the block, and exist as an outside option throughout the entire block. Thus the optimal reward of $\Gv$ is always at least the optimal reward of $G_3$, i.e. $\opt(G_3)\leq \opt(\Gv)$. In the remainder of the proof, we show that $\opt(\Gv)\leq (1+O(\eps^3))\opt(G_3)$. In particular, we show how to transform an optimal algorithm $\mathcal{A}_v$ of Game $\Gv$ to an algorithm $\mathcal{A}$ of Game $G_3$, while the reward is almost unchanged. As the optimal decision of $\Gv$ can be calculated via a dynamic program, we can assume without loss of generality that $\mathcal{A}_v$ is a threshold-based algorithm (using the optimal online reward of the remaining instance at each step), with the threshold being non-increasing over time. 

\textbf{Coupling. }We consider the decision trees of the two games. For each node in the decision tree of $G_3$, there is a corresponding node in the decision tree of $\Gv$, where the realization histories of all variables are identical. We completely couple the randomness of the two games, so that when we analyze the decision trees of the two games, the arrival order of the variables and the realized value of each variable are the same. 

\textbf{Building $\mathcal{A}$ from $\mathcal{A}_v$. }
\begin{enumerate}
    \item \textit{On block $\Bv_0$, algorithm $\mathcal{A}$ makes exactly the same decision in $G_3$ as the algorithm $\mathcal{A}_v$ in game $\Gv$. } The expected reward from $\Bv_0$ in the two games will be the same. This is because in going from game $G_2$ to $G_3$, we made the assumption that the reward from all the small probability variables of support $v$ in $\Bv_0$ are zeroed out.

\item On block $\Bv_k$, consider any node $N_v$ in the decision tree when the gambler reaches $\Bv_k$ and let $N$ be the corresponding node in the decision tree of $G_3$. Depending on how $\mathcal{A}_v$ performs, we define $\mathcal{A}$:

\end{enumerate}

\textbf{\textit{Case 1. }}In $\mathcal{A}_v$, after the gambler reaches decision node $N_v$, she always wants to accept $v$ whenever it is available in $\Bv_k$. Consider the following algorithm $\mathcal{A}$ for $G_3$:
\begin{itemize}
    \item Whenever the gambler observes $v$ realized by a variable in $\Xssv$, the algorithm accepts it;
    \item Whenever the gambler observes a realized value not in $\Xssv$, the gambler makes the same decision as in the corresponding node in $\Gv$, that if the $\mathcal{A}_v$ accepts the same value (possibly changed to an outside option $v$) in $\Gv$, $\mathcal{A}$ also accepts the variable in $G_3$; if $\mathcal{A}_v$ does not accept the same value in the corresponding node in $\Gv$, $\mathcal{A}$ also rejects the variable.
\end{itemize}

\textit{Analysis. }We analyze how much reward difference the two algorithms have in $\Bv_k$. 
\begin{enumerate}[(a)]
\item Since $\mathcal{A}_v$ always accepts outside option $v$ at the end of block $\Bv_k$, the probability that $\mathcal{A}_v$ accepts some variable in $\Bv_k$ is identical to the probability that $\mathcal{A}$ accepts some variable in $\Bv_k$.
\item Any time $\mathcal{A}_v$ accepts some variable $X_i$ before the end of $\Bv_k$ in $\Gv$, $\mathcal{A}$ either accepts $X_i$ in the corresponding node in $G_3$, or it accepts $v$ from a variable in $\Xssv$ before reaching the node (and at the same time $\mathcal{A}_v$ can accept $\max(v,X_i)$). This happens with probability at most $(1+\eps)\Qov<2\Qov<2\eps^{3}$, as the total $q_i$ of all variables in $\Xssv$ in $\Bv_k$ is at most $(1+\eps)\Qov$. 
\item Any time $\mathcal{A}_v$ accepts the outside option $v$ at the end of $\Bv_k$ in $\Gv$, $\mathcal{A}$ accepts $v$ in $G_3$.
\end{enumerate}

This way, we have coupled the reward of the two algorithms in the two games, such that whenever $\mathcal{A}_v$ accepts some value, $\mathcal{A}$ accepts a possibly different value 
with probability at most $2\eps^{3}$. Thus the reward of $\mathcal{A}$ in block $\Bv_k$ in Game $G_3$ is at least $(1-2\eps^3)$ fraction of the reward of $\mathcal{A}_v$ in block $\Bv_k$ in Game $\Gv$, starting from corresponding nodes $N$ and $N_v$ in the two decision trees. 
\\
\textbf{\textit{Case 2}}. If in $\mathcal{A}_v$, after the gambler reaches decision node $N_v$ the gambler \textbf{does not} always want to accept $v$ whenever it is available in $\Bv_k$. 
 Consider the following algorithm $\mathcal{A}$ for $G_3$:
 \begin{itemize}
     \item  Whenever the gambler observes $v$ realized by a variable in $\Xssv$, the algorithm \textbf{rejects} it;
     \item whenever the gambler observes a realized value not in $\Xssv$, the gambler makes the same decision as in the corresponding node in $\Gv$ in a way the same as Case 1.
 \end{itemize}
 
\textit{Analysis. The following cases may occur:}
\begin{enumerate}[(a)]
\item When a variable $X_i$ not in $\Xssv$ arrives in $\Gv$, and $\mathcal{A}_v$ decides to accept it, $\mathcal{A}$ also accepts $X_i$ in $G_3$ with the same value. Since with probability at most $(1+\eps)\Qov<2\Qov<2\eps^{3}$, the value accepted by $\mathcal{A}_v$ is replaced by $\max(v,G_v)$, we have coupled the two games such that the reward of $\mathcal{A}$ is at least $(1-2\eps^3)$ fraction of the reward of $\mathcal{A}_v$.
\item When $\mathcal{A}_v$ reaches the end of the block $\Bv_k$ at some decision node $N_{v,1}$, and $\mathcal{A}_v$ decides to accept the outside option whenever it is $v$, $\mathcal{A}$ does not accept any variable in $\Bv_k$. Therefore, the reward loss from $\mathcal{A}$ in $G_3$ is $v$ with probability at most $2\Qov$, conditioned on that $\mathcal{A}_v$ reaches decision node $N_{v,1}$. In the future, as $\mathcal{A}_v$ already decides to accept $v$ at $N_{v,1}$, it will always decide to accept $v$ whenever available in the decision tree rooted at $N_{v,1}$, since the optimal threshold based algorithm always sets non-increasing thresholds. Thus in the decision tree rooted at the corresponding node of $N_{v,1}$ in $G_3$, this case will not be discussed, which means that any decision node like $N_{v,1}$ discussed in this case cannot be the ancestor of another such node.
\item When $\mathcal{A}_v$ reaches the end of the block $\Bv_k$ at some decision node $N_{v,2}$, and $\mathcal{A}_v$ decides not to accept the outside option even if it is $v$, both algorithms do not accept any variable in $\Bv_k$, thus both have reward 0.
\end{enumerate}
Now we are ready to analyze the reward of $\mathcal{A}$ in $\Bv_k$ of $G_3$. It is at least $(1-2\eps^3)$ times the reward of $\mathcal{A}_v$ in $\Bv_k$ of $\Gv$, minus $\Pr{}{\mathcal{A}_v\textrm{ reaches a node like }N_{v,1}\textrm{ in Case 2(b)}}\cdot 2\Qov \cdot v$. Thus summing the reward for all of the blocks, we have
\[ALG(\mathcal{A},G_3)\geq (1-2\eps^3)\opt(\Gv)-\sum_{k}\sum_{N_{v,1}}\Pr{}{\mathcal{A}_v\textrm{ reaches }N_{v,1}\textrm{ in Case 2(b) in }\Bv_k}\cdot 2\Qov v.\]
As we discussed in Case 2(b), no decision node like $N_{k,1}$ is the ancestor of another decision node in Case 2(b) in later blocks. Thus $\sum_{k}\sum_{N_{v,1}}\Pr{}{\mathcal{A}_v\textrm{ reaches }N_{v,1}\textrm{ in Case 2(b) in }\Bv_k}\leq 1$, which implies
\begin{equation}\label{eqn:alg-g3}
ALG(\mathcal{A},G_3)\geq (1-2\eps^3)\opt(\Gv)-2\Qov v.
\end{equation}
Now we bound $\Qov v$. Observe that if an algorithm only accepts realized value $v$ from $\Xssv$, as the realization probability of $v$ is at least $(1-e^{-\Qsv})$, its reward is at least $(1-e^{-\Qsv})v$. Thus
$(1-e^{-\Qsv})v\leq \opt(G_3).$
Also notice that when $1\leq \Qsv\leq 1/\eps$, $1-e^{-\Qsv}\geq 1-e^{-1}$, while $\Qov=\eps^4\Qsv\leq \eps^3$, thus $\Qov v\leq \frac{e}{e-1}\eps^{3}\opt(G_3)$; when $\Qsv<1$, $1-e^{-\Qsv}>\frac{1}{2}\Qsv$, thus $\Qov v=\eps^4\Qsv v<2\eps^4\opt(G_3)$. Therefore, $\Qov v=O(\eps^3)\opt(G_3)$ always holds, so by \eqref{eqn:alg-g3} we have 
\[\opt(G_3)\geq ALG(\mathcal{A},G_3)\geq (1-O(\eps^3))\opt(\Gv),\]
which completes the proof.
\end{proof}

\bigskip
\noindent

\textbf{Prophet secretary game $G_5$ with perturbed outside option.} The gambler plays the prophet secretary game $G_4$. However, in each block $\Bv_k$ with $k\geq 1$, the outside option $Z_v$ is set to $v$ with a fixed probability $1-e^{-(1-\eps)\Qov}$.

\begin{lemma}\label{thm:equal-G4-G5}
$(1-\eps)\opt(G_4)\leq \opt(G_5)\leq\opt(G_4)$.
\end{lemma}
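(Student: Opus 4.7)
The plan is to couple $G_4$ and $G_5$ on a common probability space. Conditional on the $G_2$ non-failure event, for every support value $v$ and every block $\Bv_k$ with $k\geq 1$, the sum $S_k^v := \sum_{X_i\in\Xssv\cap\Bv_k} q_i$ lies in $[(1-\eps)\Qov,(1+\eps)\Qov]$; hence the realization probability of the $G_4$ outside option $Z_v$ given $S_k^v$ is $1-e^{-S_k^v}\geq 1-e^{-(1-\eps)\Qov}$, which exactly matches the fixed realization probability of $Z_v$ in $G_5$. This stochastic dominance will let me construct a block-wise monotone coupling of the two outside options so that $Z_v=v$ in $G_5$ implies $Z_v=v$ in $G_4$, while each marginal is correct; all other randomness (arrival order and realizations of variables in $\Xsl$) is shared across the two games.

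For the easy direction $\opt(G_5)\leq\opt(G_4)$, I would take any strategy $\mathcal{A}_5$ for $G_5$ and run it in $G_4$ by internally generating the $G_5$ outside option from the $G_4$ outside option via the coupling. Whenever $\mathcal{A}_5$ accepts $Z_v=v$ in $G_5$, the coupling forces $Z_v=v$ in $G_4$, so we accept the same value; all other decisions are made on shared randomness and thus yield identical rewards.

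For the harder direction $\opt(G_5)\geq(1-\eps)\opt(G_4)$, I would take an optimal threshold-based $\mathcal{A}_4^\star$ and simulate it in $G_5$. First compute $S_k^v$ from the observed arrival order, then draw an internal $\widetilde{Z}_v$ conditioned on the observed $G_5$ outside option so that $\widetilde{Z}_v$ has the correct $G_4$ marginal and equals $v$ whenever $Z_v=v$ in $G_5$; finally run $\mathcal{A}_4^\star$ on $\widetilde{Z}_v$. When $\mathcal{A}_4^\star$ accepts $\widetilde{Z}_v=v$, $\mathcal{A}_5$ attempts to accept $Z_v$ in $G_5$: if $Z_v=v$ the rewards match, otherwise a ``phantom'' acceptance occurs and $\mathcal{A}_5$ forfeits value $v$. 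Conditional on $\widetilde{Z}_v=v$ and the $G_2$ event $S_k^v\in[(1-\eps)\Qov,(1+\eps)\Qov]$, the phantom probability is
\[\frac{e^{-(1-\eps)\Qov}-e^{-S_k^v}}{1-e^{-S_k^v}}\leq \frac{e^{-(1-\eps)\Qov}-e^{-(1+\eps)\Qov}}{1-e^{-(1+\eps)\Qov}}=O(\eps),\]
using $\Qov=\eps^4\Qsv$ small and the expansion $1-e^{-x}\approx x$. Hence the per-block expected loss is at most an $O(\eps)$ fraction of $\mathcal{A}_4^\star$'s expected outside-option reward in that block; summing over all $\eps^{-4}$ blocks and $\tilde{O}(1/\eps)$ support values, adding the $G_2$-failure mass already bounded by Lemma~\ref{thm:equal-G1-G2}, and rescaling $\eps$ yields the claimed bound.

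The main obstacle I anticipate is justifying that the phantom acceptance does not cascade through $\mathcal{A}_4^\star$'s subsequent decisions. Because $\mathcal{A}_4^\star$ is a stopping rule that terminates upon acceptance, the rest of its decision tree is irrelevant, and $\mathcal{A}_5$ can simply forfeit that single $v$ (collecting $0$) without perturbing any later block's decisions, which cleanly localizes the loss to the current block's outside-option reward. A secondary subtlety is that the coupling is constructed only on the $G_2$ non-failure event, but its unconditional probability has already been absorbed into earlier $\eps$-slack and contributes only a lower-order additive term.
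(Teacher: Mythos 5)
Your overall route is the same as the paper's: compare the availability probability of the frontloaded outside option in the two games, get $\opt(G_5)\leq\opt(G_4)$ from the fact that $G_5$ only shrinks that probability, and get the reverse direction from the fact that on non-failure paths the shrinkage is small; your per-acceptance accounting (phantom probability $O(\eps)$ conditioned on accepting the outside option, plus the observation that acceptance events are disjoint because the gambler stops) is, if anything, a cleaner quantitative version of the paper's additive bound $e^{-(1-\eps)\Qov}-e^{-(1+\eps)\Qov}$. However, there is a concrete gap in the execution: both of your simulations require computing $S^v_k=\sum_{X_i\in\Xssv\cap\Bv_k}q_i$ ``from the observed arrival order'' at the moment the outside option of block $\Bv_k$ must be produced, i.e.\ at the start of the block. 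The blocks are defined by arrival \emph{positions}, so which members of $\Xssv$ fall into $\Bv_k$ --- and hence $S^v_k$, and indeed whether the $G_2$ non-failure event holds at all --- is determined by arrivals that have not yet been revealed. Consequently the thinning probability $\bigl(1-e^{-(1-\eps)\Qov}\bigr)/\bigl(1-e^{-S^v_k}\bigr)$ in your easy direction and the completion probability used to draw $\widetilde{Z}_v$ in your hard direction are not measurable with respect to the gambler's information; neither construction defines a legal online strategy, so as written it does not certify the lower bound on $\opt(G_5)$ (resp.\ on $\opt(G_4)$). Relatedly, ``constructing the coupling only on the $G_2$ non-failure event'' is not something an adapted strategy can do either, since failure is decided by the entire order; the correct treatment is that failure merely zeroes the reward in both games and can be excluded from the accounting without reweighting the surviving paths.

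The calculations themselves are fine conditionally on the block composition, so the gap is repairable: for instance, generate the simulated outside option using the conditional availability probability given the history at the block start (which the gambler can compute), and then separately bound the effect of the residual correlation between availability and the within-block composition --- on non-failure paths that conditional probability is within a $(1\pm O(\eps))$ factor of $1-e^{-(1-\eps)\Qov}$, so the same $O(\eps)$ loss per accepted outside option survives; alternatively, argue as the paper does, bounding the additive change in availability per acceptance opportunity and summing over the disjoint acceptance events. As it stands, though, the step ``compute $S^v_k$ and couple exactly'' would fail, and fixing it is precisely the missing content of the proof.
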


The proof of this lemma can be found in \ref{thm:equal-G4-G5-general}, and is identical to the proof of the similar lemma for the general distribution case.

Combining Lemmas~\ref{thm:equal-G1-G2}, \ref{thm:equal-G2-G3}, \ref{thm:equal-G3-G4} and \ref{thm:equal-G4-G5} above, we get the following corollary, showing that Game $G_5$ obtains almost the same optimal reward as the original prophet secretary problem.

\begin{corollary}\label{thm:equal-G1-G5}
    $(1-O(\eps))\opt(G_1)\leq \opt(G_5)\leq(1+O(\eps))\opt(G_1)$.
\end{corollary}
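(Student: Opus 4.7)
The plan is to derive the corollary by simply chaining the four preceding lemmas. There is no new mathematical content to introduce; the only thing to be careful about is tracking how the $(1\pm\e)$ factors accumulate and verifying they collapse into a single $1\pm O(\e)$ factor.

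First I would handle the upper bound $\opt(G_5)\leq (1+O(\e))\opt(G_1)$. Applying Lemma~\ref{thm:equal-G4-G5} gives $\opt(G_5)\leq \opt(G_4)$; then Lemma~\ref{thm:equal-G3-G4} gives $\opt(G_4)\leq (1+\e)\opt(G_3)$; then Lemma~\ref{thm:equal-G2-G3} gives $\opt(G_3)\leq \opt(G_2)$; and finally Lemma~\ref{thm:equal-G1-G2} gives $\opt(G_2)\leq \opt(G_1)$. Composing these four bounds yields $\opt(G_5)\leq (1+\e)\opt(G_1)=(1+O(\e))\opt(G_1)$.

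Next I would handle the lower bound $\opt(G_5)\geq (1-O(\e))\opt(G_1)$. Reading the lower-bound halves of the four lemmas in the chain $G_1\to G_2\to G_3\to G_4\to G_5$, we get $\opt(G_2)\geq (1-\e)\opt(G_1)$, $\opt(G_3)\geq (1-\e)\opt(G_2)$, $\opt(G_4)\geq \opt(G_3)$, and $\opt(G_5)\geq (1-\e)\opt(G_4)$. Multiplying gives $\opt(G_5)\geq (1-\e)^3\opt(G_1)$, and since $(1-\e)^3\geq 1-3\e$ for $\e\in[0,1]$, this is $(1-O(\e))\opt(G_1)$, as desired.

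Since the only real step is bookkeeping of three $(1-\e)$ factors in one direction and one $(1+\e)$ factor in the other, there is no genuine obstacle. One mild caveat worth mentioning in the writeup: to keep the asymptotic $O(\e)$ statement clean, we can either assume $\e$ is a sufficiently small constant (so $(1-\e)^3\geq 1-3\e$ and $(1+\e)\leq 1+O(\e)$ are immediate), or reparameterize by replacing $\e$ with $\e/C$ for a large enough constant $C$ at the top of Section~\ref{subsec: ptas binary small value} so that the final error is exactly $\e$ rather than $O(\e)$. Either way the corollary follows directly from Lemmas~\ref{thm:equal-G1-G2}--\ref{thm:equal-G4-G5}.
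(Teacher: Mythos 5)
Your proposal is correct and is exactly the paper's argument: the corollary is obtained by chaining Lemmas~\ref{thm:equal-G1-G2}, \ref{thm:equal-G2-G3}, \ref{thm:equal-G3-G4}, and \ref{thm:equal-G4-G5}, with the $(1-\eps)^3$ and $(1+\eps)$ factors absorbed into $1\pm O(\eps)$. Your bookkeeping of which direction each lemma contributes a loss is accurate, so there is nothing to add.
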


\begin{theorem}\label{DP}
    Prophet secretary game $G_5$ can be optimally solved in time $n^{\poly(\frac{1}{\eps})}$.
\end{theorem}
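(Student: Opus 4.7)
The plan is to write down an explicit dynamic program that exactly solves $G_5$ and bound its table size by $n^{\poly(1/\eps)}$. The key enabler is that after the discretization of values and probabilities inherited from the QPTAS and the front-loading of small-probability variables into block-level outside options, the set of special variables $\Xsl$ admits only $T = \tilde{O}(1/\eps^2)$ distinct \emph{types}: at most $\tilde{O}(1/\eps)$ discretized support values (powers of $(1+\eps)$ in $[\eps,1/\eps^2]$, together with the single bundled value $v_{\max}$) combined with at most $\tilde{O}(1/\eps)$ discretized realization probabilities satisfying $q_i \geq \eps^{20}$. In addition there are only $\tilde{O}(1/\eps)$ distinct support values $v$ that give rise to an outside option $Z_v$.

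I define a DP state at step $t$ as the triple $(t,\ (k_\tau)_{\tau=1}^T,\ (z_v)_v)$, where $k_\tau$ is the number of remaining special variables of type $\tau$, and $z_v \in \{0,v\}$ records the currently-realized outside option for support value $v$ in the block we occupy. Transitions handle three kinds of events: (i) crossing a block boundary for some $v$ at the prescribed time step, which clears $z_v$ and resamples it to $v$ with the fixed probability $1-e^{-(1-\eps)\Qov}$ for blocks $\Bv_k$ with $k\geq 1$; (ii) the arrival of a special variable of type $\tau$, which happens with probability $k_\tau/(n-t+1)$ and triggers an accept/reject decision in which the gambler may substitute in any currently-available $z_v$; and (iii) the silent arrival of a small-probability variable, which happens with the remaining probability and changes nothing except the time index. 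The Bellman equation is then the standard $\max$ between the best immediate reward (the realized value, or any $z_v$ at least as large) and the expected continuation value, with the additional ``take $z_v$ and stop'' option exercisable upon reaching an end-of-block time step. Values are filled in by backward induction from $t=n$ to $t=1$.

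For the runtime bound, I would invoke Claim~\ref{cl:DP_group_size} on the count vector $(k_\tau)_\tau$: the number of reachable count tuples is at most $\lceil n/T\rceil^{T}=n^{\poly(1/\eps)}$; multiplying by the $n$ time steps and the $2^{\tilde{O}(1/\eps)}=\poly(1/\eps)$ configurations of outside options yields a total of $n^{\poly(1/\eps)}$ DP cells. Each cell aggregates over at most $T=\poly(1/\eps)$ incoming types, checks a constant number of candidate outside-option substitutions, and handles at most $\tilde{O}(1/\eps)$ simultaneous block-boundary updates, so a single cell is filled in $\poly(1/\eps)$ time, giving the claimed $n^{\poly(1/\eps)}$ running time.

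The step I expect to require the most care is the bookkeeping of the outside options across the different block partitions: each support value $v$ has its own partition of $[n]$ induced by $\delta_v$, so block boundaries for distinct $v$'s are interleaved and must be processed in the correct order; moreover $z_v$ must persist through all arrivals in $\Bv_k$ (so that the gambler retains the end-of-block option) and be cleared exactly at the next boundary for $v$, without interfering with the states $z_{v'}$ for other support values. Once this scheduling is handled carefully, correctness of the DP is immediate by direct correspondence with the rules of $G_5$, and the complexity bound follows from the counting above.
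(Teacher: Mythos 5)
Your proposal matches the paper's own proof essentially exactly: the same state (the vector of counts of remaining special variables per discretized value--probability type, together with the current outside options $Z_v$), the same Bellman recursion with the outside options re-flipped at block boundaries, and the same counting argument bounding the table by $n^{\poly(1/\eps)}$ count tuples times $2^{\tilde{O}(1/\eps)}$ outside-option configurations. The only quibble is that $2^{\tilde{O}(1/\eps)}$ is not $\poly(1/\eps)$ as you write, but since this factor is absorbed into $n^{\poly(1/\eps)}$ the claimed running time is unaffected.
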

\begin{proof}[Proof of Theorem~\ref{DP}]
    We propose a dynamic program that solves $G_5$ in $n^{\poly(\frac{1}{\eps})}$ time. 

    First, recall our assumption that for any value $v$ in support of some variable, we have $\eps \leq v\leq \eps^{-2}$.
    Moreover, we discretized the space, so that there are only $\Tilde{O}(\frac{1}{\eps})$ different values of $v$, and that all probabilities $p_i$ are powers of $(1+\eps)$ and belong in $[\eps^{20}, 1]$ for $v_i\leq 1/\eps^2$, and $p_i\in[\eps^{20}\Qmax/(1+\eps),\Qmax]$ where $\Qmax=\sum_{i: X_{i}\in X_{v_{\max}}}q_i$. 
    At any point in the sequence, let $n_{v,p}$ be the number of remaining variables with support value $v$ and probability of realization of $p$.
    We claim that the information the optimal algorithm needs to make a decision at each point is the value of $n_{v,p}$ for all $v$ and $p$, and the values of $Z_v$, the outside option of the current block of $v$, for every value $v$. 
    This is because if the algorithm is facing variable $X$, and it is of the group $(v,p)$, then the choice it makes depends on the realized value of $X$ (whether it is $0$ or $v$), the outside option $Z_v$, and the expected optimal value of the future, given that the number of variables of group $(v,p)$ is one less than before.
    So if 
    $\mathcal{N}=\{n_{v,p}\}_{\forall v, \forall p}$ and $\mathcal{Z}=\{Z_v\}_{\forall v}$, then one can define
    \begin{equation*}
        \opt(\mathcal{N}, \mathcal{Z}) = \sum_{v,p} \Pr{}{\textrm{next variable is X=(v,p)}}\cdot\mathbb{E}_{X}\left[\max(X, \max_v Z_v, \opt(\mathcal{N}_{v,p},\mathcal{Z}'))\right]
    \end{equation*}
    where $\mathcal{N}_{v,p}$ is the same as $\mathcal{N}$, only that its corresponding value for $n_{v,p}$ is deducted by 1. Moreover, $\mathcal{Z}'$ is the same as $\mathcal{Z}$, unless we are at a ''critical point'', where we get inside the next block $\Bv_k$ for some support value $v$. In order to obtain $\mathcal{Z}'$ from $\mathcal{Z}$ we do the following; (1) if there's no new group starting it remains the same (2) if there is a new group starting, we change the outside option of the group that ended to $0$, and flip again the coin to see if we still have the variable $v$ available. 
    
    \bigskip
    \noindent
    \textbf{Size of the DP: } The total number of values in support is at most $\frac{\log 1/\e}{\log (1+\e)}=\tilde{O}{(1/\eps)}$ and the total number of possible probabilities is at most $20\frac{\log 1/\e}{\log (1+\e)}=\tilde{O}{(1/\eps)}$. Moreover, for all $v$ and $p$, $0\leq n_{v,p}\leq n$, and the value of $Z_v$ is either $0$ or $v$. This means the first component $\mathcal{N}$ of DP has at most $n^{\poly(\frac{1}{\eps})}$ possible choices, and the second component $\mathcal{Z}$ has $2^{\tilde{O}(1/\eps)}$ possibilities. If at the beginning of the sequence, the values are stored in $\mathcal{N}_0$ and $\mathcal{Z}_0$, then finding $\opt(\mathcal{N}_0,\mathcal{Z}_0)$ gives us the optimal answer. This takes time at most $O(n^{\text{poly}(1/\e)})$, yielding a PTAS. 
\end{proof}

\begin{corollary}
There exists a PTAS algorithm for the prophet secretary game $G_1$. This is a direct result of  \Cref{thm:equal-G1-G5} and \Cref{DP}.
\end{corollary}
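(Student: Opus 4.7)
The plan is to assemble the two preceding ingredients, namely Theorem~\ref{DP} (which gives an exact polynomial-time solver for $G_5$) and Corollary~\ref{thm:equal-G1-G5} (which says that $\opt(G_5)$ approximates $\opt(G_1)$ up to a $(1\pm O(\eps))$ factor), into an \emph{oracle} for $\opt(G_1)$, and then feed that oracle into Lemma~\ref{lem:error_propagation} to obtain an actual strategy, rather than merely a number.

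First I would describe the oracle. Given any prophet secretary instance $X_1,\ldots,X_n$, I would perform the preprocessing, discretization, bundling into special/small variables, and block decomposition described in Section~\ref{sec:ptas} in order to construct the corresponding Game $G_5$ instance; this transformation runs in time polynomial in $n$ and $1/\eps$. I would then invoke the dynamic program from the proof of Theorem~\ref{DP} to compute $\opt(G_5)$ exactly in time $n^{\poly(1/\eps)}$. Corollary~\ref{thm:equal-G1-G5} guarantees that the returned value $\widehat V$ satisfies $\widehat V \in [(1-c\eps)\opt(G_1), (1+c\eps)\opt(G_1)]$ for some absolute constant $c$. Rescaling the error parameter (running the procedure with $\eps' = \eps/(2c)$) and, if necessary, clipping $\widehat V$ against a trivial upper bound on $\opt(G_1)$ (for instance $\sum_i \mathbb{E}[X_i]$, which is always an upper bound on the online optimal), we obtain a quantity $\opt'(X) \in [(1-\eps)\opt(X), \opt(X)]$, as required by the hypothesis of Lemma~\ref{lem:error_propagation}.

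Second I would convert this oracle into a strategy. Following the recipe of Lemma~\ref{lem:error_propagation}, whenever a new variable arrives, the strategy calls the oracle on the remaining (not-yet-arrived) variables to compute a threshold, and accepts the current value if and only if it exceeds that threshold. The lemma then immediately yields an expected reward at least $(1-\eps)\opt(X)$. Since the oracle is called at most $n$ times and each call takes $n^{\poly(1/\eps)}$ time, the overall running time is still $n^{\poly(1/\eps)}$, establishing the PTAS.

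The only real subtlety, and the step I would be most careful about, is matching the one-sided error assumption in Lemma~\ref{lem:error_propagation}: Corollary~\ref{thm:equal-G1-G5} gives a two-sided guarantee, whereas the lemma requires $\opt'(X)\le \opt(X)$. The fix I sketched above (clip by a trivial upper bound, then rescale $\eps$) is routine, but it must be carried out at \emph{every} recursive call along the strategy's execution, not merely once, since the ``remaining instance'' that is passed to the oracle at step $k$ is itself a prophet secretary problem to which Corollary~\ref{thm:equal-G1-G5} must be re-applied. This is fine because the transformations defining $G_2,\ldots,G_5$ are instance-local and can be re-run on any sub-instance in polynomial time.
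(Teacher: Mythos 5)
Your proposal matches the paper's (unwritten) argument: the corollary is stated as an immediate combination of \Cref{thm:equal-G1-G5} and \Cref{DP}, with the conversion from a reward oracle to an actual strategy delegated to Lemma~\ref{lem:error_propagation} exactly as you describe, including re-running the whole reduction on each remaining sub-instance. One small caveat: clipping $\widehat{V}$ at $\sum_i\mathbb{E}[X_i]$ does not make the estimate one-sided (that quantity can vastly exceed $\opt$); the standard fix is to return $\widehat{V}/(1+c\eps)$, which lies in $[(1-2c\eps)\opt,\opt]$ --- but this is precisely the routine rescaling you already invoke.
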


\begin{remark}
The main idea in generalizing this scheme to general distributions is that we transform each distribution with support of size $k>2$, to $k$ distributions in the 2 point form, where each is one of the support values with its corresponding probability and the rest of the mass is on zero. These distributions arrive sequentially. 
\end{remark}

\bibliography{refs}
\bibliographystyle{plainnat}

\appendix 
 \section{Why Simple Grouping Ideas Fail}\label{apn:counterexample}
Our initial idea for grouping ``similar" variables was to group the ones that have similar (within $\e$) expected value and variance, to obtain the groups shown in the grid shown in Figure~\ref{fig:grouping_counterexample}. Ideally grouping the variables like that, and treating variables that ``fall" in the same square of the grid as the same will not cost more than $O(\e)$ in the optimal DP.

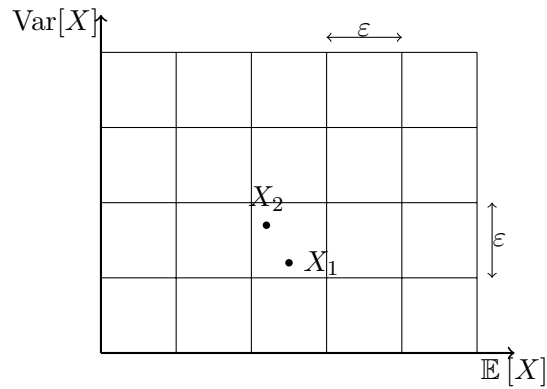
\begin{figure}[H]
    \centering
    \begin{tikzpicture}

\draw[step=1.0,black,thin] (0,0) grid (5,4);

\draw[->, thick] (0,0) -- (0,4.5);

\draw[->, thick] (0,0) -- (5.5,0);

\node[] at (5.5,-0.25) {$\E{}{X}$};

\node[] at (-0.6, 4.4) {$\text{Var}[X]$};

\draw (2.5, 1.2) node[circle,fill=black,inner sep=1pt,label=right:$X_1$]{};
\draw (2.2, 1.7) node[circle,fill=black,inner sep=1pt,label=above:$X_2$]{};

\draw[<->] (5.2,1) -- (5.2,2);
\node[] at (5.3, 1.5){$\e$};

\draw[<->] (3,4.2) -- (4,4.2);
\node[] at (3.5, 4.3){$\e$};

\end{tikzpicture}
    \caption{Simple grouping according to mean and variance that does not work.}
    \label{fig:grouping_counterexample}
\end{figure}

However consider the following example; there are two variables $X_A$ and $X_B$ defined as follows. 
\[ X_A = \begin{cases}
3/4, & \text{w.p. } 1/2\\
1/4, & \text{w.p. } 1/2
\end{cases}, \hspace{0.2cm}
X_B = \begin{cases}
    1, & \text{w.p. } 1/8\\
    1/2, &\text{w.p. } 3/4\\
    0, &\text{w.p. } 1/8
\end{cases}
\]

and observe that $\mu = \E{}{X_A} = \E{}{X_B} = 1/2$ and $\text{Var}(X_A) = \text{Var}(X_B) = 1/16$, therefore they would be treated as the same. 

In an instance with 2 variables only, if we calculate the DP solutions of having a $X_A$ variable first and having a $X_B$ variable first we get 
$\E{}{\opt(X_A, \mu)} = \E{}{\opt(X_A, 1/2)} = 5/8 = 0.625$ while $\E{}{\opt(X_B, \mu)} = \E{}{\opt(X_B, 1/2)} = 9/16 \approx 0.563$. These solutions differ by a constant even though the gap of the mean and variance of the variables is $0$, therefore we cannot hope for any qptas/ptas algorithm.

\section{Proofs from Section~\ref{sec:prelims}} \label{apn:prelims}

\errorProp*
\begin{proof}
We prove this using induction on the number of variables. When there is a single variable, then $\alg$ and $\opt$ both use $0$ as the threshold for $X_1$, thus having the same reward. Suppose that there are $k+1$ variables, and without loss of generality assume the first variable arrived is $X_1$. Notice that $\alg$ uses $\opt'(X_{>1})$ from the remaining variables $X_{>1}$ as the threshold for $X_1$. Then we can calculate the reward of $\alg$ as follows:
\begin{eqnarray*}
\alg(X_1,X_{>1}) &=& \alg(X_{>1})\Pr{}{X_1<\opt'(X_{>1})}\\
& &+ \mathbb{E}[X_1|\opt'(X_{>1})\leq X_1< \opt(X_{>1})]\cdot\Pr{}{ \opt'(X_{>1})\leq X_1< \opt(X_{>1})}\\
& &+ \mathbb{E}[X_1|X_1\geq \opt(X_{>1})]\cdot\Pr{}{X_1\geq \opt(X_{>1})}\\
&\geq& (1-\eps)\opt(X_{>1})\Pr{}{X_1<\opt'(X_{>1})}\\
& &+ \opt'(X_{>1})\cdot\Pr{}{ \opt'(X_{>1})\leq X_1< \opt(X_{>1})}\\
& &+ \mathbb{E}[X_1|X_1\geq \opt(X_{>1})]\cdot\Pr{}{X_1\geq \opt(X_{>1})}\\
&\geq&(1-\eps)\opt(X_{>1})\Pr{}{X_1<\opt(X_{>1})}\\
& &+ \mathbb{E}[X_1|X_1\geq \opt(X_{>1})]\cdot\Pr{}{X_1\geq \opt(X_{>1})}\\
&\geq&(1-\eps)\opt(X_1,X_{>1}).
\end{eqnarray*}
Here the first line is true by discussing the expected reward in different cases of $X_1$; the second line is true since by induction on $k$ remaining variables $X_{>1}$, $\alg(X_{>1})\geq (1-\eps)\opt(X_{>1})$; the third line is true since $\opt'(X_{>1})\geq (1-\eps)\opt(X_{>1})$. The last line is true since the optimal algorithm sets $\opt(X_{>1})$ as the threshold for $X_1$. Thus by induction $\alg(X)\geq (1-\eps)\opt(X)$ holds for any number of variables.

\end{proof}

\clGroups*
\begin{proof}[Proof of Claim~\ref{cl:DP_group_size}]

The way we defined the DP, the decision tree has one vertex for each combination of the number of remaining variables within each group. Let $S_i := [0, |G_i|]$, i.e., 
$S_i$ contains all possible values for the cardinality of the yet unobserved variables of a group during a run of the DP. 
Note that for the set of vertices (or the size of the DP) $V$ we have $|V| = \prod_{i=1}^{g} |S_i|$. Therefore, we want to split our $n$ variables into those $g$ groups. If at step $j$ we observe a variable $X_j \in \groups_i$, then the next state has $|\groups_i^{j+1}| := |\groups_i^{j}-1|$, where $\groups_i^{j}$ denotes the set of yet unobserved random variables belonging to the group before time $j$. When no variables remain in a group $\groups_i$, it is simply the empty set.

Now we show that $|V|$ is maximized when the variables are split equally among the groups, meaning that $|S_i| = \left\lceil \frac{n}{g} \right\rceil$ for all $i$. Assume that $|V|$ is maximized for some other split of the variables which will result in some other values $|\groups'_i|, i \in [n]$. Start again from the configuration where all groups have the same cardinality. Pick any pair $\groups_j, \groups_k$ and move $\ell$ variables from one group to the other, so that now $|\groups'_j|= \left\lceil \frac{n}{g} \right\rceil - \ell$ and $|\groups'_k|= \left\lceil \frac{n}{g} \right\rceil + \ell$, for some $\ell \in \mathbb{N}$. Then $|V| = \left\lceil \frac{n}{g} \right\rceil ^{g-2} \left( \left\lceil \frac{n}{g} \right\rceil - \ell \right) \left( \left\lceil \frac{n}{g} \right\rceil + \ell \right) < \left\lceil \frac{n}{g} \right\rceil ^{g}$, which is a contradiction.
\end{proof}

\errorPropValue*
\begin{proof}[Proof of Lemma~\ref{lem:value_error_propagation}]
We use induction on the number of variables that remain in the game, and show the lemma for the multiplicative case. The additive case follows exactly the same steps. 

For the base case, let $X$ be the variable that is currently instantiated, and $Y$ the single variable that remains. Recall that $\opt$ stops when $X \geq \E{}{Y} = \sum_i v_i \Pr{}{Y=v_i}$, and similarly $\opt'$ stops when $X \geq \E{}{Y'} = \sum_i v'_i \Pr{}{Y=v'_i}$. Since $v'_i \leq v_i \leq \gamma\cdot v'_i$ for every $v_i \in \supp(Y)$ it holds that
\begin{equation}\label{eq:base_case_means}
    \frac{\E{}{Y}}{\gamma} \leq \E{}{Y'} \leq \E{}{Y}.
\end{equation}
We distinguish in the following cases.

\begin{itemize}
    \item Cases 1 \& 2 : $\opt_1$ and $\opt_1'$ either both stop and take $X$ or both continue and take $Y$, in which case their values are the same.
    \item Case 3: $\opt_1'$ stops and takes $X$, while $\opt_1$ continues and gets $\E{}{Y}$. Using equation~\eqref{eq:base_case_means} we have that $\opt_1' = X \geq \E{}{Y'} \geq \frac{\E{}{Y}}{\gamma} = \opt_1/\gamma$.
    \item Case 4: $\opt'$ continues and receives $\E{}{Y}$ while $\opt_1$ stops and receives $X$. Then $\opt'_1 = \E{}{Y} \geq \E{}{Y'} \geq X = \opt_1$, where we used that since $\opt'_1$ continued it holds that $\E{}{Y'}\geq X$.
\end{itemize}

For the induction, we assume the claim holds for at most $n$ variables, and show it for $n+1$. Let $X_1$ be the variable that is currently instantiated. We distinguish in the following cases.

\begin{itemize}
    \item Cases 1 \& 2: $\opt_{1:n+1}$ and $\opt_{1:n+1}'$ either both stop and take $X_1$ or both continue and take $\opt_{2:n+1}$, in which case their values are the same.
    \item Case 3: $\opt_{1:n+1}'$ stops and takes $X_1$, while $\opt_{1:n+1}$ continues and gains $\opt_{2:n+1}$ on expectation, then  we get that
     \begin{align*}
     \opt_{1:n+1}' & = X_1 & \text{since }\opt' \text{ stopped}\\
     & \geq \opt'_{2:n+1}  & \opt' \text{ is a threshold policy}\\
     & = \frac{1}{n} \sum_{i\in [n]} \opt'(X_i| X_{-i}) & \text{Def. of }\opt' \eqref{eq:DP}\\
   & \geq \frac{1}{\gamma} \cdot \frac{1}{n} \sum_{i\in [n]} \opt(X_i| X_{-i}) & \text{Induction hypothesis}\\
     &= \frac{\opt_{2:n+1}}{\gamma} = \frac{\opt_{1:n+1}}{\gamma} & \text{Def~\eqref{eq:DP} and $\opt$ stops} 
     \end{align*}
    
    \item Case 4: $\opt'_{1:n+1}$ continues while $\opt_{1:n+1}$ stops and receives $X_1$. Since $\opt'_{1:n+1}$ continues we have that $\opt'_{1:n+1} = \frac{1}{n} \sum_{i\in [n]} \opt'(X_i| X_{-i}) \geq X_1 = \opt_{1:n+1}$. 
\end{itemize}
\end{proof}

\errorProb*
\begin{proof}[Proof of Lemma~\ref{lem:no_propagation_probability_error}]
Observe that in the proof of Lemma~
\ref{lem:value_error_propagation} holds using exactly the same steps for the main induction. The only step that uses the fact that we have different values is the base case, which holds the same when the guarantee holds for $p_i'$ instead of $v_i'$.
\end{proof}

\section{Proofs from Section~ \ref{sec:qptas}}\label{apn:qptas}
\subsection{Proofs from Section~\ref{subsec:qptas_2_point}}\label{apn:qptas_2_points}

\vanishingGain*
\begin{proof}[Proof of Claim~\ref{cl:small_mean_vanishing_gain}]
We will try to bound the expected contribution of these variables in the DP. The worst case will be if every $X_i$ has $\E{}{X_i} \leq \frac{\e}{n}$ and we have to transform all of them. We assume wlog that we arrange the variables by decreasing $v_i$, i.e. $v_i \geq v_2 \geq \ldots \geq v_n$. Then the contribution of an instance with $n$ such variables to the optimal, denoted by $\opt^{\text{small}}$, is 
\begin{align*}
   \opt^{\text{small}} \leq \E{}{\max_{i\in [n]}X_i} = \sum_{i=1}^n \lp( v_i p_i \prod_{j<i}(1-p_j) \rp) 
    \leq \frac{\e}{n} \cdot n 
    < 1.5\e \cdot \opt ,
\end{align*}
where the second inequality holds because the expression is maximized when $p_i$ tends to 1 for every $i$ (and we have also assumed that $v_i \cdot p_i \leq \frac{\e}{n}$) and the last one because we have normalized OPT (see Claim \ref{cl:bounded_opt_off}).

\end{proof}

\constloss*
\begin{proof}[Proof of Claim~\ref{cl:single-large-value}]
    We show the two inequalities separately. 
    \paragraph{For $\opt(X)\leq \opt(Y)$:} Assume that only one  variable in the instance falls in Case 2, thus it is the only one that we change. 

    \begin{itemize}
        \item \emph{Base case}: we have 2 variables $X_1, X_2$ in the original instance, and $X_1, Y_2$ in the changed instance. Since the order is random, there are two cases by coupling the arrival orders of the two instances.
        If $X_1$ arrives first, both $\opt(X)$ and $\opt(Y)$ see the same realization and the same threshold, and therefore have the same gain.
        If $Y_2$ (resp. $X_2$) arrives first then
        \begin{align*}
        \opt(Y) & = \E{}{\max(Y_2, \E{}{X_1})}  \\
        & = v'_2 p'_2 + (1-p'_2)\E{}{X_1}\\ 
        & \geq v_2 p_2 + (1-p_2)\E{}{X_1}\\ 
        & = \E{}{\max(X_2, \E{}{X_1})} \\
        & = \opt(X),
        \end{align*}
        where the inequality holds because from the transformation we always have $p'_2 \leq p_2$.
        \item \emph{Induction}: assume that the inequality holds for at most $n-1$ variables, and that the only changed variable is $Y_j$. Then, for $n$ variables we distinguish in the following cases: 
        \begin{itemize}
            \item Case 1: variable $Y_j$ (resp. $X_j$) arrives first. Then we have 
            \begin{align*}
\opt(Y)  & = v'_j p'_j + (1-p'_j) \opt(Y_{2:n}) \\
& \geq  v_j p_j + (1-p_j) \opt(X_{2:n}) \\ 
&= \opt(X),
            \end{align*} 
    where the inequality holds since $p'_j \leq p_j$ and from the induction hypothesis.
            \item Case 2: any of the non-changed variables $Y_k$ (resp. $X_k$) for some $k \neq j$ arrives first. Then, from the induction hypothesis we get
            \begin{align*}
\opt(Y)  & = \E{}{\max(Y_k, \opt(Y_{2:n})} \\
& \geq \E{}{\max(X_k, \opt(X_{2:n}))} \\ &= \opt(X).
            \end{align*}
        \end{itemize}
    \end{itemize}

    If there are $\ell \in [n]$ variables in Case 2, we can apply the transformations sequentially: At each time, pick at random a variable in Case 2 and transform it. Repeat until all $\ell$ random variables are transformed. By applying multiple times the arguments above, we know that we end up with the instance $Y$, for which it holds that $\opt(X) \leq \opt(Y)$.

    \paragraph{For $\opt(X) \geq \lp( 1-\e^2\rp)\opt(Y)$:} 
    To start, assume that we have again only one changed variable. We define an algorithm $\alg'$ as follows: $\alg'$ is given an instance $Z$ and, by coupling the realizations with instance $Y$, it simply always stops at the same index that $\opt$ stops at $Y$. 
    \footnote{Note that this is no longer a threshold algorithm, since the decision to stop in the instance $Z$ is based on the realizations of another instance $Y$ i.e. we might have ($Z<$threshold) but still stop.}
    Observe that $\alg'(Y) = \opt(Y)$ (since they make the exact same decisions) and that $\opt(X) \geq \alg'(X)$.

    We now define a new algorithm $\alg''$ for instance $X$: $\alg''$ stops on $X$ at the same index (and, thus, at the same variable when it is not the changed one) as $\alg'$ on instance $Y$, except in the case that $\alg'$ stops at the transformed variable $Y_i$. 
    In this case, $\alg''$ stops at the corresponding original variable $X_i$ whenever it is realized, that is, with an additional probability of $p_i - p'_i$ compared to $\alg'(Y)$. 
    Coupling the realizations of the two instances, observe that $\alg'(Y)$ and $\alg''(X)$ obtain the same reward when they stop with a random variable before the transformed one. 
    When they encounter the transformed variable $Y_i$ (resp. the original variable $X_i$), $\alg'(Y)$ and $\alg''(X)$ stop with different probabilities. But since the transformation is expectation-preserving, $\alg'$ and $\alg''$ get the same expected reward from the transformed and the original variable, respectively. But now with an additional probability of $p_i - p'_i$ we have that $\alg''(X)$ stops but $\alg'(Y)$ continues. Therefore, when this occurs, $\alg'(Y)$ receives an additional reward. Let us write $Y_{\text{rest}}$ for the remaining instance in this case. Then we have that 
%
\begin{align*}
    \opt(X) &\geq \E{}{\alg''(X)} \\ &= \E{}{\alg'(Y)} - \left(p_i - p'_i \right) \E{}{\alg'(Y_{\text{rest}})}  \\
    &\geq \E{}{\alg'(Y)} - p_i\E{}{\alg'(Y)} \\
    &= \left(1- p_i \right) \opt(Y),
    \end{align*}
 where the second inequality holds since $\alg'(Y) \geq \alg'(Y_{\text{rest}})$. 
 If we again have more variables to transform, we apply the changes sequentially and we get the 
 \[\opt(X)\geq \prod_i(1-p_i)\opt(Y)\] 
 where the product is taken over all variables with support value $>\frac{1}{\eps^2}$. Observe that the probability that at least one of the variables get realized is $1-\prod_i(1-p_i)\leq\eps^2$, otherwise the gambler can obtain at least 1 reward in expectation from these variables with support value $>\frac{1}{\eps^2}$. Thus $\opt(X)\geq (1-\eps^2)\opt(Y)$.
\end{proof}

\qptasMain*
\begin{proof}[Continued proof of Theorem~\ref{thm:QPTAS_2_point}]
We conclude the proof of Theorem~\ref{thm:QPTAS_2_point} by showing how to bound the error for the constant size support case.
\paragraph{Bounding the error:} 
We demonstrate here how the proofs of the arguments for the 2-point distributions can be adapted in order to give a similar error for the constant size support for each case. For the preprocessing note that the difference now is that for every variable $X_i$, for every point $v_i \in \text{supp}\lp( \dist_i \rp)$ that satisfies either $v_i \cdot p_i< \frac{\eps}{cn}$ or $v_i<\e$, we move all their mass to value $0$. In the case of variables with points of value $v_i < \e$, we will still lose in the worst case $1.5 \e \opt$ in our single-choice objective by removing them all. For the variables with points that satisfy $v_i p_i < \frac{\eps}{cn}$ the following change occurs in the proof of Claim \ref{cl:small_mean_vanishing_gain}: instead of now summing over $n$ points (one per random variable), we will now have to consider at most $n \cdot c$ such points in the summation. But since we scaled down the bound of $v_ip_i$ by a factor of $c$ we still have a multiplicative loss of $(1-1.5c\eps)$. For the Case 1 variables, Lemmas~\ref{lem:value_error_propagation} and \ref{lem:no_propagation_probability_error} hold for general distributions; as a result, the multiplicative error here remains $(1+\e)^2$. 
For the points in Case 2, as Claim~\ref{cl:single-large-value} still holds, we do not need to make any changes to the case.
\end{proof}

\subsection{Proofs from Section~\ref{subsec:qptas_general}}\label{apn:qptas_general}

\qptasBundling*

\begin{proof}[Proof of Lemma~\ref{lem:bundling}]
We show that if one variable changes, then the optimal thresholds remain the same, and the lemma follows by applying this process as many times as the number of variables.

Assume only one variable changes, we show the lemma using induction in the total number of variables of the instance. For the base case, there is only one variable remaining, $X$ in the original instance and $X'$ in the transformed one. Thresholds therefore are $\theta_1 = \E{\dist}{X} = \E{\dist'}{X'}=\theta'_1$. 

We now assume the lemma holds for at most $n-1$ variables, and there are $n$ variables $X_1, \ldots, X_n$, where the transformed instance has a variable $X_1'$ transformed according to the bundling~\ref{eq:qptas_general_bundling} and denote by $\dist'$ the product distribution of all initial variables and $X'_1$.  Then we have the following
\begin{align*}
\theta'_{1:n} & = \E{\dist'}{\frac{1}{n} \lp( \opt(X'_1|X_{-1}) + \sum_{i=2}^n \opt(X_i| X_{-i}) \rp) } &\text{Definition of }\theta’ \\
& = \frac{1}{n}  \lp(  \E{\dist'}{ \max(X'_1, \theta_{-1}) } + \E{\dist}{\sum_{i=2}^n \opt(X_i| \theta'_{-i})  } \rp) & \\
& = \frac{1}{n}  \lp( \vphantom{\sum_i^n}  \E{\dist'}{\max(X'_1, \theta_{-1}) | X'_1<1}\Pr{}{X'_1<1} \rp.& \\
& \lp. \hspace{2cm} + \E{\dist'}{ X'_1| X'_1>1}\Pr{}{X'_1>1} + \E{\dist}{\sum_{i=2}^n \opt(X_i| \theta_{-i})  } \rp) & \text{Ind. hyp. \& Fact~\ref{cl:algo_thresholds}}\\
& = \frac{1}{n} \lp( \vphantom{\sum_i^n}   \E{\dist}{\max(X_1, \theta_{-1}) | X_1<1}\Pr{}{X_1<1} \rp. & \\
& \lp. \hspace{2cm} + \E{\dist}{ X_1| X_1>1}\Pr{}{X_1>1}  + \E{\dist}{\sum_{i=2}^n \opt(X_i| \theta_{-i})  }  \rp) & \text{Definition of bundling}\\
& = \theta_{1:n}
\end{align*}

\end{proof}

\section{Proofs from Section~\ref{sec:ptas}}\label{apn:ptas}
\subsection{Proofs from Section~\ref{subsec: ptas binary small value}} \label{apn:ptas_binary}

\qvLb*
\begin{proof}
    Notice that the probability that at least one of the variables in $\Xssv$ gets realized with value $v$ is at most $1-e^{-\Qsvo}<\Qsvo$. Thus in the optimal solution, the total contribution from the variables in $\Xssv$ with value $v$ is:
    \begin{itemize}
        \item At most $\eps^{10}v\leq\eps^8$ for $v\leq \eps^{-2}$.
        \item At most $\Qsvo\cdot v<\eps^{10}\cdot Q_{max}\cdot v$ for $v\geq \eps^{-2}$. Note that since $\opt\leq 1$, it has to be that $Q_{max}\cdot v_{max}\leq 1$. Therefore the total loss in this case is at most $\eps^{10}$.
    \end{itemize}
   Therefore, removing those variables lead to at most $\eps^{8}$ loss in total reward.
\end{proof}

\skippingInitial*
\begin{proof}[Proof of Observation~\ref{skipping initial}]
   If $\Qsv=\Qsvo$, then nothing is changed. Otherwise, note that the probability that $v$ is realized in the remaining variables is at least
   \begin{equation*}
        1-\prod_{X_j\textrm{ in the remaining of }\Xsv^*}(1-p_i) = 1-\exp\left(-\sum_{X_j\textrm{ in the remaining of }\Xsv^*} q_j \right)\geq 1-e^{-1/\eps}>1-\eps.
   \end{equation*}
   Therefore even after ignoring $X_i$, the probability that $v$ is realized is at least $1-\eps$. Thus, for any strategy that wants to accept $X_i=v$, the gambler can instead ignore it and accept the first realized $v$ later. The gambler fails to accept the same reward $v$ with probability at most $\eps$.

\end{proof}

\smalErrSamples*
\begin{proof}[Proof of Claim~\ref{lem:smalErrSamples}]
By Chebyshev's inequality we have, 
\[\Pr{}{|Y-\E{}{Y}|\geq \eps\delta A} \leq \frac{\text{Var} [Y]}{(\eps\delta A)^2}.\]
Before bounding $\text{Var} [Y]$, we first bound a few useful terms.
\[\E{}{Y}=\eps\E{}{X}=\delta A\]
\begin{eqnarray*}
\E{}{\sum_{i<j}2Y_iY_j}
&=&m(m-1)\E{}{Y_1Y_2}
=\frac{m(m-1)}{n(n-1)}\sum_{i<j}2A_iA_j\\
&=&\frac{m(m-1)}{n(n-1)}\left( \lp(\sum_i A_i\rp)^2-\sum_i A_i^2\right)
<\delta^2\lp(A^2-\sum_iA_i^2\rp)
\end{eqnarray*}
Now we bound $\text{Var} [Y]$. By the definition of variance, 
\begin{eqnarray*}
\text{Var} [Y]&=&\E{}{Y^2}-(\E{}{Y})^2\\
&=&\mathbb{E}\lp(\sum_i Y_i\rp)^2-\delta^2 A^2\\
&=&\sum_i \E{}{Y_i^2 } +2\sum_{1\leq i<j\leq m}\E{}{Y_iY_j}-\delta^2 A^2\\
&<&\delta\sum_{i\leq n} A_i^2 + \delta^2\lp(A^2-\sum_iA_i^2\rp) -\delta^2 A^2\\
&=&(\delta-\delta^2)\sum_{i\leq n} A_i^2\\
&<&\delta\sum_{i\leq n} A_i^2
\leq \delta\max_i A_i\sum_{i\leq n} A_i
\leq \delta^2\eps^{10}A^2.
\end{eqnarray*}
The lemma follows immediately from Chebyshev's inequality.
\end{proof}

\subsection{A PTAS for general distributions}\label{subsec:ptas_general}


When the support size of each random variable can be more than 1, suppose that each variable $X_i$ is drawn from a distribution, that for each possible support value $v$ (being a power of $(1+\eps)$ between $\eps$ and $1/\eps^2$ or $v_{\max}$), $X_i=v$ with probability $\pv_i$. Define $\qv_i=-\ln(1-\pv_i)$. We observe that when $\pv_i<\eps$, if $\eps$ is small enough, $\pv_i<\qv_i<\pv_i+(\pv_i)^2$. Thus $\qv_i$ is a $(1+\eps)$-approximation of $\pv_i$ when $\pv_i$ is small enough.
\paragraph{First Preprocessing.}
For each support value $v\leq 1/\eps^2$, similar to the case with binary distribution define $\Xssv$ to be the set of variables with small probabilities on support $v$:
\[\Xs^*_v=\{X_i|X_i\in \Xsv,\ s.t.\ \qv_i<\eps^{20}\}.\]

Let $\Qsvo=\sum_{X_{i}\in \Xssv}\qv_i$ denote the sum of $\qv_i$ for all random variables with support values $v$ being realized with small probabilities. If $\Qsvo<\eps^{10}$, it is safe to ignore the support $v$ of all variables in $\Xssv$, as they contribute less than $\eps^4\opt$ in the optimal selection. To be more precise, for a variable $X_i\in\Xssv$, it is safe to replace $X_i$ with $X_i'$ which is defined by moving all probability mass on $v$ to $0$: $\Pr{}{X'_i=v}=0$, and $\Pr{}{X'_{i}=v'}=\Pr{}{X_i=v}$ for every $v'\neq v$. Thus later in the section, we assume that $\Qsvo>\eps^{10}$.

Let $Q^*_v=\min(\eps^{-1},\Qsvo)$. The same as in the binary distribution case, whenever the gambler encounters a variable $X_i\in \Xssv$, while the remaining variables in $\Xssv$ have a total $\qv_i$ more than $Q^*_v$, it is safe to ignore this variable $X_i$ if the realized value is $v$. We again define $\Qov=\eps^4\Qsv$, and $\deltav=\frac{\Qov}{\Qsvo}$.

For each variable $X_i$, define its ``special truncation'' $\Xl_i$ as follows: for each support value $v>0$, $\Pr{}{\Xl_i=v}=0$, if $\qv_i<\eps^{20}$; $\Pr{}{\Xl_i=v}=\pv_i$, if $\qv_i>\eps^{20}$. All the removed probability masses are moved to $q^0_i=\Pr{}{\Xl_i=0}$. In other words, the small probability part of the variable is truncated. As there are only $\tilde{O}(1/\eps)$ different possible discretized probabilities for each support value, and $\tilde{O}(1/\eps)$ distinct discretized support values, thus there are only $\poly(1/\eps)$ distinct truncated variables. 
For support value $v=v_{\max}$, similar to the binary distribution case we define the changes to the definitions above. 
$\Xssv$ is defined by \(\{X_i|X_i\in \Xsv,\ s.t.\ \qv_i<\eps^{20}\Qmax\}\). Support $v$ of all variables in $\Xssv$ can be ignored if $\Qsvo<\eps^{10}\Qmax$. The ``special truncation'' $X_i^+$ removes the probability mass on $v$ for each variable $X_i$ with $\qv_i<\eps^{20}\Qmax$.

\paragraph{Second Preprocessing.} Now we are ready to propose variants of online selection games, which all have a similar optimal reward to the original prophet secretary game. 
Again, let $Y_1,Y_2,\ldots,Y_n$ be the random realization of all variables, with $Y_i$ being the $i$th random variable that arrives. 
For each support value $v>0$, partition all random variables to $\eps^{-4}+1$ blocks $\Bv_0, \Bv_{1}, \ldots, \Bv_{\eps^{-4}}$ such that each block contains a sequence of consecutive variables. 
In particular, $\Bv_0$ contains the first $n\lp(1-\frac{Q^*_v}{\Qsvo}\rp)=n-\deltav n/\eps^4$ arrived variables; each of $\Bv_{1},\ldots,\Bv_{\eps^{-4}}$ contains $\deltav n$ consecutive variables. 
We now redefine different versions of prophet secretary games in the binary distribution case to the general setting.

\begin{figure}[H]
    \centering
    \pgfmathsetmacro{\dist}{2.5}
\pgfmathsetmacro{\distV}{2}
\begin{tikzpicture}

	\node (g1) at (0,0){\gr{\LARGE $G_1$}};
 	\node (g2) at (\dist,0){\gr{\LARGE $G_2$}};
  	\node (g3) at (2*\dist,0){\gr{\LARGE $G_3$}};
   	\node (g3s) at (3*\dist,0){{\LARGE $G_3^*$}};
    \node (g4) at (4*\dist,0){\gr{\LARGE $G_4$}};
    \node (g5) at (5*\dist,0){\gr{\LARGE $G_5$}};

    \draw[->,thick, gray] (g2) to [in=315,out=225] node[below] {$(1-\e)\leq $}(g1);
   \draw[->,thick, gray] (g3) to [in=315,out=225] node[below] {$(1-\e)\leq $}(g2);
      \draw[->,thick] (g3s) to [in=315,out=225] node[below] {$(1-\e)\leq $}(g3);
    \draw[->,thick] (g4) to [in=315,out=225] node[below] {$\leq $}(g3s);
   \draw[->,thick,gray] (g5) to [in=315,out=225] node[below] {$(1-\e)\leq $}(g4);

	\draw[->,thick,gray] (g1) to [in=135,out=45] node [above] {$\geq$} (g2);
    \draw[->,thick,gray] (g2) to [in=135,out=45] node [above] {$\geq$} (g3);
        \draw[->,thick] (g3) to [in=135,out=45] node [above] {$\geq$} (g3s);
    \draw[->,thick] (g3s) to [in=135,out=45] node [above] {$\geq 1/(1+\e)$} (g4);
    \draw[->,thick,gray] (g4) to [in=135,out=45] node [above] {$\geq$} (g5);

  \node[] (thmG12) at (0.5*\dist,0) {Lem~\ref{thm:equal-G1-G2-general}};

    \node[] (thmG23) at (1.5*\dist,0) {Lem~\ref{thm:equal-G2-G3-general}};

    \node[] (thmG34) at (2.5*\dist,0) {Lem~\ref{thm:equal-G3-G3s-general}};
    
    \node[] (thmG3s3) at (3.5*\dist,0) {Lem~\ref{thm:equal-G3-G4-general}};
    
    \node[] (thmG45) at (4.5*\dist,0) {Lem~\ref{thm:equal-G4-G5-general}};
  
\end{tikzpicture}
    \caption{Series of reductions we use to prove Theorem~\ref{DP-gen}. The games in grey are mostly the same as the binary case, while game $G^*_3$ is a new addition to handle the general case.}
    \label{fig:summary_ptas_general}
\end{figure}
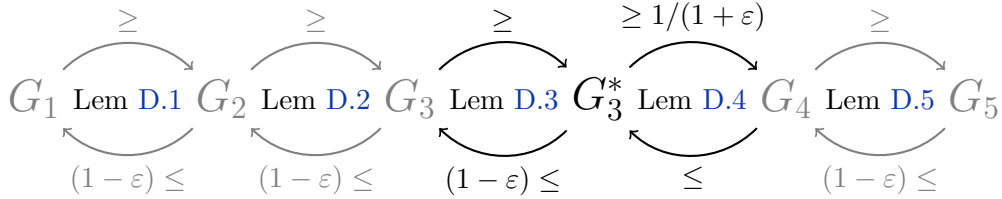

\paragraph{Original prophet secretary game $G_1$.} There are $n$ random variables $X_1,X_2,\ldots,X_n$ arriving in random order. At each time, when a variable arrives, the gambler observes the realized value and decides whether to take it. If the play decides not to take the realized value, the value cannot be taken later.

\bigskip
\noindent
\textbf{Prophet secretary game $G_2$ with possible failure.} The gambler plays prophet secretary game $G_1$, but the reward is set to 0 if there exists a possible support value $v$ and a block $\Bv_k$ such that for the variables $X_{i_1},X_{i_2},\ldots,X_{i_{\ell}}\in \Bv_k\cap \Xssv$, $\sum_{X_{i_j}\in \Bv_k}\qv_{i_j}\not\in[(1-\eps)\Qov,(1+\eps)\Qov]$. In other words, if the realization probability of $v$ of all variables with a small probability in a block is far from expectation, no reward is given to the gambler. The same as in the binary distribution case, we have the following lemma.

\begin{lemma}\label{thm:equal-G1-G2-general}
$(1-\eps)\opt(G_1)\leq \opt(G_2)\leq\opt(G_1)$.
\end{lemma}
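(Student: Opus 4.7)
The plan is to mirror the proof of Lemma~\ref{thm:equal-G1-G2} for the binary case, since the only difference between $G_1$ and $G_2$ is a bad event whose definition is structurally identical in both the binary and general settings: the deviation of $\sum_{X_{i_j}\in \Bv_k\cap \Xssv} q^v_{i_j}$ from its expectation $\Qov$. The core technical tool is again Claim~\ref{lem:smalErrSamples}, which is a Chebyshev-style bound on sampling without replacement, and transfers verbatim once we set the quantities $A_i$ appropriately for each fixed support value $v$.

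First I would fix a discretized support value $v$ (either a power of $(1+\eps)$ between $\eps$ and $1/\eps^2$, or $v_{\max}$) and a block index $k\geq 1$. Under the uniformly random arrival order, the set of variables landing in $\Bv_k$ is a uniformly random size-$\deltav n$ subset of the $n$ variables drawn without replacement. Setting $A_i = \qv_i$ for $X_i\in \Xssv$ and $A_i=0$ otherwise gives $A=\Qsvo$; by the first preprocessing we maintain $\Qsvo\geq \eps^{10}$ (respectively $\Qsvo\geq \eps^{10}\Qmax$ in the $v_{\max}$ case). The small-probability condition $\qv_i<\eps^{20}$ (respectively $\qv_i<\eps^{20}\Qmax$) together with the definitions $\Qov=\eps^4\Qsv$ and $\deltav=\Qov/\Qsvo$ yields $A_i\leq \eps^{10}\deltav A$ in both subcases (this needs a short case split on whether $\Qsvo\leq \eps^{-1}$ or not, controlling the size of $\Qov$). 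Hence Claim~\ref{lem:smalErrSamples} applies and gives
\[
\Pr{}{\Big|\sum_{X_{i_j}\in \Bv_k\cap \Xssv}\qv_{i_j}-\Qov\Big|\geq \eps\Qov}\leq \eps^{8}.
\]

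Next I would take a union bound over the $\eps^{-4}$ block indices $k\geq 1$ and the $\tilde{O}(1/\eps)$ distinct discretized support values $v$. The total probability that the failure condition of $G_2$ is triggered is at most $\eps^{8}\cdot \eps^{-4}\cdot \tilde{O}(\eps^{-1})=\tilde{O}(\eps^{3})=O(\eps^2)$. Since the reward of $G_2$ is exactly the reward earned in $G_1$ multiplied by the indicator that no bad event occurs, coupling the randomness of the two games gives $\opt(G_2)\geq (1-O(\eps^2))\opt(G_1)\geq (1-\eps)\opt(G_1)$. The reverse inequality $\opt(G_2)\leq \opt(G_1)$ is immediate, as $G_2$ only imposes an additional zero-reward condition on top of $G_1$.

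The only place where genuinely new work is needed relative to the binary proof is verifying that the parameter choices in the preprocessing still satisfy the hypothesis $A_i\leq \eps^{10}\deltav A$ of Claim~\ref{lem:smalErrSamples} uniformly across all support values, including the exceptional $v_{\max}$ case whose ``smallness'' threshold is defined relative to $\Qmax$ rather than as an absolute constant. This is a short calculation, and once it is carried out the rest of the argument is a mechanical transcription of the binary-case proof. I therefore do not expect any substantial obstacle beyond bookkeeping.
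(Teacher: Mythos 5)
Your proposal matches the paper's treatment exactly: the paper gives no separate proof for the general case, stating only that it follows ``the same as in the binary distribution case,'' and the binary-case proof is precisely the argument you describe — apply Claim~\ref{lem:smalErrSamples} with $A_i=\qv_i$, $A=\Qsvo$, $\delta=\deltav$ for each fixed $v$ and block, then union bound over the $\eps^{-4}$ blocks and $\tilde{O}(1/\eps)$ support values. Your added remark that the hypothesis $A_i\leq\eps^{10}\deltav A$ must be re-verified for the general parameters (including the $v_{\max}$ case) is a reasonable point of care that the paper itself elides, but it does not change the route.
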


\bigskip
\noindent
\textbf{Prophet secretary game $G_3$ with lazy decision.} The gambler plays prophet secretary game $G_2$. However, for every variable $Y_i$ in $\Xssv\cap \Bv_0$, whenever its realized value is $v$, the variable is removed and cannot be accepted by the gambler. Or equivalently, $Y_i$ is replaced by variable $Y_i'$ such that for every positive support value $v'\neq v$, $\Pr{}{Y_i'=v'}=\Pr{}{Y_i=v'}$; for support value $v$, $\Pr{}{Y'_i=v}=0$. The same as in the binary case, we have the following lemma.

\begin{lemma}\label{thm:equal-G2-G3-general}
$(1-\eps)\opt(G_2)\leq \opt(G_3)\leq\opt(G_2)$.
\end{lemma}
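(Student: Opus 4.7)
The plan is to mirror the binary-distribution argument of Lemma~\ref{thm:equal-G2-G3}, while being careful that in the general setting the transformation from $G_2$ to $G_3$ only zeros out the mass on the single support value $v$ for each $Y_i \in \Xssv \cap \Bv_0$, leaving the probabilities of its other support values (and hence the option of accepting them) intact. The upper bound $\opt(G_3) \leq \opt(G_2)$ is immediate by coupling: match the arrival orders and all realizations so that they agree on every variable except possibly on the label $v$ of variables in $\Xssv \cap \Bv_0$. Any strategy for $G_3$ is then a legal strategy for $G_2$ with equal reward, and $G_2$ has strictly more options, so $\opt(G_3) \leq \opt(G_2)$.

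For the lower bound $(1-\eps)\opt(G_2) \leq \opt(G_3)$, the first step is to establish a generalization of Observation~\ref{skipping initial}: if a gambler considering accepting value $v$ from $Y_i \in \Xssv$ while the remaining variables in $\Xssv$ carry total $q^v$ at least $(1-\eps)\Qsv$, then skipping this $v$ and waiting for a later realization of $v$ from some $X_j \in \Xssv$ succeeds with probability at least $1-\exp(-(1-\eps)\Qsv) \geq 1 - O(\eps)$; this uses $\Qsv = \eps^{-1}$ in the saturated regime and a direct Taylor estimate $1 - e^{-t} \geq t/2$ for $t < 1$ in the other regime, together with the fact that the ``value'' of $v$ being captured is bounded by $\opt(G_3)$ via the same argument as Case 2(b) of the proof of Lemma~\ref{thm:equal-G3-G4}. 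Next, condition on the high-probability event established in Lemma~\ref{thm:equal-G1-G2-general} via Claim~\ref{lem:smalErrSamples} that for every support value $v$ and every block $k \geq 1$, $\sum_{Y_j \in \Bv_k \cap \Xssv} q^v_j \in [(1-\eps)\Qov,(1+\eps)\Qov]$; summing over $k \geq 1$, the total $q^v$ mass in $\Xssv \cap (\Bv_1 \cup \cdots \cup \Bv_{\eps^{-4}})$ is at least $(1-\eps)Q^*_v$, meeting the hypothesis of the generalized observation for each $v$.

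To convert this into a strategy comparison I would proceed by induction over the $\tilde O(1/\eps)$ distinct support values $v_1,\ldots,v_k$, introducing intermediate games $G_3^{(j)}$ which zero out the $v_\ell$-mass on $\Xs^*_{v_\ell} \cap \mathbf{B}_0^{v_\ell}$ for $\ell \leq j$, so that $G_3^{(0)} = G_2$ and $G_3^{(k)} = G_3$. The inductive step $\opt(G_3^{(j)}) \geq (1 - O(\eps^3))\opt(G_3^{(j-1)})$ is proved by taking any threshold-based optimal strategy $\mathcal{A}$ for $G_3^{(j-1)}$ and building $\mathcal{A}'$ for $G_3^{(j)}$ that copies all of $\mathcal{A}$'s decisions except when $\mathcal{A}$ accepts value $v_j$ on some $Y_i \in \Xs^*_{v_j}\cap \mathbf{B}_0^{v_j}$; in that event $\mathcal{A}'$ skips and subsequently accepts the first realization of $v_j$ that appears in $\Xs^*_{v_j}$, which by the generalized observation succeeds up to a multiplicative $1 - O(\eps)$ factor on the portion of reward due to such events. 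Composing the $\tilde O(1/\eps)$ inductive steps and absorbing the accumulated loss (a geometric sum with ratio $1 - O(\eps^3)$ over $\tilde O(1/\eps)$ steps is still $1 - O(\eps^2)$) yields the claimed bound after a final rescaling of $\eps$.

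The main obstacle I anticipate is controlling the coupling when $\mathcal{A}'$ waits for a later realization of $v_j$: between the skipped opportunity in $\Bv_0^{v_j}$ and the next realization, the gambler may be offered other variables (with potentially different support structure) whose acceptance policies could conflict with waiting. The inductive decomposition over support values is what avoids compounding these errors, because within a single step the only deviation between $\mathcal{A}$ and $\mathcal{A}'$ concerns the value $v_j$ itself, while all other support values of the same variable (and of intervening variables) are treated identically; thus the argument reduces cleanly to the single-value estimate already handled in the binary case.
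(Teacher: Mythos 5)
Your upper bound and the first half of your lower-bound argument (conditioning on the $G_2$ event so that the $q^v$-mass surviving past $\Bv_0$ is at least $(1-\eps)Q^*_v$, then skipping the $v$-realizations in $\Xssv\cap\Bv_0$ and waiting for a later realization) is exactly the route the paper takes: the general case is proved verbatim as in Lemma~\ref{thm:equal-G2-G3}, via Observation~\ref{skipping initial}. One small remark: your ``other regime'' with the estimate $1-e^{-t}\geq t/2$ is moot, since $\Bv_0=\emptyset$ unless $\Qsvo>\eps^{-1}$, so whenever anything is actually removed we have $\Qsv=\eps^{-1}$ and the recovery probability is $1-e^{-(1-\eps)/\eps}\geq 1-O(\eps)$ directly.

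The genuine gap is in your error accounting for the induction over support values. You assert a per-step factor $1-O(\eps^3)$ and then compound over $\tilde O(1/\eps)$ steps, but nothing in your argument delivers $\eps^3$. The $\eps^3$ in Lemma~\ref{thm:equal-G3-G4} comes from the fact that the frontloaded mass in a single block is $\Qov\leq\eps^3$, so the event being mishandled has probability $O(\eps^3)$. Here, by contrast, the skipped set $\Xssv\cap\Bv_0$ can carry arbitrarily large total $q^v$-mass (that is precisely the regime in which $\Bv_0\neq\emptyset$), so the portion of the optimal reward of $G_3^{(j-1)}$ attributable to accepting $v_j$ from $\Xs^*_{v_j}\cap\mathbf{B}_0^{v_j}$ can be a constant fraction of $\opt$; the skipping argument only bounds the step-$j$ loss by $\eps$ times that portion, i.e., a per-step factor $1-O(\eps)$. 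Compounding $1-O(\eps)$ over $\tilde O(1/\eps)$ support values gives only a constant-factor guarantee, not $1-O(\eps)$. The fix is to drop the sequential decomposition and perform all deviations simultaneously against a single optimal (threshold) strategy of $G_2$: since that strategy accepts at most once, the events ``it accepts a $v$-realization from $\Xssv\cap\Bv_0$'' are disjoint across the different support values $v$, so the individual losses, each at most $\eps$ times the reward earned on the corresponding event, add up to at most $\eps\cdot\opt(G_2)$ in total. With that change the argument matches the paper's and yields $(1-O(\eps))\opt(G_2)\leq\opt(G_3)$.
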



Now we introduce a slightly different version of Game $G_3$ to reduce the correlation between the realization of each possible support value. In particular, for each $X_i$ with more than 1 positive support values, if support values $v_1,v_2,\ldots,v_k$ are not in the ``special truncation'' $\Xl_i$ of $X_i$, then $X_i$ is replaced with a sequence of random variables $\Xs_i=(X_{i,v_1}, X_{i,v_2}, \ldots, X_{i,v_k}, \Xl_i)$ that always come together one by one, where $X_{i,v_j}$ is a variable drawn from binary distribution: 
\[
X_{i,v_j} = \begin{cases}
    v_j, & \text{w.p. } p_i^{v_j}\\
    0,& \text{w.p. } 1-p_i^{v_j}
\end{cases}
\]
A nice property of the new game is that for each variable, the small-probability parts of the variable is separated from the large-probability part of the variable. The intuition behind this separation is that in later versions of the game, we can group all variables with binary distribution the same as in the previous section, and the remaining special variables can be treated separately via dynamic programs.

\bigskip
\noindent
\textbf{Perturbed prophet secretary game $G_3^*$ with lazy decision.} The gambler plays prophet secretary game $G_3$. However, when each variable $X_i$ arrives, it is replaced by a bundle of variables $\Xs_i$ that arrives consecutively.

\begin{lemma}\label{thm:equal-G3-G3s-general}
$(1-\eps)\opt(G_3)\leq \opt(G_3^*)\leq\opt(G_3)$.
\end{lemma}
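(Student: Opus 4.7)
The plan is to prove the two inequalities via explicit strategy reductions between the two games, mirroring the structure of the proof of Lemma~\ref{thm:equal-G3-G4}. The upper direction $\opt(G_3^*)\leq\opt(G_3)$ is the easier one: in $G_3$ the gambler sees $X_i$ as a single draw that carries the mutual-exclusion constraint that at most one support value is realized, whereas in $G_3^*$ the bundle $\mathbf{X}_i$ reveals independent components $X_{i,v_1},\dots,X_{i,v_k},\Xl_i$ one by one, losing that joint information and forcing sequential commitment. To formalize this, I would take any algorithm $\mathcal{A}^*$ for $G_3^*$ and build an algorithm $\mathcal{A}$ for $G_3$ that, upon seeing $X_i$, privately simulates the bundle $\mathbf{X}_i$ under a coupling with $X_i$ whose bundle-marginal matches the $G_3^*$-distribution, feeds the simulated components sequentially to $\mathcal{A}^*$, and mirrors its acceptance decision on $X_i$. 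Any reward realized by $\mathcal{A}^*$ corresponds via this coupling to a reward achievable in $G_3$, giving the upper inequality without loss.

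For the harder direction $(1-\eps)\opt(G_3)\leq\opt(G_3^*)$, I would construct $\mathcal{A}^*$ for $G_3^*$ from the optimal algorithm $\mathcal{A}$ for $G_3$. The idea is to have $\mathcal{A}^*$ aggregate each bundle $\mathbf{X}_i=(X_{i,v_1},\ldots,X_{i,v_k},\Xl_i)$ into a virtual variable $X_i^{\mathrm{virt}}$ and follow the decision $\mathcal{A}$ would make on $X_i^{\mathrm{virt}}$ at the step corresponding to $X_i$. Because the gambler in $G_3^*$ must commit upon observing each bundle component, I would adopt a case analysis analogous to the one in Lemma~\ref{thm:equal-G3-G4}, based on which small-probability values $\mathcal{A}$ would accept given its threshold on the remaining instance: $\mathcal{A}^*$ accepts the first $X_{i,v_j}=v_j$ whose value exceeds that threshold, and otherwise passes on all small-probability components and defers to $\mathcal{A}$'s decision on $\Xl_i$. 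The distribution of $X_i^{\mathrm{virt}}$ differs from that of $X_i$ only on the ``bad'' event that the bundle produces multiple simultaneous non-zero components; per bundle, this probability is at most $O\lp(\lp(\sum_j p_i^{v_j}\rp)^2+\Pr{}{\Xl_i\neq 0}\cdot\sum_j p_i^{v_j}\rp)$ where $j$ ranges over small-probability support values, and this is tiny since each $q_i^{v_j}<\eps^{20}$ and there are only $\tilde{O}(1/\eps)$ distinct small-probability support values per variable.

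The main obstacle I anticipate is bounding the aggregate reward loss from bad events across all $n$ variables and all support values, particularly since $v_{\max}$ can be large. I would handle this by exploiting the normalization $\opt\leq 1$, which caps $v_{\max}\cdot\Qmax$ and hence the reward contribution of any high-value bad event. Combined with a union bound over the $\tilde{O}(1/\eps)$ distinct support values and careful accounting using $q_i^{v_j}<\eps^{20}$, the total reward loss should come out to $O(\eps)\opt(G_3)$, yielding the claimed $(1-\eps)$ factor and completing the proof.
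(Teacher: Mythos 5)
Your strategy constructions are essentially the paper's (run the optimal $G_3$ thresholds on each bundle for the hard direction, a coupling/domination argument for the easy one), but the error analysis you propose for the key direction $(1-\eps)\opt(G_3)\le\opt(G_3^*)$ has a genuine gap. You bound, per bundle, the probability of the ``bad'' event that several components of $\Xs_i$ are simultaneously nonzero, and then plan to charge these events additively, with a union bound over variables and support values plus the normalization $\opt\le 1$. The per-bundle bound (roughly $\eps^{19}$) is correct, but the sum of these bad-event probabilities over all $n$ bundles grows linearly in $n$ and is not controlled by $\opt\le 1$ or by $v_{\max}\cdot\Qmax=O(1)$: for instance, with $n\gg \eps^{-40}$ variables each carrying two support values of probability $\eps^{20}$, the aggregate bad-event mass exceeds any constant, so ``bad probability times value at stake'' summed over variables does not come out to $O(\eps)\opt(G_3)$. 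What makes the argument work is a conditional, multiplicative accounting rather than a global union bound: a discrepancy only matters at a bundle the gambler actually reaches and at a value $v$ above the current threshold, and conditioned on reaching that node, the probability that the acceptance of $v$ is disturbed by some other small component of the \emph{same} bundle is at most $1-e^{-\sum_j q_i^{v_j}}\le 1-e^{-\eps^{18}}<\eps$. This is exactly how the paper argues: using the same thresholds in $G_3^*$, at every decision node the acceptance probability of each value $v$ above the threshold shrinks by at most a factor $(1-\eps)$ while reach probabilities only increase, giving a clean multiplicative $(1-\eps)$ loss with no dependence on $n$ and no need to track value magnitudes. You correctly flag the aggregation issue as the main obstacle, but the fix you sketch does not resolve it.

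A smaller imprecision concerns the easy direction $\opt(G_3^*)\le\opt(G_3)$: an arbitrary coupling ``whose bundle-marginal matches'' does not suffice, because the bundle can realize several nonzero values while $X_i$ realizes only one, so mirroring $\mathcal{A}^*$'s acceptance can leave the $G_3$ gambler holding a strictly smaller value (possibly $0$). You need the monotone coupling that exists because $\max\{x\in\Xs_i\}$ is stochastically dominated by $X_i$ (equivalently, argue as the paper does: even a relaxed gambler allowed to take the maximum of the whole bundle is playing an instance dominated coordinatewise by $G_3$, and the optimal online reward is monotone under stochastic dominance). With that dominance made explicit, your simulation argument for the upper bound goes through.
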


\begin{proof}[Proof of Lemma~\ref{thm:equal-G3-G3s-general}]
Notice that $\max\{x\in \Xs_i\}$ is stochastically dominated by $X_i$: for any support value $v>0$, $\Pr{}{\max\{X\in \Xs_i\}=v}\leq \Pr{}{\exists X\in \Xs_i\textrm{ s.t. }X=v}=\Pr{}{X_i=v}$. This means that after replacing each $X_i$ with a bundle of variables $\Xs_i$, even if the gambler is allowed to take the maximum of all variables in $\Xs_i$ when $\Xs_i$ arrives, the optimal reward of the new game is upper bounded by the original game, thus $\opt(G_3^*)\leq\opt(G_3)$.

Now consider any optimal online algorithm for $G_3$. We assume that the algorithm is threshold-based, where the threshold at any step is determined by the optimal reward of the remaining game. At any node of the decision tree for $G_3$, there is a corresponding node in the decision tree for $G_3^*$ with the same history (i.e. whenever a variable $X_j$ arrives in $G_3$ at the $k$th step, a bundle of variables $\Xs_j$ arrives in $G_3^*$ at the $k$th step). Suppose that at a node of the decision tree for $G_3$, $X_i$ arrives, while the optimal algorithm sets the threshold to be $t$. If in the corresponding node in $G^*_3$ the gambler also sets the threshold to be $t$ at the corresponding node when $\Xs_i=(X_{i,v_1}, X_{i,v_2}, \ldots, X_{i,v_k}, \Xl_i)$ arrives, we show that such an algorithm in $G^{*}_3$ obtains at least $(1-\eps)$ fraction of the reward obtained in $\opt(G_3)$.

Firstly, the probability that the decision tree of $G^*_3$ reaches this node is larger than the probability that the decision tree of $G_3$ reaches the corresponding node, as in each of the previous steps in the decision trees, the probability of stopping in $G^*_3$ is always upper bounded by the probability of stopping in $G_3$, since both algorithms use the same threshold at each step, and $\max\{x\in \Xs_j\}$ is stochastically dominated by $X_j$ in each node of the tree. 

Secondly, conditioned on both decision trees reach the corresponding nodes, for every possible threshold $t$ and support value $v\geq t$, the probability that $G_3$ accepts $X_i$ with $X_i=v$ is $p_i^{v}$. We now analyze what's the probability that $G^*_3$ accepts $\Xs_i$ with value $v$. If $v$ is a support value of $\Xl_i$, then $v$ is accepted with probability $p_i^v$ at $\Xl_i$ if non of $X_{i,v_j}$ arrives, which happens with probability $1-\prod_j{(1-p_i^{v_j})}=1-e^{\sum_{i_j}q_i^{v_j}}>1-\eps$ when $\sum_{i_j}q_i^{v_j}<\tilde{O}(1/\eps)\eps^{20}<\eps^{18}.$ If $v=v_{j}$ is a support value with small probability in $X_i$, then $v$ is accepted with probability $p_i^v$ at $\Xl_i$ if non of $X_{i,v_\ell}$ (with $\ell\neq j$) arrives, which happens with probability at least $1-\prod_j{(1-p_i^{v_j})}>1-\eps$. Therefore, for every support value $v$, the probability that at a node of the decision tree of $G^*_3$ accepts a variable with value $v$, is at least $(1-\eps)$ times the probability that in the corresponding node of the decision tree of $G_3$ of the optimal algorithm accepts a variable with value $v$. Thus by coupling the randomness of the two decision trees, we have an algorithm with a reward at least $(1-\eps)\opt(G_3)$ in Game $G^*_3$, which means that $\opt(G_3^*)\geq(1-\eps)\opt(G_3)$. This concludes the proof of the lemma.
\end{proof}

\bigskip
\noindent
\textbf{Prophet secretary game $G_4$ with lazy decision and outside option.} The gambler plays prophet secretary game $G^*_3$. However, in each block $\Bv_k$ with $k\geq 1$, the realization of $v$ for variables with small probability in $\Xssv\cap \Bv_k$ is different. In particular, for all variables in $\Xssv\cap \Bv_k$, the realization of $v$ is put together to the beginning of $\Bv_k$ (and not later), and kept as an outside option $Z_{v}$. Whenever the gambler decides to accept a variable $X_i^+$ in $\Bv_k$, the gambler can choose to replace $X_i^+$ by $Z_{v}$. At the end of $\Bv_k$, the gambler can take $Z_v$ and stop. If the gambler decides to take nothing and continue the game, the outside option $Z_v$ is cleared.

\begin{restatable}{lemma}{ptasGeneralGThreestarGFour}\label{thm:equal-G3-G4-general}
$\opt(G^*_3)\leq \opt(G_4)\leq(1+\eps)\opt(G^*_3)$.
\end{restatable}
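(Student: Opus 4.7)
The plan is to mirror the proof of Lemma~\ref{thm:equal-G3-G4} essentially verbatim, leveraging the key observation that the reduction from $G_3$ to $G_3^*$ has already rewritten each multi-support variable $X_i$ as a bundle of binary variables $X_{i,v_j}$ (plus the special truncation $\Xl_i$). Consequently, for each support value $v$, the set $\Xssv$ now consists of binary subvariables $X_{i,v}$ with small probability, and frontloading these within a block is structurally identical to the binary case.

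First I would introduce an intermediate family of games $G_3^*(\Vs)$ parameterized by subsets $\Vs$ of the $\tilde{O}(1/\eps)$ distinct discretized support values, where frontloading of the $\Xssv$-realizations is performed only for $v \in \Vs$. In this notation, $G_3^*(\emptyset) = G_3^*$ and $G_3^*(\{\text{all } v\}) = G_4$. The proof proceeds by induction on $|\Vs|$, where the inductive step asserts
\[
\opt(G_3^*(\Vs)) \;\leq\; \opt(G_3^*(\Vs \cup \{v'\})) \;\leq\; (1 + O(\eps^3))\opt(G_3^*(\Vs)).
\]
Since there are only $\tilde{O}(1/\eps) < \eps^{-2}$ discretized support values, iterating this bound gives a total multiplicative loss of $(1 + O(\eps^3))^{\tilde{O}(1/\eps)} \leq 1 + \eps$.

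For the lower direction $\opt(G_3^*(\Vs)) \leq \opt(G_3^*(\Vs \cup \{v'\}))$, I would observe that frontloading gives the gambler strictly more information: in each block $\Bv^k$ for $k\geq 1$, the value of $Z_{v'}$ at the start of the block reveals whether any $X_{i,v'} \in \Xs^{*}_{v'}$ inside the block will realize $v'$, which is at least as informative as the original sequential arrivals, so the gambler can always simulate $\opt(G_3^*(\Vs))$. For the upper direction, I would construct from an optimal threshold-based algorithm $\mathcal{A}_{v'}$ for $G_3^*(\Vs \cup \{v'\})$ a coupled algorithm $\mathcal{A}$ for $G_3^*(\Vs)$. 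Fix a block $\Bv^k$ and the decision node $N_{v'}$ at its start. Exactly as in the binary case, the analysis splits into two cases according to whether $\mathcal{A}_{v'}$ would always accept $v'$ as an outside option within the block (Case 1) or not (Case 2). In Case 1, $\mathcal{A}$ accepts the first realized $v'$ from a variable in $\Xs^{*}_{v'}$; in Case 2, $\mathcal{A}$ rejects every $v'$-realization from $\Xs^{*}_{v'}$ inside the block. For non-$\Xs^{*}_{v'}$ realizations (including all $\Xl_i$ arrivals and all binary subvariables $X_{i,v}$ with $v \neq v'$), $\mathcal{A}$ mimics $\mathcal{A}_{v'}$ exactly.

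The main obstacle, and the only place where the general case departs from the binary argument, is verifying that the per-block loss bound remains $O(\eps^3)$ despite the bundled structure of the $\Xs_i$'s and the presence of other grouped support values. I would address this by noting that the bound relies only on (i) the total $q^{v'}_i$ in each block of $\Xs^{*}_{v'}$ being at most $(1+\eps)Q^{o}_{v'}$, which is guaranteed by $G_2$ (preserved through $G_3$ and $G_3^*$), and (ii) the inequality $Q^{o}_{v'} \cdot v' = O(\eps^3)\opt$, which follows from $(1 - e^{-Q^{*}_{v'}})v' \leq \opt$ exactly as in the binary case since there is always an algorithm that accepts only realized $v'$ values from $\Xs^{*}_{v'}$. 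Crucially, because $\mathcal{A}$ mirrors $\mathcal{A}_{v'}$ on every non-$\Xs^{*}_{v'}$ arrival, the other grouped support values $v \in \Vs$ and the truncated parts $\Xl_i$ enter the analysis only through thresholds that the two algorithms share, so they do not affect the Case 1 / Case 2 reward accounting. Summing over all blocks and applying the same ``no ancestor'' observation from the binary proof (decision nodes reached in Case 2(b) have disjoint sub-trees due to non-increasing optimal thresholds) yields the claimed $(1+O(\eps^3))$ bound for the inductive step, completing the proof.
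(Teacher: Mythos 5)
Your proposal is correct and follows essentially the same route as the paper: the paper's own proof simply observes that after the $G_3\to G_3^*$ bundling extracts all small-probability support values as binary subvariables, the binary-case argument of Lemma~\ref{thm:equal-G3-G4} (induction over support values, coupling, the two cases, and the $\Qov v = O(\eps^3)\opt$ bound) carries over verbatim, since that argument never used that the special variables are binary or that variables within a block arrive in random order. Your write-up just makes explicit the details the paper leaves implicit.
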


\begin{proof}[Proof of Lemma~\ref{thm:equal-G3-G4-general}]
After we have replaced each $X_i$ with a sequence of variables $\Xs_i$ where all support values with small probabilities are extracted as binary variables, the proof of Lemma~\ref{thm:equal-G3-G4} completely goes through in this case, as that proof does not require the special variables with large support probabilities to be binary, and also does not require the variables in each block $\Bv_k$ to arrive in random order. 
\end{proof}

\bigskip
\noindent
\textbf{Prophet secretary game $G_5$ with perturbed outside option.} The gambler plays prophet secretary game $G_4$. However, in each block $\Bv_k$ with $k\geq 1$, the outside option $Z_v$ is set to $v$ with a fixed probability $1-e^{(1-\eps)\Qov}$. The same as in the binary distribution setting, we have the following lemma.

\begin{restatable}{lemma}{GFourToGFive}\label{thm:equal-G4-G5-general}
$(1-\eps)\opt(G_4)\leq \opt(G_5)\leq\opt(G_4)$.
\end{restatable}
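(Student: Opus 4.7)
The plan is to reduce to analyzing the per-block mismatch between the outside option distributions in $G_4$ and $G_5$. Conditioned on the no-failure event inherited from $G_2$, in $G_4$ the block $\Bv_k$ outside option $Z_v$ satisfies $\Pr{}{Z_v = v} = 1 - \exp(-S_v^{(k)})$ with $S_v^{(k)} := \sum_{X_i \in \Xssv \cap \Bv_k} \qv_i \in [(1-\eps)\Qov, (1+\eps)\Qov]$, whereas in $G_5$ this probability is the fixed value $p^{G_5}_v := 1 - \exp(-(1-\eps)\Qov)$. Under the $G_2$ failure event both games return reward $0$ and agree trivially, so I may restrict attention to the no-failure case throughout.

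For the upper bound $\opt(G_5) \leq \opt(G_4)$, I would couple $Z_v^{G_5}$ and $Z_v^{G_4}$ independently across blocks so that $\{Z_v^{G_5} = v\} \subseteq \{Z_v^{G_4} = v\}$; this is possible since $p^{G_5}_v \leq \Pr{}{Z_v^{G_4} = v}$. Any strategy for $G_5$ can then be executed verbatim in $G_4$ on the coupled probability space by ignoring the additional realizations where $Z_v^{G_4} = v$ but $Z_v^{G_5} \neq v$, yielding identical reward.

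For the lower bound $\opt(G_5) \geq (1-\eps)\opt(G_4)$, the key quantitative observation is that $\Qov = \eps^4 \Qsv \leq \eps^3$, so a Taylor expansion of $1 - e^{-x}$ gives
\[
\frac{\Pr{}{Z_v^{G_4} = v}}{p^{G_5}_v} \leq \frac{1 - e^{-(1+\eps)\Qov}}{1 - e^{-(1-\eps)\Qov}} \leq \frac{1+\eps}{1-\eps}\cdot\bigl(1 + O(\eps^3)\bigr) \leq 1 + O(\eps).
\]
Using the same coupling, I would transform the optimal $G_4$ strategy $\mathcal{A}_4$ (which, being computable by a DP, may be assumed threshold-based) into a $G_5$ strategy $\mathcal{A}_5$ that mirrors every decision of $\mathcal{A}_4$ on non-outside-option events and attempts to take $Z_v^{G_5}$ exactly when $\mathcal{A}_4$ would take $Z_v^{G_4}$. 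Whenever the two disagree, i.e. $Z_v^{G_4} = v$ but $Z_v^{G_5} \neq v$, $\mathcal{A}_5$ pessimistically collects no further reward. Conditioned on $\mathcal{A}_4$ accepting an outside option of value $v$, this disagreement has conditional probability at most $1 - p^{G_5}_v/\Pr{}{Z_v^{G_4}=v} \leq O(\eps)$, so the expected forfeit is at most $O(\eps)$ times the expected contribution of outside options to $\opt(G_4)$.

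The main obstacle is aggregating across the $\poly(1/\eps)$ distinct (block, support value) pairs without having the multiplicative losses compound. This works because $\mathcal{A}_4$ accepts at most one value in total: the expected reward it extracts from all outside options over the entire game is at most $\opt(G_4)$, so the $O(\eps)$ mismatch factor multiplies against a single global budget rather than being paid per block. A straightforward rescaling of $\eps$ by a constant factor then delivers the claimed $(1-\eps)$ bound, completing the argument.
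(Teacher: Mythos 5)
Your proof is correct, and it reaches the lower bound by a somewhat different accounting than the paper's. The paper argues additively: for each support value it notes that replacing the block-dependent availability probability $1-e^{-S}$ (with $S\in[(1-\eps)\Qov,(1+\eps)\Qov]$ under the no-failure event inherited from $G_2$) by the fixed $1-e^{-(1-\eps)\Qov}$ changes the availability probability by at most $e^{-(1-\eps)\Qov}-e^{-(1+\eps)\Qov}=O(\eps\Qov)$, and then sums this over the $\tilde{O}(1/\eps)$ support values; converting that probability gap into a reward bound implicitly leans on the estimate $\Qov v=O(\eps^3)\opt(G_3)$ from the proof of Lemma~\ref{thm:equal-G3-G4}, since $v$ itself can be large. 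You instead bound the \emph{relative} discrepancy $\Pr{}{Z_v^{G_4}=v}/\Pr{}{Z_v^{G_5}=v}\leq 1+O(\eps)$ and charge the $O(\eps)$ mismatch against the total outside-option reward collected by the optimal $G_4$ strategy, which is at most $\opt(G_4)$ because the gambler stops only once. This sidesteps the $\Qov v$ estimate entirely and makes the aggregation explicit, whereas a naive union bound over the $\eps^{-4}\cdot\tilde{O}(1/\eps)$ blocks would give only an $O(1)$ failure probability and fail. One detail worth spelling out in your coupling: the $G_4$ gambler observes $Z_v^{G_4}$ at the start of each block, so its decisions on ordinary variables within the block depend on that realization; for $\mathcal{A}_5$ to mirror them you should realize $Z_v^{G_4}=\max(Z_v^{G_5},W_v)$ with $W_v$ independent auxiliary randomness that $\mathcal{A}_5$ samples itself, so that the mirrored strategy is measurable with respect to $G_5$'s information, and the only reward it forfeits is on the (conditionally $O(\eps)$-probable) event that $\mathcal{A}_4$ actually collects the outside option while $Z_v^{G_5}=0$. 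With that made explicit, your argument is complete, and if anything more self-contained than the paper's.
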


\begin{proof}
   Fix some value $v$. Note that the effect of the outside option on the reward is that we consider the maximum between that value and the value we get from the current variable. Even if in this maximum $v$ always wins, by making this change we have reduced this probability by at most $e^{-(1-\eps)\Qov}-e^{-(1+\eps)\Qov} = O(\eps^3)$. Moreover, there are at most $\tilde{O}(1/\eps)$ many different values. This means that the contribution of this change to $\opt(G_5)$ is at most $\eps\opt(G_4)$.

   Finally, comparing $G_3$ and $G_4$, we are reducing probabilities in the latter. So its optimal is always at most that of $G_3$. This concludes the proof. 

\end{proof}

Combining all the lemmas above, we get the following corollary, showing that Game $G_5$ obtains almost the same optimal reward as the original prophet secretary problem.

\begin{corollary}\label{thm:equal-G1-G5-gen}
$(1-O(\eps))\opt(G_1)\leq \opt(G_5)\leq(1+O(\eps))\opt(G_1)$.
\end{corollary}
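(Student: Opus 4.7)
The plan is to simply chain together the five pairwise comparison lemmas that the excerpt has already established between consecutive games in the reduction sequence $G_1 \to G_2 \to G_3 \to G_3^* \to G_4 \to G_5$. Since each step loses at most a multiplicative $(1\pm \eps)$ factor, composing a constant number of such factors yields a $(1 \pm O(\eps))$ bound on the ratio $\opt(G_5)/\opt(G_1)$.

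Concretely, for the lower bound on $\opt(G_5)$, I would apply in order Lemma~\ref{thm:equal-G4-G5-general}, Lemma~\ref{thm:equal-G3-G4-general}, Lemma~\ref{thm:equal-G3-G3s-general}, Lemma~\ref{thm:equal-G2-G3-general}, and Lemma~\ref{thm:equal-G1-G2-general} to obtain
\[
\opt(G_5) \;\geq\; (1-\eps)\opt(G_4) \;\geq\; (1-\eps)\opt(G^*_3) \;\geq\; (1-\eps)^2\opt(G_3) \;\geq\; (1-\eps)^3\opt(G_2) \;\geq\; (1-\eps)^4\opt(G_1),
\]
and since $(1-\eps)^4 \geq 1 - 4\eps$, we have $\opt(G_5) \geq (1-O(\eps))\opt(G_1)$. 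For the upper bound, the same lemmas give
\[
\opt(G_5) \;\leq\; \opt(G_4) \;\leq\; (1+\eps)\opt(G^*_3) \;\leq\; (1+\eps)\opt(G_3) \;\leq\; (1+\eps)\opt(G_2) \;\leq\; (1+\eps)\opt(G_1),
\]
so $\opt(G_5) \leq (1+O(\eps))\opt(G_1)$, which completes the proof.

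There is essentially no obstacle here, as all of the technical work has been front-loaded into the individual lemmas; only the composition requires care. The only mild point to double-check is that the constant hidden in the $O(\eps)$ is independent of $n$ and of the support structure, which is immediate because each lemma contributes a bound with an absolute constant and the total number of lemmas chained is a fixed constant (five).
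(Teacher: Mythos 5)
Your proposal is correct and matches the paper's proof, which is exactly this chaining of the five lemmas along the reduction $G_1 \to G_2 \to G_3 \to G_3^* \to G_4 \to G_5$ (the paper simply states "combining all the lemmas above"). The directions of the inequalities at each step are applied correctly, and the observation that a constant number of $(1\pm\eps)$ factors compose to $(1\pm O(\eps))$ is all that is needed.
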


To approximately solve $\opt(G_1)$, it suffices to solve $\opt(G_5)$. In the following theorem, we show that $\opt(G_5)$ can be efficiently computed.

\begin{theorem}\label{DP-gen}
    Prophet secretary game $G_5$ can be optimally solved in time $n^{(1/\eps)^{\poly(1/\eps)}}$.
\end{theorem}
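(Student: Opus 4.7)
The plan is to generalize the dynamic program from Theorem~\ref{DP} to accommodate the bundles $\Xs_i$ arising from the transformation in $G_3^*$. In game $G_5$ the only ``arrivals'' we need to actively reason about are the special truncations $\Xl_i$, since the small-probability parts for each support value $v$ have already been front-loaded into the block-level outside options $\{Z_v\}_v$, whose activations are independent coin flips with the prescribed probability $1 - e^{-(1-\eps)\Qov}$ at each block boundary. Following the structure of Theorem~\ref{DP}, the DP state will track two pieces of information: the vector $\mathcal{N}=(n_t)_t$ of remaining counts of each truncation \emph{type}, together with the vector $\mathcal{Z}=(Z_v)_v$ of currently active outside options. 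From $\mathcal{N}$ alone we recover the arrival index $i = n - \sum_t n_t$, which determines the block index for every $v$, so no additional state is needed.

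The first key step is to bound the number of distinct types of $\Xl_i$. Each $\Xl_i$ is a distribution supported on a subset of the $\tilde{O}(1/\eps)$ discretized support values (powers of $(1+\eps)$ in $[\eps,1/\eps^2]$, together with $v_{\max}$), and for each present support value the probability $p^v_i$ is one of $\tilde{O}(1/\eps)$ discretized values (powers of $(1+\eps)$ with $q^v_i\geq \eps^{20}$, respectively $q^v_i\geq \eps^{20}\Qmax$ when $v=v_{\max}$). Combining the independent choices across support values yields at most $T = (1/\eps)^{\poly(1/\eps)}$ distinct truncation types; this is the source of the iterated dependence on $1/\eps$ in the exponent of the running time, in contrast to the merely $\tilde{O}(1/\eps)$ types available in the binary case of Theorem~\ref{DP}.

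The Bellman equation is then the natural generalization of the binary-case recursion: writing $T_{\mathcal{N}}=\sum_{t'} n_{t'}$,
\[
\opt(\mathcal{N}, \mathcal{Z}) = \sum_t \frac{n_t}{T_{\mathcal{N}}}\, \mathbb{E}_{X \sim \Xl_t}\Bigl[\max\bigl(X,\, \max_v Z_v,\, \mathbb{E}_{\mathcal{Z}''}[\opt(\mathcal{N}_{-t}, \mathcal{Z}'')]\bigr)\Bigr],
\]
where $\mathcal{N}_{-t}$ decrements the count of type $t$ and $\mathcal{Z}''$ is the outside-option vector after any block-boundary refreshes triggered at the next arrival index (for each $v$ whose new block $\Bv_k$, $k\geq 1$, starts there, $Z_v$ is independently resampled to $v$ with probability $1 - e^{-(1-\eps)\Qov}$ and to $0$ otherwise). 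The end-of-block option of taking $Z_v$ before it is cleared is automatically encoded in the ``accept $\max_v Z_v$'' branch, which is available at the current step right before the boundary is crossed in the recursive call.

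Counting states, there are at most $(n+1)^T = n^{(1/\eps)^{\poly(1/\eps)}}$ possible $\mathcal{N}$ and $2^{\tilde{O}(1/\eps)}$ possible $\mathcal{Z}$, so the total DP table has size $n^{(1/\eps)^{\poly(1/\eps)}}$. Each Bellman transition is computable in $\poly(1/\eps)$ time given smaller subproblems, so backward induction fills the table within the claimed bound. I expect the main obstacle to be the combinatorial type-counting for $\Xl_i$ together with the block-boundary bookkeeping: several support values may simultaneously cross into new blocks at the same arrival index, each requiring an independent refresh coin flip, and these must be aggregated into $\mathcal{Z}''$ without blowing up the per-transition cost or the state space. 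Combined with Corollary~\ref{thm:equal-G1-G5-gen}, this DP yields a $(1-O(\eps))$-approximation to $\opt(G_1)$ in time $n^{(1/\eps)^{\poly(1/\eps)}}$, completing the PTAS for general distributions.
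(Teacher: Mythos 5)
Your proposal is correct and follows essentially the same route as the paper: the paper likewise reuses the binary-case dynamic program of Theorem~\ref{DP}, tracking counts of the remaining special truncations together with the outside options, and observes that the number of distinct truncation types grows to $\tilde{O}(1/\eps)^{\tilde{O}(1/\eps)}$ because each variable now has $\tilde{O}(1/\eps)$ support values with $\tilde{O}(1/\eps)$ discretized probabilities each, giving the stated $n^{(1/\eps)^{\poly(1/\eps)}}$ bound. Your explicit Bellman recursion and block-boundary bookkeeping are just a more detailed write-up of the same argument.
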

\begin{proof}
 In the binary distribution case, we proposed a dynamic program that runs in $n^{\poly(1/\eps)}$ time. The same dynamic program still works for $G_5$ for general distribution, and the only difference is the size of the DP. Notice that previously in the binary case there are at most 2 support values for each variable, and now in the general case each variable has support size $\tilde{O}(1/\eps)$. This means that the number of distinct ``special variables'' with large probabilities for each support value is $\tilde{O}{(1/\eps)}^{\tilde{O}(1/\eps)}$, since each support value has $\tilde{O}(1/\eps)$ different discretized probabilities. The running time of the dynamic program is now $n^{(1/\eps)^{\poly(1/\eps)}}$, which means that the dynamic program is still a PTAS.
\end{proof}

\begin{corollary}
There exists a PTAS algorithm for the prophet secretary game $G_1$. This is a direct result of  \Cref{thm:equal-G1-G5-gen} and \Cref{DP-gen}.
\end{corollary}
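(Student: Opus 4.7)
The plan is to combine the three ingredients already established: (i) Theorem~\ref{DP-gen} gives a procedure that computes $\opt(G_5)$ exactly in time $n^{(1/\eps)^{\poly(1/\eps)}}$; (ii) Corollary~\ref{thm:equal-G1-G5-gen} shows that $\opt(G_5)$ approximates $\opt(G_1)$ up to a multiplicative $(1 \pm O(\eps))$ factor; and (iii) Lemma~\ref{lem:error_propagation} converts any reward-approximation oracle into an actual online strategy with approximately optimal reward.

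First I would construct a reward-approximation oracle $\opt'$ for the prophet secretary problem as follows: on any input instance, $\opt'$ performs the preprocessing, discretization, truncation and bundling steps of Section~\ref{subsec:ptas_general} to convert the instance into the corresponding game $G_5$, and then invokes the dynamic program of Theorem~\ref{DP-gen} to compute $\opt(G_5)$. To match the precondition of Lemma~\ref{lem:error_propagation}, which demands an output in $[(1-\eps)\opt, \opt]$, I would rescale the returned value by $1/(1+c\eps)$ for the constant $c$ hidden in Corollary~\ref{thm:equal-G1-G5-gen}; this guarantees the output lies in $[(1-O(\eps))\opt(G_1), \opt(G_1)]$.

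Then I would plug $\opt'$ into Lemma~\ref{lem:error_propagation}: the resulting algorithm processes variables in arrival order and, at each step, invokes $\opt'$ on the remaining instance to obtain the stopping threshold for the current variable. By that lemma, the strategy achieves expected reward at least $(1-O(\eps))\opt(G_1)$. Substituting $\eps \leftarrow \eps/c$ and observing that the oracle is called at most $n$ times yields a $(1-\eps)$-approximation strategy with total running time $n \cdot n^{(1/\eps)^{\poly(1/\eps)}} = n^{(1/\eps)^{\poly(1/\eps)}}$, which is polynomial in $n$ for every fixed $\eps$ and hence a PTAS as per the definition in Section~\ref{sec:prelims}.

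Since this corollary is essentially a packaging step, I do not expect any real technical obstacle. The only care needed is to ensure that the oracle returns an under-estimate of $\opt$ (so that Lemma~\ref{lem:error_propagation} applies verbatim, rather than a two-sided variant), and that the preprocessing, which depends on the set of variables still remaining, can be recomputed on each subinstance without inflating the running time by more than a polynomial factor; both are immediate from the explicit descriptions of the transformations $G_1 \to G_5$ and of the dynamic program.
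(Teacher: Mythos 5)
Your proposal is correct and follows exactly the route the paper intends: the paper leaves this corollary as a "direct result" of Corollary~\ref{thm:equal-G1-G5-gen} and Theorem~\ref{DP-gen}, with the oracle-to-strategy conversion via Lemma~\ref{lem:error_propagation} already flagged in the introduction and preliminaries, and you assemble precisely these pieces. Your additional observations --- rescaling the two-sided $(1\pm O(\eps))$ estimate so the oracle is a one-sided underestimate as Lemma~\ref{lem:error_propagation} requires, and noting that the $n$ oracle calls only multiply the running time by $n$ --- are sensible details that the paper elides but that do not change the argument.
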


\begin{table}[H]
\begin{tabular}{c|c|m{5.5cm}}
\textbf{Symbol} & \textbf{Definition} & \textbf{Explanation} \\ \hline
$\Xs_v$    &   $\{X_i: supp(X_i)\in \{0,v\}\}$    &  Variables with support $0$ and $v$.\\\hline
$\Xs^*_v$  & $\{X_i|X_i\in \Xs_v,\ s.t.\ q_i<\eps^{20}\}$ for $v\leq 1/\eps^2$    &   Variables with support $0$ and $v$ and $q_i<\eps^{20}$       \\\hline
$\Xs^*_{v_{\max}}$  & $\{X_i|X_i\in \Xs_{v_{\max}},\ s.t.\ q_i<\eps^{20}\sum_{X_j\in \Xs_{v_{\max}}}q_j\}$    &   Variables with support $0$ and $v_{\max}$ and $q_i<\eps^{20}\sum_{X_j\in \Xs_{v_{\max}}}q_j$       \\\hline
$\Xslv$  &    $\Xslv=\Xsv\setminus\Xssv$    &    Special variables with support $v$ \\\hline
$\Xsl$   &    $\Xsl=\bigcup_{v}\Xslv$       &    All special variables                  \\\hline
$\Qsvo$    &   $\Qsvo=\sum_{X_{i}\in \Xssv}q_i$    &   Total mass of small-probability variables with support $0$ and $v$ \\\hline
 $\Qsv$               &   $\Qsv=\min(\Qsvo,1/\eps)$                &     Total mass of small-probability variables with support $0$ and $v$, capped at $1/\eps$.   \\ \hline
  $\Qov$            &  $\eps^4\Qsv$                 &   The average mass of small probability variables in each block      \\ \hline
  $\deltav$         &   $\Qov/ \Qsvo$                &  The number of variables in each block $\Bv_k$ is $\delta_v$ fraction of total variables   \\ \hline    
 $\Bv_i$ &   & $i$'th block of the partition of all variables defined by support value $v$. Different support values create different partitions.\\ \hline
$Z_v$ & & Outside option for a block with support $v$\\ \hline 
\end{tabular}
\caption{Notation cheatsheet for binary distribution case.}
\label{table:cheatsheet}
\end{table}

\end{document}